\definecolor{mygreen}{rgb}{0,0.4,0}
\definecolor{myblue}{rgb}{0,0,0.7}
\definecolor{myred}{rgb}{0.7,0,0}
\definecolor{mygray}{rgb}{0.5,0.5,0.5}
\definecolor{mymauve}{rgb}{0.58,0,0.82}
\newcommand*{\mlstinlin}[1]{\mbox{\lstinline|#1|}}
\DeclareMathOperator*{\colim}{colim}
\newtheorem{theorem}{Theorem}
\newtheorem{proposition}[theorem]{Proposition}
\newtheorem{definition}[theorem]{Definition}
\newtheorem{example}[theorem]{Example}
\newtheorem{corollary}[theorem]{Corollary}
\DeclareMathOperator{\Aut}{Aut}
\newcommand{\hide}[1]{}
\newcommand{\tunit}{\mathsf{unit}}
\newcommand{\treal}{\mathsf{real}}
\newcommand{\tbool}{\mathsf{bool}}
\newcommand{\tvertex}{\mathsf{vertex}}
\newcommand{\ttrue}{\mathsf{true}}
\newcommand{\tfalse}{\mathsf{false}}
\newcommand{\inj}{\mathsf{in}}
\newcommand{\case}[5]{\mathsf{case}\,#1\, \mathsf{of}\,\{\inj_1(#2)\Rightarrow #3;\inj_2(#4)\Rightarrow #5\}}
\newcommand{\casezero}[1]{\mathsf{case}\,#1\,\mathsf{of}\,\{\}}
\newcommand{\tnormal}{\mathsf{normal}}
\newcommand{\tbernoulli}{\mathsf{bernoulli}}
\newcommand{\tnew}{\mathsf{new}}
\newcommand{\tedge}{\mathsf{edge}}
\newcommand{\letin}[2]{\mathsf{let}\,#1\,=\,#2\,\mathsf{in}\,}
\newcommand{\letinsqueeze}[2]{\mathsf{let}\,#1{=}#2\,\mathsf{in}\,}
\newcommand{\s}{\,|\,}
\newcommand{\sem}[1]{\llbracket #1\rrbracket}
\newcommand{\RR}{\mathbb{R}}
\newcommand{\NN}{\mathbb{N}}
\newcommand{\VV}{\mathbb{V}}
\newcommand{\AutRado}{\Aut(\mathrm{Rado})}
\newcommand{\id}{\mathrm{id}}
\newcommand{\RadoNom}{\mathbf{RadoNom}}
\newcommand{\inv}{^{\raisebox{0.03ex}{{\textrm{-}}}1}}
\newcommand{\dd}{\mathrm{d}}
\newcommand{\bind}{\mathrel{\scalebox{0.5}[1]{$>\!>=$}}}
\newcommand{\defeq}{\stackrel {\mathrm{def}}=}
\newcommand{\Set}{\mathbf{Set}}
\newcommand{\FinStoch}{\mathbf{FinStoch}}
\newcommand{\FinSet}{\mathbf{FinSet}}
\newcommand{\CatA}{\mathbf{C}}
\newcommand{\CatB}{\mathbf{D}}
\newcommand{\CatX}{\mathbf{X}}
\newcommand{\CatC}{\mathbf{A}}
\newcommand{\Meas}{\mathbf{Meas}}
\newcommand{\DistM}{\mathcal{D}}
\newcommand{\Giry}{\mathcal{G}}
\newcommand{\fv}{\mathsf{fv}}
\newcommand{\subm}{\mathsf{subm}}
\newcommand{\ie}{\textit{i.e. }}
\newcommand\restrict[1]{\raisebox{-.5ex}{$|$}_{#1}}
\newcommand{\ErdosRenyi}{Erd\H{o}s--R\'enyi}
\newcommand{\CatAp}{\CatA/\!_\bbase}
\newcommand{\Fam}{\mathsf{Fam}}
\newcommand{\bbase}{\Psi}
\begin{document}

\title{Probabilistic Programming Interfaces for Random Graphs: Markov Categories, Graphons, and Nominal Sets}

\author{Nate Ackerman}
\orcid{0000-0002-8059-7497}
\affiliation{%
  \institution{Harvard University}
  \country{USA}
}
\email{nate@aleph0.net}

\author{Cameron E. Freer}
\orcid{0000-0003-1791-6843}
\affiliation{%
  \institution{Massachusetts Institute of Technology}
  \country{USA}
}
\email{freer@mit.edu}

\author{Younesse Kaddar}
\orcid{0000-0001-7366-9889}
\affiliation{%
  \institution{University of Oxford}
  \country{UK}
}
\email{younesse.kaddar@chch.ox.ac.uk}

\author{Jacek Karwowski}
\orcid{0000-0002-8361-2912}
\affiliation{%
  \institution{University of Oxford}
  \country{UK}
}
\email{jacek.karwowski@cs.ox.ac.uk}

\author{Sean Moss}
\orcid{0009-0005-3930-5340}
\affiliation{%
  \institution{University of Birmingham}
  \country{UK}
}
\email{s.k.moss@bham.ac.uk}

\author{Daniel Roy}
\orcid{0000-0001-8930-0058}
\affiliation{%
  \institution{University of Toronto}
  \country{Canada}
}
\email{daniel.roy@utoronto.ca}

\author{Sam Staton}
\orcid{0000-0002-0141-8922}
\affiliation{%
  \institution{University of Oxford}
  \country{UK}
}
\email{sam.staton@cs.ox.ac.uk}

\author{Hongseok Yang}
\orcid{0000-0003-1502-2942}
\affiliation{%
  \institution{KAIST}
  \department{School of Computing}
  \country{South Korea}
}
\email{hongseok00@gmail.com}
\renewcommand{\shortauthors}{N.~Ackerman, C.~Freer, Y.~Kaddar, J.~Karwowski, S.~Moss, D.~Roy, S.~Staton, H.~Yang}

\begin{abstract}
  We study semantic models of probabilistic programming languages over
  graphs, and establish a connection to graphons from graph
  theory and combinatorics. We show that every well-behaved
  equational theory for our graph probabilistic programming language
  corresponds to a graphon, and conversely, every graphon arises in this way.
  
  We provide three constructions for showing that every graphon arises
  from an equational theory. The first is an abstract construction, using Markov
  categories and monoidal indeterminates. 
  The second and third are more concrete. 
  The second is in terms of traditional measure theoretic probability,
  which covers `black-and-white' graphons.
  The third is in terms of probability monads on the nominal sets of
  Gabbay and Pitts. Specifically, we use a variation of nominal sets
  induced by the theory of graphs, which covers Erdős-Rényi graphons. 
  In this way, we build new models of graph probabilistic
  programming from graphons. 
\end{abstract}

\begin{CCSXML}
<ccs2012>
   <concept>
       <concept_id>10003752.10010124</concept_id>
       <concept_desc>Theory of computation~Semantics and reasoning</concept_desc>
       <concept_significance>500</concept_significance>
       </concept>
   <concept>
       <concept_id>10003752.10003753.10003757</concept_id>
       <concept_desc>Theory of computation~Probabilistic computation</concept_desc>
       <concept_significance>500</concept_significance>
       </concept>
 </ccs2012>
\end{CCSXML}

\ccsdesc[500]{Theory of computation~Semantics and reasoning}
\ccsdesc[500]{Theory of computation~Probabilistic computation}

\keywords{probability monads, exchangeable processes, graphons, nominal sets, Markov categories, probabilistic programming}

\maketitle

\section{Introduction}
\label{sec:intro}

This paper is about the semantic structures underlying probabilistic
programming with random graphs. Random graphs have applications in
statistical modelling across biology, chemistry, epidemiology, and so
on, as well as theoretical interest in graph theory and combinatorics (e.g.~\cite{graphs-handbook}). Probabilistic programming, i.e.~programming for statistical
modelling~\cite{probprog-intro}, is useful for building the statistical
models for the applications. Moreover, as we show
(Theorem~\ref{thm:eq-theory-to-graphon} and Corollary~\ref{corollary:graphon-markov}),
the semantic aspects of programming languages for random graphs
correspond to graphons~\cite{MR3012035}, a core structure in graph
theory and combinatorics.

To set the scene more precisely, we recall the setting of
probabilistic programming with real-valued distributions, 
and contrast it with the setting with graphs.
Many probabilistic programming languages provide a type of
real numbers ($\treal$) and distributions such as the normal
distribution
\begin{equation}\label{eqn:tnormal}
  \tnormal :\treal\ast\treal\to \treal
\end{equation}
together with arithmetic operations such as
\begin{equation}\label{eqn:add}
  (+) :\treal\ast\treal\to \treal\text.
\end{equation}
Even if we encounter an unfamiliar distribution over $(\treal)$ in a library, we have
a rough idea of how to explain what it could be, in terms of probability
densities and measures.

In this paper, we consider the setting of probabilistic programming with graphs, where
the probabilistic programming language or library provides a type $(\tvertex)$ and some distribution 
\begin{equation}\label{eqn:intro-new}
  \tnew : \tunit \to \tvertex
\end{equation}
together with a test
\begin{equation}\label{eqn:intro-edge}
  \tedge:\tvertex\ast\tvertex\to \tbool\text.
\end{equation}
Our goal is to analyze the interface $(\tvertex,\tnew,\tedge)$ for graphs semantically,
and answer, for instance, what they could be and what they could do. We give one example analysis in Section~\ref{sec:intro:graphs} first, and the general one later in Theorem~\ref{thm:eq-theory-to-graphon} and Corollary~\ref{corollary:graphon-markov}, which says that to give an implementation of
$(\tvertex,\tnew,\tedge)$, satisfying the laws of probabilistic
programming, is to give a graphon. In doing so, we connect 
the theory of probabilistic programming with graph theory and
combinatorics.

Probabilistic programming is generally used for statistical inference, in which we describe a generative model by
writing a program using primitives such as \eqref{eqn:tnormal}--\eqref{eqn:intro-edge} above, and then infer a distribution on certain parameters, given particular observed data.  This paper is focused on the generative model aspect, and not inference (although for simple examples, generic inference methods apply immediately, see~\S\ref{sec:intro:practice}). 

\subsection{Example of an Implementation of a Random Graph: Geometric Random Graphs}
\label{sec:intro:graphs}

To illustrate the interface $(\tvertex,\tnew,\tedge)$ of
\eqref{eqn:intro-new}--\eqref{eqn:intro-edge},
we consider for illustration a random geometric graph (e.g.~\cite{penrose-rgg,https://doi.org/10.1002/rsa.20633}) where the vertices are
points on the surface of the unit sphere, chosen uniformly at random, and where there is an
edge between two vertices if the angle between them is less than
some fixed~$\theta$.
This random graph might be used, for instance, to model the connections between people on the earth.

For example, a simple statistical inference problem might start from the observed connectivity in Figure~\ref{fig:graph}(a). 
We might ask for the distribution on $\theta$ given that this graph arose from the spherical random geometric graph. One sample from this posterior distribution on random geometric graphs with $\theta=\pi/3$ is shown in Figure~\ref{fig:graph}(b). Another, unconditioned sample from the random geometric graph with $\theta=\pi/6$ is shown in Figure~\ref{fig:graph}(c).  

We can regard this example as an implementation of the interface
$(\tvertex,\tnew,\tedge)$ as follows:
we implement $(\tvertex)$ as the surface of the sphere (e.g. implemented
as Euclidean coordinates). 
\begin{itemize}
\item $\tnew():\tvertex$ randomly picks a new vertex
  as a point on the sphere uniformly at random.
  Figure~\ref{fig:graph}(c) shows the progress after calling $\tnew()$ 15
  times.
\item $\tedge:\tvertex\ast \tvertex \to \tbool$ checks whether there
  is an edge between two vertices; this amounts to checking whether the angle between two points is less than $\theta$. 
  \end{itemize}
\begin{figure}\centering
(a)\includegraphics[scale=1,angle=110,valign=t]{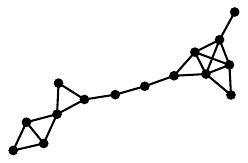} \quad (b)\includegraphics[scale=.5,valign=t]{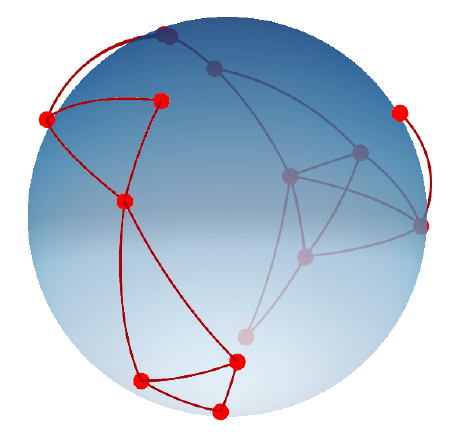} \quad
(c)\includegraphics[scale=.5,valign=t]{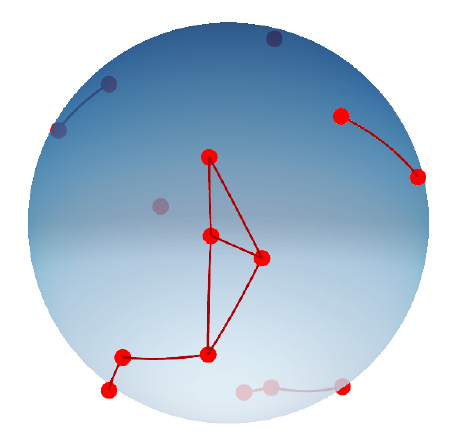}
  \vspace{-1cm}
\caption{(a)~A graph; (b)~an inferred geometric realization of it ($θ \approx π/3$); (c)~a generated sample for $θ = π/6$.\label{fig:graph}}
\end{figure}
For a simple example, we can write a program over the interface to calculate the probability of three
  random vertices forming a triangle: the program
\begin{equation}\label{eqn:triangle}\begin{aligned}&\letin a{\tnew()}{\letin b{\tnew()}{\letin c {\tnew()}
    {}}}\\&\qquad\tedge(a,b) \,\&\, \tedge(b,c) \,\&\,\tedge (a,c):\tbool\end{aligned}
\end{equation}
randomly returns true or false; the probability of true is the
probability of a triangle. 

This implementation using the sphere is only one way to implement
$(\tvertex,\tnew,\tedge)$. There are
implementations using higher-dimensional spheres, or other geometric
objects. We can also consider random equivalence relations as
graphs, i.e.~disjoint unions of complete graphs, or random bipartite
graphs, which are triangle-free. We can consider the
\ErdosRenyi\ random graph, where the chance of an edge between two
vertices is independent of the other edges, and has a fixed
probability.
These are all different implementations of the same abstract
interface, $(\tvertex,\tnew,\tedge)$,
and programs such as \eqref{eqn:triangle} make sense for all of them. 
The point of this paper is to characterize all these
implementations, as graphons. 

\subsection{Implementations Regarded as Equational Theories}\label{sec:intro:eqthys}
The key method of this paper is to treat implementations of the
interface ($\tvertex,\tnew,\tedge$) extensionally, as equational
theories. That is, rather than looking at specific implementation
details, we look at the equations between programs that a user of the
implementation would rely on. (This is analogous to the idea in model
theory of studying first-order theories rather than specific models;
similar ideas arise in the algebraic theory of computational
effects~\cite{pp-notions}.) 
For example, if an implementation always provides a
bipartite random graph, we have the equation 
\[\text{Program~\eqref{eqn:triangle}} \ \ \equiv\ \ \tfalse \qquad\text{between programs,}\]
because a triangle is never generated. This equation does not hold in the example of Figure~\ref{fig:graph}(b--c), since
triangles are possible. 

We focus on a class of equational theories that are well behaved, as
follows.
First, we suppose that they contain basic laws for
probabilistic programming
(eqns.~(\ref{eq:let-assoc})\,--\,(\ref{cd:let_val}), \S\ref{sec:markov-cats}). This basic structure
already appears
broadly in
different guises, including in Moggi's monadic metalanguage~\cite{moggi-computational-lambda}, in linear
logic~\cite{prob-cbpv}, and in synthetic
probability theory~\cite{fritz}. Second, we suppose that the equational theories 
are equipped with a `Bernoulli base', which means that although we do not specify an implementation for the type $(\tvertex)$, each closed program of
type $(\tbool)$ is equated with some ordinary Bernoulli distribution, in such a way as to satisfy the classical laws of traditional finite probability
theory (\S~\ref{sec:bernoulli-base}).
Finally, we suppose that the edge relation is symmetric (the graphs are
undirected) and that it doesn't change when the same question is asked
multiple times (`deterministic'), e.g.
\begin{equation}\letin a{\tnew()}{\letin
                                b{\tnew()}{\tedge(a,b) \,\&\, \neg
                                \tedge(a,b)}}\ \  \ \equiv\ \ \ \ \tfalse \text. \label{eqn:intro:det}\end{equation}

A \emph{graphon} is a symmetric measurable function ${[0,1]^2\to
  [0,1]}$. We show
that every equational theory for the interface
$(\tvertex,\tnew,\tedge)$ gives rise to a graphon (Theorem~\ref{thm:eq-theory-to-graphon}), and conversely that
every graphon arises in this way
(Corollary~\ref{corollary:graphon-markov}).

We emphasize that this abstract treatment of implementations, in terms
of equational theories, is very open-ended, and permits a diverse range of implementation
methods. Indeed, we show in Section~\ref{sec:randomfree} that any
implementation using traditional measure-theoretic methods will
only produce black-and-white graphons, so this abstract treatment is crucial. 

\subsection{From Equational Theories to Graphons}
\label{sec:intro:equations-to-graphons}

In Section~\ref{sec:prog-eq-to-graphons}, we show how an equational
theory over programs in the interface $(\tvertex,\tnew,\tedge)$ gives rise to  a graphon.
The key first step is that graphons (modulo equivalence) can be
characterized in terms of sequences of finite random graphs that
satisfy three conditions: exchangeability, consistency, and locality.

To define a graphon, we show how to define programs that describe finite random graphs, by using
$\tnew$ and $\tedge$ to build boolean-valued $n\times n$ adjacency
matrices, for all $n$ (shown in~\eqref{eqn:programgraph}).
Assuming that the equational theory of programs is Bernoulli-based,
these programs can be interpreted as probability distributions on the
finite spaces of adjacency matrices which, we show, are finite random graphs.  

It remains to show that the induced sequence of random graphs
satisfies the three conditions for graphons (exchangeability,
consistency, and locality). These can be formulated as equational
properties, and so they can be verified by using the equational
reasoning in the equational theory. This is Theorem~\ref{thm:eq-theory-to-graphon}.
A key part of the proof is the observation that exchangeability for graphons
connects to commutativity of $\mathsf{let}$~\eqref{eq:let-comm}: we can permute the order in which
vertices are instantiated without changing the distributions.

\subsection{From Graphons to Equational Theories}
\label{sec:intro:graphons-to-equations}
We also show the converse: every graphon arises from a good
equational theory for the interface $(\tvertex,\tnew,\tedge)$.
We look at this from three angles: first, we prove this in the general
case using an abstract method, and then, we use concrete methods for
two special cases.

Fixing a graphon, we build an equational theory by following a
categorical viewpoint. A good equational theory for probabilistic
programming amounts to a `distributive Markov category', which is a
monoidal category with coproducts that is well-suited to probability
(\S\ref{sec:markov-cats} and \cite{fritz}). The idea that distributive categories
are a good way to analyze abstract interfaces goes back at least
to~\cite{walters}, which used distributive categories to study
interfaces for stacks and storage. 
We can thus use now-standard abstract methods for building monoidal and
distributive categories to build an equational theory for the
programming language.

We proceed in two steps. We first use methods such as~\cite{hermida-tennent,hu-tholen}
to build an abstract distributive
Markov category that supports the interface
$(\tvertex,\tnew,\tedge)$ in a generic way. This equational theory is
generic and not Bernoulli-based: although it satisfies the equational laws of
probabilistic programming, there is no given connection to traditional probability.
The second step is to  show that (a)~it is possible to quotient this generic category
to get Bernoulli-based equational theories; (b)~the choices of
quotient are actually in bijective correspondence with graphons.
Thus, we can build an equational theory from which any given graphon arises,
via~\eqref{eqn:programgraph}: this is Corollary~\ref{corollary:graphon-markov}.
(The framework of Bernoulli-based Markov categories is new here, and the techniques of~\cite{hermida-tennent,hu-tholen}  have not previously been applied in categorical probability, so
a challenge for future work is to investigate these ideas in other aspects of categorical probability.)

Although this is a general method,
it is an abstract method involving quotient constructions. 
The ideal form of denotational semantics is to explain what programs are by regarding them as
functions between certain kinds of spaces. Although Corollary~\ref{corollary:graphon-markov} demonstrates
that every graphon arises from an equational theory, the type
$(\tvertex)$ is interpreted as an object of an abstract category, and
programs are equivalence classes of abstract morphisms.
In the remainder of the paper, we give two situations where we can
interpret $(\tvertex)$ as a genuine concrete space, and programs are
functions or distributions on spaces. Such an interpretation
immediately yields an equational theory, where two programs are
equal if they have the same interpretation.

\begin{itemize}
\item \emph{Section~\ref{sec:randomfree}}: For `black-and-white graphons', we present 
  measure-theoretic models of the interface, based on a standard
  measure-theoretic interpretation of probabilistic programming
  (e.g.~\cite{Kozen81}).
  We interpret $(\tvertex)$ as a measurable space, and $(\tnew)$ as a
  probability measure on it, and $(\tedge)$ in terms of a measurable predicate. 
  Then, the composition of programs is defined in terms of probability
  kernels and Lebesgue integration.
  This kind of model exactly captures the black-and-white graphons (Prop.~\ref{prop:all-bw}).
\item \emph{Section~\ref{sec:ER-Rado}}: For `\ErdosRenyi' graphons, which are
  constantly gray, and not
  black-and-white, we present a model based on Rado-nominal sets (\S\ref{sec:rado-def}).
  These are a variant of nominal sets (\cite{gp-nominal,pitts-nom-book}) 
  where the atoms are vertices of
  the Rado graph (following~\cite{lmcs:1157}). We consider a new notion of `internal
  probability measure' in this setting, and use this to give a 
  compositional semantics that gives rise to the \ErdosRenyi\
  graphons (Corollary~\ref{cor:e-r}).
\end{itemize}
Together, these more concrete sections then provide further intuition for the
correspondence between equational theories and graphons. 

\subsection{Connection to Practice}\label{sec:intro:practice}
We conclude this introduction with remarks on the connection to practical modelling. 
In practice, the graph interface might form part of a generative model, on which we perform inference.
The structure is clearest in a typed language, and one example is the LazyPPL library for Haskell~\cite{lazyppl}.
(Similar examples are implemented in~\cite[Ch.~12]{probmods1},
 albeit untyped.)
Then our interface is captured by a Haskell type class:\footnote{See \url{https://lazyppl-team.github.io/GraphDemo.html} for full details in literate Haskell.}
\begin{lstlisting}
class RandomGraph p vertex | p -> vertex where
  new :: p -> Prob vertex
  edge :: p -> vertex -> vertex -> Bool
  \end{lstlisting}
Here \lstinline|p| is a parameter type, and we write \lstinline|Prob| for a probability monad. 
A spherical implementation of the interface (following \S\ref{sec:intro:graphs}) is parameterized by the dimension~$d$ and the distance $\theta$, as follows:
\begin{lstlisting}
data SphGrph = SG Int Double -- parameters for a sphere graph
data SphVertex = SV [Double] -- vertices are Euclidean coordinates
instance RandomGraph SphGrph SphVertex where
  new :: SphGrph -> Prob SphVertex
  new (SG d theta) = ... -- sample a random unit d-vector uniformly 
  edge :: SphGrph -> SphVertex -> SphVertex -> Bool
  edge (SG d theta) v w = ... -- check whether arccos(v.w) < theta
\end{lstlisting}
We can use this as a building block for more complex models.
For a simple example, we generated Figure~\ref{fig:graph}(b) by using the generic Metropolis-Hastings inference of the LazyPPL library to infer $\theta$ given a particular graph (Fig.~1(a)). We have also implemented other random graphs; our implementation of the \ErdosRenyi\ graph uses stochastic memoization~\cite{roy2008stochastic,kaddar-staton-stoch-mem}.

\subsubsection*{Summary and context}
As we have discussed, our main result is that equational theories for the programming interface (\S\ref{sec:intro:graphs})
give rise to graphons (\S\ref{sec:intro:equations-to-graphons}) and every graphon arises in this way~(\S\ref{sec:intro:graphons-to-equations}).

These results open up new ways to study random graphs, by using programming semantics. On the other hand, our results here put the abstractions of practical probabilistic programming on a solid theoretical foundation (see also~\S\ref{sec:relatedwork}).

\hide{
\paragraph*{Related work and summary}
The relationships between graphons, programming, and computability are
well explored -- see Section~\ref{sec:relatedwork} for further discussion.
In this paper we explore a new relationship between graphons and
equational theories of graph programming languages. \todo{This last
  sentence is a bit weak.}}

\hide{\subsection{Informal notes}
{\it
Informal summary of the narrative, to be threaded throughout.

\begin{itemize}
\item \S2: We give a language for building generative models over
  graphs. Some little examples ought to motivate, either in the intro
  or in the text. I think graphs over spheres could be good for
  illustrations.

  We are interested in equational reasoning about programs.
  Although the equational theory ought to have some basic
  properties, the exact equational theory is not fixed.

  The point of the paper is to show that to pick an equational theory
is to give a graphon. 
\item \S3: We show how every choice of equational theory gives a
  graphon.
\item \S4: We show how every graphon arises in this way. Thus
  everything is complete.

  The equational theory here is
  derived via syntactic means and not by saying that programs are
  equal in some mathematical model.
  This makes it very difficult to calculate or have an intuition for
  program fragments. So we seek a denotational
  interpretation of the types of this language.
  And we will do this in two instances.
\item \S5: We give a denotational measure-theoretic interpretation for black-and-white
  graphons.
\item \S7: We give a denotational interpretation for \ErdosRenyi graphons, in
  terms of measures on the vertices of the Rado graph. Basics of this
  are currently in \S6.
  \end{itemize}
  }}

\newcommand{\CatADet}{\CatA_{\mathsf{det}}}
\newcommand{\Kl}{\mathrm{Kl}}
\newcommand{\FP}{\mathrm{FP}}
\newcommand{\op}{^\mathrm{op}}

\section{Programming interfaces for random graphs: equational theories
and Markov categories}
In Section~\ref{sec:intro:graphs}, 
we considered probabilistic
programming over a graph interface. To make this formal, we now recall
syntax, types, and equational reasoning for simple probabilistic
programming languages.
We begin with a general syntax (\S\ref{sec:generic-ppl}), which can
accommodate various interfaces in the form of type and term
constants, including the interface for graphs (Ex.~\ref{ex:interfaces}(3)). 

We study different instantiations of the probabilistic programming
language in terms of the equational theories that they satisfy. We
consider two equivalent ways of understanding equational theories: as
distributive Markov categories (\S\ref{sec:markov-cats}) and in terms
of affine monads (\S\ref{sec:affine-monad}). Markov categories are a
  categorical formulation of probability theory (e.g.~\cite{fritz}), and affine
  monads arise in the categorical analysis of probability (e.g.~\cite{fgpt-weakly-affine,jacobs-commutative-effectus,kock:commutative-monads-as-a-theory-of-distributions})
  as well as in the semantics for probabilistic programming
  (e.g.~\cite{lazyppl,dsp-layer,amorim-markov}). 
We make a connection with traditional probability via the notion of
Bernoulli base (\S\ref{sec:bernoulli-base}).

Much of this section will be unsurprising to experts: the main purpose is to collect
definitions and results. The definition of \emph{distributive} Markov
category appears to be novel, and so we go over that definition and
correspondence with monads (Propositions~\ref{prop:distr-affine-monad} and~\ref{prop:markov-embed-kleisli}).
In Section~\ref{sec:quotients}, we give a construction for quotienting a distributive Markov category, which we will need in Section~\ref{sec:graphons-to-equational-theories}.
We include the result in the section because it may be of independent interest.

\subsection{Syntax for a Generic Probabilistic Programming Language}
\label{sec:generic-ppl}
Our generic probabilistic programming language is, very roughly, an idealized,
typed fragment of a typical language like Church~\cite{goodman2008church}. 
We start with a simple programming language (following
\cite{s-finite,prob-cbpv,dario-thesis} but also
\cite{moggi-computational-lambda}) with at least the following product
and sum type constructors:
\[ A,A_1,A_2,B ::= \tunit \s 0\s A_1 \ast A_2 \s A_1+A_2\s \dots 
\]
and terms, including the typical constructors and destructors but also
explicit sequencing ($\mathsf{let\,in}$)
\[\begin{array}{l@{}l}t,t_1,t_2,u ::= x &\s () \s (t_1,t_2) \s \pi_1\,t \s \pi_2\,t\s
  \inj_1\,t \s \inj_2\,t \\&\s \letin
  x {t_1} {t_2} \s\casezero t \s \case t {x_1}{u_1}{x_2}{u_2}\s \dots\end{array}
\]
We consider the standard typing rules (where $i\in\{1,2\}$):
  \begin{align*}
  &
   \infer{\Gamma, x : A, \Gamma' \vdash x : A}{} 
\ \quad   \infer{\Gamma \vdash () : \tunit}{}
  \ \quad \infer{\Gamma \vdash (t_1,t_2) : A_1 \ast A_2}{\Gamma \vdash t_1 : A_1 \quad \Gamma \vdash t_2 : A_2} 
   \ \quad \infer{\Gamma \vdash \pi_i\,t : A_i}{\Gamma \vdash t : A_1
    \ast A_2}
     \ \quad \infer{\Gamma \vdash \inj_i\,t : A_1+A_2}{\Gamma \vdash t : A_i}
   \\[6pt]
   &
     \infer{\Gamma \vdash \letin x t u : B}
     {\Gamma \vdash t : A \quad \Gamma, x : A \vdash u : B }
     \ \quad 
   \infer{\Gamma \vdash \casezero t : B}
           {\Gamma \vdash t : 0 }
     \ \quad 
   \infer{\Gamma \vdash \case t {x_1}{u_1}{x_2}{u_2}: B}
           {\Gamma \vdash t : A_1 + A_2\quad \big(\Gamma,x_i:A_i\vdash u_i:B\big)_{i\in\{1,2\}} }
  \end{align*}
  (Here, a context $\Gamma$ is a sequence of assignments of types~$A$ to variables~$x$.)

  In what follows, we use shorthands such as $\tbool=\tunit+\tunit$, and
  if-then-else instead of case. 

  This language is intended to be a generic probabilistic programming
  language, but so far there is nothing specifically probabilistic about this syntax.
  Different probabilistic programming languages support distributions
  over different kinds of structures. 
  Thus, our language is extended according to an `interface' by specifying
  type constants and typed term constants $f:A\to B$. For each term constant $f:A\to B$,
  we include a new typing rule,
 \[    \infer{\Gamma \vdash f(t) : B}
   {\Gamma \vdash t : A }\]
         \begin{example}\label{ex:interfaces}
           We consider the following examples of interfaces. 
  \begin{enumerate}
  \item For probabilistic programming over finite domains, we may have term constants such as
    $\tbernoulli_{0.5}:\tunit \to \tbool$, intuitively a fair coin toss.
  \item For probabilistic programming over real numbers, we may have a type constant $\treal$ and
    term constants such as $\tnormal:\treal\ast\treal \to\treal$,
    intuitively a parameterized normal distribution,
    and arithmetic operations such as
    $(+):\treal\ast\treal\to\treal$.
  \item The main interface of this paper is for random graphs: this
    has a type constant $\tvertex$ and term constants $
    \tnew:\tunit\to \tvertex$ and $\tedge:\tvertex\ast \tvertex \to \tbool$. 
\end{enumerate}\end{example}
 (We have kept this language as simple as possible, to focus on the interesting aspects.
 A practical probabilistic programming language will include other features, which are largely orthogonal,
 and indeed within our implementation in Haskell (\S\ref{sec:intro:practice}), programming features like higher order functions and recursion are
 present and useful. See also the discussion in~\S\ref{sec:dmc-to-am}.)

\subsection{Equational Theories and Markov Categories}
\label{sec:markov-cats}
Section~\ref{sec:generic-ppl} introduced a syntax for various probabilistic programming interfaces. 
The idea is that this is a generic language which applies to different
interfaces with different distributions that are implemented in
different ways. Rather than considering various ad hoc operational
semantics, we study the instances of interfaces by the program equations
that they support.

Regardless of the specifics of a particular implementation, we
expect basic equational reasoning principles for probabilistic
programming to hold, such as the
following laws:
\begin{align}
  &(\letinsqueeze y {(\letinsqueeze x t u)} {t'}) \equiv (\letinsqueeze x t {\letinsqueeze y u {t'}})\label{eq:let-assoc}
  && 
  (\text{where $x ∉ \fv(t')$})
\\
&  (t, u) ≡ (\letin {x}{t}{}\letin {y}{u}{(x,y)})
    \label{eq:moggi-let-pair}
  \\
  &(\letinsqueeze x t {\letinsqueeze {x'} {t'} u}) \equiv (\letinsqueeze {x'} {t'} {\letinsqueeze x t u})\label{eq:let-comm}&&(\text{where $x ∉ \fv(t')$ and $x' ∉ \fv(t)$})
\\\label{eq:let-affine}
   & (\letin x {t'} t) \equiv t && (\text{where $x ∉ \fv(t)$})
\intertext{The following law does not always hold, but does hold when
       $v$ is `deterministic'.}
         &(\letin x v t) \equiv t[v/x] \label{cd:let_val}
\end{align}
Equations~\eqref{eq:let-comm} and \eqref{eq:let-affine} say that parts of programs can be
re-ordered and discarded, as long as the dataflow is respected. This is a feature of
probabilistic programming. For example, coins do not remember
the order nor how many times they have been tossed. But these
equations would typically not hold in a language with state.

The cleanest way to study equational theories of programs is via a categorical semantics, and for Markov categories have arisen as a canonical setting for categorical probability. Informally, a category is a structure for composition, and this matches the composition structure of $\mathsf{let}$ in our language. We also have monoidal structure which allows for the type constructor $A\times B$ and for the compound contexts $\Gamma$, comonoid structure which allows duplication of variables, and distributive coproduct structure which allows for the sum types.

\begin{definition}\label{def:distr-markov}
  A \emph{symmetric monoidal category} $(\CatA,\otimes,I)$ is
  a category $\CatA$ equipped with a functor ${\otimes :\CatA\times \CatA \to\CatA}$ and an object
  $I$ together with associativity, unit and symmetry structure (\cite[XI.1.]{maclane}).
  A \emph{Markov category (\cite{fritz})} is a symmetric monoidal category in which
  \begin{itemize}
  \item the monoidal unit $I$ is a terminal object ($I=1$), and
  \item 
    every object $X$ is equipped with a comonoid $\Delta_X:X\to X\otimes X$,
    compatible with the tensor product ($\Delta_{X\otimes Y}=(X\otimes\mathsf{swp}\otimes Y)\cdot (\Delta_X\otimes \Delta_Y)$, where $\mathsf{swp}$ is the swap map of $\CatA$). 
  \end{itemize}
  A morphism $f\colon X\to Y$ in a Markov category is
  \emph{deterministic} if it commutes with the comonoids:
  $(f\otimes f)\cdot \Delta_X=\Delta_Y\cdot f$. 

  A \emph{distributive symmetric monoidal category (e.g.~\cite{walters,jay-distr-monoidal})} is a symmetric monoidal category equipped with chosen finite coproducts such that
  the canonical maps $X\otimes Z + Y\otimes Z\to (X+Y)\otimes Z$
    and $0\to 0\otimes Z$ are isomorphisms.
  A \emph{distributive Markov category} is a Markov category whose
  underlying monoidal category is also distributive and
  whose chosen coproduct injections $X\to X+Y\leftarrow Y$ are
  deterministic.
  A \emph{distributive category~\cite{clw,cockett-distr}} is a distributive Markov category
  where all morphisms are deterministic.
  
  A \emph{(strict) distributive Markov functor} is a functor $F : \CatA \to \CatB$ between distributive Markov categories which strictly preserves the chosen symmetric monoidal, coproduct, and comonoid structures.
\end{definition}
In this paper we mainly focus on functors between distributive Markov
categories that strictly preserve the relevant structure, so we elide
`strict'. (Nonetheless, non-strict functors are important,
e.g.~\cite[\S10.2]{fritz} and Prop.~\ref{prop:markov-embed-kleisli}.)

We interpret the language of Section~\ref{sec:generic-ppl} in a
distributive Markov category~$\CatA$ by interpreting types $A$ and
type contexts $\Gamma$ as objects $\sem A$ and $\sem \Gamma$, and
typed terms $\Gamma\vdash t :A $ as morphisms $\sem \Gamma\to \sem A$.
(See e.g.~\cite{pitts-cat-logic} for a general discussion of terms
as morphisms.) 

In more detail, to give such an interpretation, type constants must
first be given chosen interpretations as objects of $\CatA$.
We can then interpret types and contexts using the monoidal and coproduct structure of $\CatA$.
Following this, term constants $f:A\to B$ must be given chosen
interpretations as morphisms $\sem f:\sem A\to \sem B$ in $\CatA$.
The interpretation of other terms is made by induction on the
structure of typing derivations in a standard manner, using the
structure of the distributive Markov category (e.g.~\cite{bpdh},
\cite[\S7.2]{dario-thesis}).
For example,
\begin{align*}
  &\sem{\Gamma,x:A,\Gamma'\vdash x:A}
  =
  \sem{\Gamma,x:A,\Gamma'}
  \cong
  \sem{\Gamma}\otimes \sem A \otimes\sem{\Gamma'}
  \xrightarrow{!\otimes \sem A\otimes!}
  1\otimes
  \sem A\otimes 1
  \cong \sem A
\displaybreak[0]\\
&  \sem{\Gamma\vdash\letin x t u:B}
  =
  \sem\Gamma\xrightarrow{\Delta_{\sem\Gamma}} \sem\Gamma\otimes \sem\Gamma
  \xrightarrow{\sem\Gamma\otimes \sem t}
  \sem\Gamma\otimes \sem A
  =
  \sem{\Gamma,x:A}
  \xrightarrow{\sem u}
     \sem B
\displaybreak[0]  \\
  &
    \sem{\Gamma\vdash\case t {x_1}{u_1}{x_2}{u_2}:B}
    =\\&\qquad
    \sem \Gamma
    \xrightarrow{\Delta_{\sem\Gamma}}
    \sem\Gamma\otimes \sem\Gamma
    \xrightarrow{\sem \Gamma\otimes \sem t}
    \sem \Gamma\otimes {\sem {A_1+A_2}}
    \cong
    \sem{\Gamma ,x:A_1}+\sem{\Gamma,x:A_2}
    \xrightarrow{\langle\sem {u_1},\sem{u_2}\rangle}
  \sem B
  \\&\sem{\Gamma \vdash f(t):B}=
  \sem\Gamma\xrightarrow{\sem t}\sem A\xrightarrow{\sem f} \sem B
\end{align*}

An interpretation in a Markov category induces an equational theory
between programs: let $\Gamma\vdash t=u :A$ if $\sem t=\sem u$.
\begin{proposition}[e.g.~\cite{dario-thesis}, \S7.1]
The equational theory induced by the interpretation in a distributive Markov
category, with given interpretations of type and term constants,
always includes the
equations~\eqref{eq:let-assoc}--~\eqref{eq:let-affine}, 
and also~\eqref{cd:let_val} whenever $\sem v$ is a deterministic morphism. 
\end{proposition}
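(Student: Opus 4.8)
The plan is to verify each equation by unfolding the inductive definition of the interpretation $\sem{-}$ on both sides, reducing the claim to an equality of morphisms in $\CatA$, and then discharging that equality using the axioms of a distributive Markov category (coassociativity, counitality and cocommutativity of the comonoids $\Delta$, the comonoid--tensor compatibility of Definition~\ref{def:distr-markov}, symmetry of $\otimes$, and terminality of $I=1$). Since this is an induction-heavy but routine exercise, I would first extract two structural lemmas, each proved by induction on typing derivations, and then treat the five equations as corollaries.

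The first is a \emph{weakening lemma}: if $x\notin\fv(t)$ and $\Gamma\vdash t:B$, then the interpretation of $t$ in the extended context $\Gamma,x:A$ is the composite $\sem\Gamma\otimes\sem A \xrightarrow{\sem\Gamma\otimes\,!} \sem\Gamma\otimes 1\cong\sem\Gamma\xrightarrow{\sem t}\sem B$, where $!\colon\sem A\to 1$ is the unique map into the terminal unit. This is exactly where affineness ($I=1$) is used: an unused variable is discarded. The second is a \emph{substitution lemma}: for $\Gamma\vdash v:A$ with $\sem v$ deterministic and $\Gamma,x:A\vdash t:B$, one has $\sem{t[v/x]} = \sem{t}\comp(\sem\Gamma\otimes\sem v)\comp\Delta_{\sem\Gamma}$. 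The only nontrivial inductive step is the variable case, where $x$ may occur several times: substitution duplicates $v$, whereas the right-hand side copies a single output of $v$, and these agree precisely because $\sem v$ is deterministic, i.e.\ $(\sem v\otimes\sem v)\comp\Delta_{\sem\Gamma}=\Delta_{\sem A}\comp\sem v$. Note that \eqref{cd:let_val} is then immediate, since its left-hand side $\sem{\letin x v t}$ is by definition $\sem{t}\comp(\sem\Gamma\otimes\sem v)\comp\Delta_{\sem\Gamma}$.

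With these in place, the remaining four equations follow by short diagram chases, none of which needs determinism. For \eqref{eq:let-affine} I would combine the weakening lemma with the counit law (discarding one leg of $\Delta$ is the identity) and terminality (to absorb $\sem{t'}$ into the discard $!$). For \eqref{eq:moggi-let-pair} I would unfold the pairing as $(\sem t\otimes\sem u)\comp\Delta_{\sem\Gamma}$ and use the weakening lemma together with counitality to collapse the nested lets. For \eqref{eq:let-assoc} both sides reduce, via coassociativity and the comonoid--tensor compatibility of Definition~\ref{def:distr-markov}, to the same morphism built from a threefold copy of $\sem\Gamma$ followed by $\sem t$, $\sem u$ and $\sem{t'}$ in sequence. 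For \eqref{eq:let-comm}, which is the law most relevant to the later exchangeability arguments, both sides reduce to a threefold copy $\sem\Gamma\to\sem\Gamma^{\otimes 3}$ followed by $\sem t\otimes\sem{t'}$ (respectively $\sem{t'}\otimes\sem t$) and then $\sem u$; these agree by cocommutativity of $\Delta$ and the symmetry of $\otimes$, since the ternary copy is invariant under permuting its two extra legs.

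The main obstacle is the substitution lemma and the precise point at which determinism is required: one has to track how many times $x$ occurs in $t$ and check that the comonoid duplications introduced by the interpretation commute with $\sem v$ exactly under the determinism hypothesis. The only other place demanding care is the bookkeeping of the comonoid--tensor compatibility law in \eqref{eq:let-assoc} and \eqref{eq:let-comm}, where one must correctly account for the $\mathsf{swp}$ maps relating $\Delta_{X\otimes Y}$ to $\Delta_X\otimes\Delta_Y$; once the normal forms are written out, the equalities are forced by coherence.
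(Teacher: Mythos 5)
Your proposal is correct, and it is the standard soundness-of-interpretation argument: the paper itself offers no proof of this proposition, instead citing \S 7.1 of the referenced thesis, where essentially the induction you describe (weakening lemma, substitution lemma, then diagram chases using the comonoid laws, terminality of the unit, and bifunctoriality of $\otimes$) is carried out, so your route is the expected one rather than a genuinely different decomposition.

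One imprecision is worth fixing before this would pass as a full proof. In the substitution lemma, the inductive case that genuinely requires determinism of $\sem v$ is \emph{not} the variable case: there the term is a single occurrence of $x$, the substituted term is just $v$, and only the counit law is needed to collapse $\sem{x}\comp(\sem\Gamma\otimes\sem v)\comp\Delta_{\sem\Gamma}$ to $\sem v$. The cases that need determinism are the context-duplicating ones --- pairing, $\mathsf{let}$, and $\mathsf{case}$ --- whose interpretations begin with $\Delta_{\sem\Gamma}$. There, substituting $v$ into both subterms duplicates $v$ syntactically, while the right-hand side $\sem t\comp(\sem\Gamma\otimes\sem v)\comp\Delta_{\sem\Gamma}$ runs $\sem v$ once and copies its output; reconciling the two is exactly the equation $(\sem v\otimes\sem v)\comp\Delta_{\sem\Gamma}=\Delta_{\sem A}\comp\sem v$, together with coassociativity, cocommutativity, and the $\Delta_{X\otimes Y}$ compatibility law. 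You identify precisely this mechanism but attach it to the wrong case; relocating it, the induction goes through, and your derivations of \eqref{eq:let-assoc}--\eqref{eq:let-affine} and of \eqref{cd:let_val} from the two lemmas are correct, including the observation that only \eqref{cd:let_val} ever invokes determinism.
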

\begin{example}\label{ex:finset}
  The category $(\FinSet, ×, 1)$ of finite sets is a distributive Markov
    category. As in any category with products, each object has a
    unique comonoid structure, and all morphisms are deterministic. 
    This is a good Markov category for interpreting the plain language
    with no type or term constants. For example, $\sem{\tbool}$ is a
    set with two elements. 
  \end{example}\begin{example}\label{ex:finstoch} The category $\FinStoch$ has natural numbers as objects and
    the morphisms are stochastic matrices. In more detail, a morphism $m\to
    n$ is a matrix in $(\RR_{\geq 0})^{m\times n}$ such that each row sums
    to~$1$.
    Composition is by matrix multiplication. The monoidal structure is
    given on objects by multiplication of numbers, and on morphisms by
    Kronecker product of matrices.
    By choosing an enumeration of each finite set, we get a functor $\FinSet\to\FinStoch$ that converts a function to the corresponding
    $(0/1)$-valued matrix. So every object of $\FinStoch$ can be
    regarded with the comonoid structure from $\FinSet$. 
    The deterministic morphisms in
    $\FinStoch$ are exactly the morphisms from $\FinSet$~\cite[10.3]{fritz}. 

    This is a good Markov category for interpreting the language with
    Bernoulli distributions (Ex.~\ref{ex:interfaces}(1)). We
    interpret the fair coin as the $1\times 2$ matrix $(0.5,0.5)$.

    We can also give some interpretations for the graph interface (Ex.~\ref{ex:interfaces}(3)) in $\FinStoch$.
    For instance, consider random graphs made of two disjoint complete subgraphs,
    as is typical in a clustering model.
    We can interpret this by putting $\sem\tvertex =2$,
    $\sem\tedge=(\begin{smallmatrix}1&0&0&1
      \\0&1&1&0\end{smallmatrix})^\top$, and $\sem\tnew=(0.5,0.5)$.
\end{example}
We look at other examples of distributive Markov categories and
interpretations of these interfaces in
Sections~\ref{sec:distr-monad} and~\ref{sec:probthy}, and then in Sections~\ref{sec:graphons-to-equational-theories}--\ref{sec:ER-Rado}. 

\subsection{Equational Theories and Affine Monads}
\label{sec:affine-monad}
\subsubsection{Distributive Markov Categories from Affine Monads}
One way to generate equational theories via Markov categories is by
considering certain kinds of monads, following
Moggi~\cite{moggi-computational-lambda}.

\begin{definition}\label{def:monad}A \emph{strong monad} on a category $\CatC$
  with finite products is given by
  \begin{itemize}
  \item for each object $X$, an object $T(X)$;
  \item for each object $X$, a morphism $\eta_X:X\to T(X)$;
  \item for objects $Z,X,Y$, a family of functions natural in $Z$
    \[
      (\bind):\CatC(Z,T(X))\times \CatC(Z\times X,T(Y))\to \CatC(Z,T(Y))
    \]
  \end{itemize}
  such that $\bind$ is associative with unit $\eta$. 
\end{definition}
(There are various different formulations of this structure. 
When $\CatC$ is cartesian
closed, as in Defs.~\ref{def:distribution-monad} and~\ref{def:monad-radonom}, then the bind ($\bind$) is represented by a morphism 
$(\bind ) : {T(X)\times (X\Rightarrow T(Y))\to T(Y)}$, by the Yoneda lemma.)
\begin{definition}[\hspace{1sp}\cite{kock-comm,jacobs-weakening,lindner-affine}]\label{def:affine-monad}
  Given a strong monad~$T$, we say that two morphisms $f:X_1\to T(X_2)$,
  $g:X_1\to T(X_3)$
  \emph{commute} if
\[  f \bind ((g \circ \pi_1)\bind (\eta\circ \langle \pi_2 \circ \pi_1,\pi_2\rangle))
  {} =
    g \bind ((f \circ \pi_1)\bind (\eta\circ \langle \pi_2, \pi_2 \circ \pi_1\rangle))：X_1 → T(X_2 × X_3) \text .
\]
  A strong monad is \emph{commutative} if all morphisms commute. It is \emph{affine} if $T(1)\to 1$ is an isomorphism.
\end{definition}

The Kleisli category $\Kl(T)$ of a strong monad $T$ has the same objects as $\CatC$,
but the morphisms are different: $\Kl(T)(A,B)=\CatC(A,T(B))$.
There is a functor $J:\CatC\to\Kl(T)$, given on morphisms by composing
with $\eta$ (e.g.~\cite[\S VI.5]{maclane}, \cite{moggi-computational-lambda}).
\begin{proposition}\label{prop:distr-affine-monad}
        Let\, $T$ be a strong monad on a category $\CatC$. If\, $T$ is commutative
        and affine and $\CatC$ has finite products, then
the Kleisli category $\Kl(T)$ has a canonical structure of a Markov
category. Furthermore, if $\CatC$ is distributive, then $\Kl(T)$ can be regarded as a distributive Markov category. 
\end{proposition}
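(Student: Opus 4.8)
The plan is to show that the Kleisli category inherits a symmetric monoidal structure from the cartesian product on $\CatC$, that affineness makes the monoidal unit terminal, and that the universal comonoids on products carry over along $J\colon\CatC\to\Kl(T)$; together these are exactly the data of a Markov category in the sense of Definition~\ref{def:distr-markov}. The guiding principle is the classical observation (due to Kock) that a commutative monad is the same thing as a symmetric monoidal monad, here taken with respect to the cartesian monoidal structure $(\times,1)$ on $\CatC$, and that the Kleisli category of a symmetric monoidal monad is again symmetric monoidal.

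Concretely, I would first define the tensor on $\Kl(T)$. On objects I set $X\otimes Y \defeq X\times Y$ and $I\defeq 1$. For Kleisli morphisms $f\colon X_1\to T(X_2)$ and $g\colon Y_1\to T(Y_2)$, I define $f\otimes g\colon X_1\times Y_1\to T(X_2\times Y_2)$ using the strong bind of Definition~\ref{def:monad}: working over the context $Z=X_1\times Y_1$, run $f$ on the first component, then $g$ on the second, and pair the results with $\eta$. The content of the commutativity hypothesis (Definition~\ref{def:affine-monad}) is precisely that this construction does not depend on the order in which $f$ and $g$ are run. This is the crux of the argument: commutativity is exactly what is needed to verify that $\otimes$ is a bifunctor (the interchange law $(f'\comp f)\otimes(g'\comp g)=(f'\otimes g')\comp(f\otimes g)$, where $\comp$ is Kleisli composition) and that the symmetry $\mathsf{swp}$, lifted from $\CatC$ along $J$, is natural. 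The associativity, unit, and symmetry coherences then reduce to the corresponding coherences for $(\times,1)$ in $\CatC$ together with the monad laws.

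Next I would check the two remaining Markov axioms. The unit $I=1$ is terminal in $\Kl(T)$: a Kleisli morphism $X\to 1$ is a $\CatC$-morphism $X\to T(1)$, and affineness ($T(1)\cong 1$) makes this unique. For the comonoids, every object of $\CatC$ carries a canonical commutative comonoid given by the diagonal $\Delta_X\colon X\to X\times X$ and the terminal map $X\to 1$; applying $J$ transports these to comonoids in $\Kl(T)$, and their compatibility with $\otimes$, namely $\Delta_{X\otimes Y}=(X\otimes\mathsf{swp}\otimes Y)\cdot(\Delta_X\otimes\Delta_Y)$, is inherited from the analogous automatic identity for diagonals in the cartesian category $\CatC$, since $J$ is strict monoidal. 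This establishes that $\Kl(T)$ is a Markov category. I would also record that every morphism in the image of $J$ is deterministic, as in $\CatC$ every map commutes with the diagonals.

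For the distributive claim, suppose $\CatC$ is distributive. The functor $J\colon\CatC\to\Kl(T)$ is a left adjoint, hence preserves colimits; in particular the coproducts and initial object of $\CatC$ remain coproducts and an initial object in $\Kl(T)$, with injections $J(\inj_i)$ that are deterministic by the previous remark. The distributivity isomorphisms $X\otimes Z+Y\otimes Z\to(X+Y)\otimes Z$ and $0\to 0\otimes Z$ already live in $\CatC$; since $J$ is the identity on objects and strict monoidal, applying $J$ sends them to the corresponding canonical maps of $\Kl(T)$, and as $J$ preserves isomorphisms these remain invertible. Hence $\Kl(T)$ is a distributive Markov category. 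The one genuinely delicate point throughout is the well-definedness and bifunctoriality of $\otimes$ on morphisms; everything else is bookkeeping against the cartesian and monad structure, and I expect the commutativity equation of Definition~\ref{def:affine-monad} to slot in directly there.
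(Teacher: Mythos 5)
Your proposal is correct and follows essentially the same route as the paper's proof: commutativity yields the symmetric monoidal structure on $\Kl(T)$ (the Kock observation, which the paper delegates to Fritz~\S3), affineness makes the unit terminal, comonoids and determinism transfer along $J$, and distributivity follows because $J$ has a right adjoint and hence preserves coproducts. You simply spell out the details that the paper's proof notes leave to the cited reference.
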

\begin{proof}[Proof notes]
  The Markov structure follows~\cite[\S3]{fritz}. Since $T$ is commutative, the product structure of $\CatC$ extends
  to a symmetric monoidal structure on $\Kl(T)$.
  Since $T(1)=1$, the monoidal unit ($1$) is terminal in $\Kl(T)$.
  Every object in $\CatC$ has a comonoid structure, and this is
  extended to $\Kl(T)$ via $J$. 
  The morphisms in the image of $J$ are deterministic, although this
  need not be a full characterization of determinism.
  
  For the distributive structure, recall that $J$ preserves
  coproducts and indeed it has a right adjoint. Hence, the coproduct injections will be deterministic.
\end{proof}
We can thus interpret the language of Section~\ref{sec:generic-ppl} using any
strong monad, interpreting the types~$A$ as objects~$\sem A$ of $\CatC$,
and a term $\Gamma\vdash t:A$ as a morphism
$\sem t:\sem{\Gamma} \to T(\sem A)$.
This interpretation matches Moggi's interpretation of the language of Section~\ref{sec:generic-ppl} in a strong monad. 

\subsubsection{Example Affine Monad: Distribution Monad}
\label{sec:distr-monad}
\begin{definition}[e.g.~\cite{jacobs-coalgebra}, \S4.1]\label{def:distribution-monad} The distribution monad $\DistM$ on $\Set$ is defined as follows:
\begin{itemize}
 \item On objects: each set $X$ is mapped to the set of all finitely-supported discrete probability measures on $X$, that is, all functions $p : X \to \RR$ that are non-zero for only finitely many elements and satisfy $\sum_{x \in X} p(x) = 1$.
 \item The unit $\eta_X :  X\to \DistM (X)$ maps $x \in X$ to the indicator function 
 $\lambda y.\, [y=x]$, i.e.~the Dirac distribution~$\delta_x$.
 \item The bind function $(\bind)$ is defined as follows: 
 \[
         (f \bind g)(z)(y) = \textstyle\sum_{x \in X} f(z)(x) \cdot g(z,x)(y)
 \] 
 \end{itemize}
\end{definition}
By the standard construction for strong monads, each morphism $f : X \to Y$ 
gets mapped to $\DistM f：\DistM X \to \DistM Y$, that is, the pushforward in this case: 
$\DistM f(p)(y) = \textstyle\sum_{x \in f^{-1}(y)} p(x)$.
Consider the language with no type constants, and just the term constant $\tbernoulli_{0.5}$ (Ex.~\ref{ex:interfaces}(1)).
This can be interpreted in the distribution monad.
Since every type $A$ is interpreted as a finite set~$\sem A$, and every context $\Gamma$ as a finite set~$\sem\Gamma$,
a term $\Gamma \vdash t:A$ is interpreted as a function $\sem \Gamma\to \DistM\sem A$.
To give a Kleisli morphism between finite sets is to give a stochastic matrix, and so the induced equational theory is the same as
the interpretation in $\FinStoch$ (Ex.~\ref{ex:finstoch}). 
\subsubsection{Example Affine Monad: Giry Monad}
\label{sec:probthy}
We recall some rudiments of measure-theoretic probability.
\begin{definition}
  A \emph{$\sigma$-algebra} on a set is a non-empty collection of subsets that contains the empty set and is closed under countable unions and complements. A \emph{measurable space} is a pair $(X,\Sigma)$ of a set and a $\sigma$-algebra on it.
  A measurable function $(X,\Sigma_X)\to (Y,\Sigma_Y)$ is a function $f\colon X\to Y$
  such that $f\inv(U)\in \Sigma_X$ for all $U\in \Sigma_Y$.
  
  A \emph{probability measure}
  on a measurable space $(X,\Sigma)$ is a function $\mu:\Sigma\to[0,1]$ that has total mass $1$ ($\mu(X)=1$) and that is $\sigma$-additive:
  $\mu(\biguplus_{i=1}^\infty U_i)=\sum_{i=1}^\infty \mu(U_i)$ for any sequence of disjoint $U_i$.
\end{definition}
Examples of measurable spaces include: the finite sets $X$ equipped with their powerset $\sigma$-algebras; the unit interval $[0,1]$ equipped with its Borel $\sigma$-algebra, which is the least $\sigma$-algebra containing the open sets.
Examples of probability measures include: discrete probability measures (Def.~\ref{def:distribution-monad}); the uniform measure on $[0,1]$; the Dirac distribution $\delta_x(U)={[x\in U]}$.

The product of two measurable spaces $(X,\Sigma_X)\times (Y,\Sigma_Y)=(X\times Y,\Sigma_X\otimes \Sigma_Y)$ comprises the product of sets with the least $\sigma$-algebra making the projections $X\leftarrow X\times Y\to Y$ measurable.
The category of measurable spaces and measurable functions is a distributive category. 

A \emph{probability kernel} between measurable spaces $(X,\Sigma_X)$ and $(Y,\Sigma_Y)$ is a function $k\colon X\times \Sigma_Y\to [0,1]$ that is measurable in the first argument and that is $\sigma$-additive and has mass $1$ in the second argument. 

To compose probability kernels, we briefly recall Lebesgue integration. Consider a measurable space $(X,\Sigma_X)$, 
a measure $\mu:\Sigma_X\to[0,1]$, and a measurable function $f\colon X\to [0,1]$. If $f$ is a simple function, i.e. $f(x) = \sum_{i = 1}^m r_i \cdot [x \in U_i]$ for some $m$, $r_i \in [0,1]$, and $U_i \in \Sigma_X$, the Lebesgue integral $\int f\,\dd \mu = \int f(x)\,\mu(\dd x)\in[0,1]$ is defined to be $\sum_{i = 1}^m r_i \times \mu(U_i)$. If $f$ is not a simple function, there exists a sequence of increasing simple functions $f_1,f_2,\ldots : X \to [0,1]$ such that $\sup_k f_k(x) = f(x)$ (for example, by taking $f_k(x) ≝ \lfloor 10^k f(x) \rfloor / 10^k$). In that case, the integral is defined to be the limit of the integrals of the $f_k$'s (which exists by monotone convergence).


Probability kernels can be equivalently formulated as morphisms $X\to \Giry (Y)$, where $\Giry$ is the Giry monad:
\begin{definition}[\hspace{1sp}\cite{giry}]\label{def:giry}
  The Giry monad $\Giry$ is a strong monad on the category $\Meas$ of measurable spaces given by
  \begin{itemize}
\item  $\Giry(X)$ is the set of probability measures on $X$, with the least $\sigma$-algebra making $\int f\, \dd(-):\Giry(X)\to [0,1]$ measurable for all measurable $f:X\to [0,1]$;
\item the unit $\eta$ maps $x$ to the Dirac distribution $\delta_x$;
\item the bind is given by composing kernels:
  \begin{equation}
  (k\bind l) (z,U) = \int l((z,x),U) \, k(z,\dd x)\text.
  \end{equation}
  \end{itemize}
\end{definition}
\begin{proposition}\label{prop:giry-comm}
  The monad $\Giry$ is commutative and affine.
\end{proposition}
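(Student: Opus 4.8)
The plan is to dispatch affineness immediately and to reduce commutativity to an interchange of iterated integrals, which I would then settle by checking agreement on measurable rectangles. For affineness, the terminal object $1$ in $\Meas$ is the one-point measurable space, and any probability measure on it is forced to assign mass $1$ to the whole space; hence $\Giry(1)$ is a singleton and the canonical map $\Giry(1)\to 1$ is an isomorphism. Nothing further is needed there.

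For commutativity, fix morphisms $f\colon X_1\to\Giry(X_2)$ and $g\colon X_1\to\Giry(X_3)$, and show that the two composites in Definition~\ref{def:affine-monad} coincide as morphisms $X_1\to\Giry(X_2\times X_3)$. First I would unfold both sides using the bind of $\Giry$ (Def.~\ref{def:giry}) together with the fact that $\eta$ sends a point to a Dirac measure, so that the final reindexings $\langle \pi_2\circ\pi_1,\pi_2\rangle$ and $\langle \pi_2,\pi_2\circ\pi_1\rangle$ both deposit a Dirac at the sampled pair $(x_2,x_3)\in X_2\times X_3$. Evaluating each composite at a pair $(x_1,W)$ with $W\in\Sigma_{X_2}\otimes\Sigma_{X_3}$, the left-hand side becomes
\[
\int_{X_2}\int_{X_3} [(x_2,x_3)\in W]\, g(x_1)(\dd x_3)\, f(x_1)(\dd x_2)
\]
and the right-hand side becomes
\[
\int_{X_3}\int_{X_2} [(x_2,x_3)\in W]\, f(x_1)(\dd x_2)\, g(x_1)(\dd x_3),
\]
i.e.\ the same integrand integrated in the opposite order against the probability measures $f(x_1)$ and $g(x_1)$.

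The core step is then to show these two iterated integrals agree. Rather than invoking a general Fubini theorem, I would verify the identity first for rectangles $W=U\times V$ with $U\in\Sigma_{X_2}$ and $V\in\Sigma_{X_3}$, where both iterated integrals collapse to the product $f(x_1)(U)\cdot g(x_1)(V)$. Since the measurable rectangles form a $\pi$-system generating $\Sigma_{X_2}\otimes\Sigma_{X_3}$, and both composites are probability measures on $X_2\times X_3$ (hence agree on the whole space), the $\pi$--$\lambda$ (Dynkin) theorem extends the equality to all $W$, giving commutativity for each fixed $x_1$ and thus equality of the two morphisms. The hard part will be the measure-theoretic bookkeeping rather than any conceptual difficulty: I need the inner integrals, viewed as functions of the remaining sampled variable, to be measurable so the outer integrals are defined, and I need each composite to be a genuine probability kernel in $(x_1,W)$. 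These are exactly the facts already required to see that $\Giry$ is a strong monad, so I would treat them as given from Definition~\ref{def:giry}; the only genuinely new content is the interchange, which reduces cleanly to the rectangle computation and the uniqueness-of-extension step.
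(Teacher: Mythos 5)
Your proof is correct, and its overall shape matches the paper's: affineness is exactly the paper's observation that probability measures have total mass $1$ (so $\Giry(1)$ is a point), and commutativity is reduced, as in the paper, to interchanging the two iterated integrals obtained by unfolding the binds against Dirac units. The one genuine difference is at the key step: the paper simply cites Fubini's theorem, whereas you re-derive precisely the instance needed --- interchange of iterated integrals of indicators $[(x_2,x_3)\in W]$ against the two probability measures $f(x_1)$ and $g(x_1)$ --- by checking rectangles and invoking the $\pi$--$\lambda$ theorem. This is in effect the standard proof of the product-measure uniqueness half of Fubini, specialized to the probabilistic setting; what it buys you is self-containment and the observation that no $\sigma$-finiteness bookkeeping is needed since everything in sight is a probability measure, at the cost of having to justify that both iterated set functions are genuine measures (countable additivity plus measurability of the inner integral) before Dynkin's theorem applies. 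Your delegation of that measurability to the well-definedness of bind is legitimate: the inner composite $(g\circ\pi_1)\bind(\eta\circ\langle\pi_2\circ\pi_1,\pi_2\rangle)$ being a kernel is exactly the fact that $\Giry$'s bind produces kernels, which Definition~\ref{def:giry} presupposes. So the two arguments differ only in whether Fubini is treated as a black box or opened up; the mathematical content is the same.
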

\begin{proof}[Proof notes] Commutativity boils down to Fubini's theorem for reordering integrals and affineness is marginalization (since probability measures have mass $1$). See also~\cite{jacobs-commutative-effectus}.\end{proof}

\label{giry-real-lang}
Consider the real-numbers language (Ex.~\ref{ex:interfaces}(2)).
Let $\sem \treal=\RR$, with the Borel sets, and interpret $\tnormal$ as the normal probability measure on $\RR$. The basic arithmetic operations are all measurable.

Among the following three programs
  \begin{align}
    &\letin x {\tnormal(0,1)} {x+x} \label{eqn:norm-eg-1}\\
    &\letin x {\tnormal(0,1)} {\letin y {\tnormal(0,1)} {x+y} }\label{eqn:norm-eg-2}\\
    &\tnormal(0,1)+\tnormal(0,1) \label{eqn:norm-eg-3}\end{align}
  the programs $\eqref{eqn:norm-eg-2}$ and $\eqref{eqn:norm-eg-3}$
  denote the same normal distribution with variance~$2$, whereas
  \eqref{eqn:norm-eg-1} denotes a distribution with variance~$4$.
  Notice that we cannot use~\eqref{cd:let_val} to equate all the
  programs, because $\sem{\tnormal}$ is not deterministic.

We can also interpret the Bernoulli language (Ex.~\ref{ex:interfaces}(1)) in the Giry monad; this interpretation gives the same equational theory as the interpretation in $\FinStoch$ and in the distribution monad in Section~\ref{sec:distr-monad}.

    We can also give some interpretations for the graph interface
    (Ex.~\ref{ex:interfaces}(3)) in the Giry monad.
    For an informal example, consider the geometric example from Section~\ref{sec:intro:graphs}, let $\sem{\tvertex}=S_2$ (the sphere), and define $\sem{\tnew}$ to be the uniform distribution on the sphere.
    (See also Section~\ref{sec:all-bw}.) 

    \subsubsection{Affine Monads from Distributive Markov Categories}\label{sec:dmc-to-am}
    The following result, a converse to Proposition~\ref{prop:distr-affine-monad}, demonstrates that the new notion of distributive Markov category (Def.~\ref{def:distr-markov}) is a canonical one, and emphasizes the close relationship between semantics with distributive Markov categories and semantics with commutative affine monads. 
\begin{proposition}\label{prop:markov-embed-kleisli}
  Let $\CatA$ be a small distributive Markov category.
  Then, there is a distributive category~$\CatC$ with a commutative affine monad $T$ on it
  and a full and faithful functor $\CatA\to \Kl(T)$ that preserves
  symmetric monoidal structure, comonoids, and sums.
\end{proposition}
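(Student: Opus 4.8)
The plan is to realize $\CatA$ as a full subcategory of the Kleisli category of a monad on a presheaf-style completion of its deterministic part, and then to refine the base so that the chosen coproducts are preserved. First I would set $\CatADet$ to be the wide subcategory of deterministic morphisms of $\CatA$. Because the monoidal unit is terminal and every object carries a commutative comonoid, the tensor $\otimes$ restricts on $\CatADet$ to a genuine categorical product, and the chosen coproducts of $\CatA$ (whose injections are deterministic, and whose copairings of deterministic maps are again deterministic) restrict as well; so $\CatADet$ is a distributive category. The inclusion $\CatADet \hookrightarrow \CatA$ is identity-on-objects and faithful, i.e.\ it equips the hom-profunctor $\CatA(-,-) : \CatADet\op \times \CatADet \to \Set$ with the structure of a monad in the bicategory of profunctors (unit $=$ inclusion of deterministic maps, multiplication $=$ composition in $\CatA$). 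By the standard correspondence between such promonads and monads on presheaf categories, this data induces a strong monad $T_0$ on $[\CatADet\op,\Set]$ with $T_0(yX) \cong \CatA(-,X)$, so that Yoneda gives $\Kl(T_0)(yX, yY) \cong \CatA(X,Y)$, i.e.\ a full and faithful assignment $X \mapsto yX$.

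The monad $T_0$ is affine because $T_0(1) \cong \CatA(-,1) \cong 1$, using that $I=1$ is terminal in $\CatA$; and it is commutative because its strength—sending a pair of a deterministic $a:Z\to X$ and an arbitrary $b:Z\to Y$ to the joint morphism $(a\otimes b)\circ\Delta_Z : Z \to X\otimes Y$—satisfies the two-sided interchange precisely by the symmetry of $\otimes$ and the cocommutativity of the copy maps in $\CatA$.

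The essential difficulty is that this embedding does not yet preserve coproducts, since the Yoneda embedding does not send $X+Y$ to $yX+yY$. To repair this I would replace $[\CatADet\op,\Set]$ by the Grothendieck topos $\CatC := \mathrm{Sh}(\CatADet,\tau)$ of sheaves for the coverage $\tau$ generated by the finite-coproduct cocones $\{X \to X+Y \leftarrow Y\}$ together with the empty cover of $0$. The key observation that makes this work is that, because coproducts in $\CatA$ are \emph{genuine} coproducts (distributivity), each presheaf $\CatA(-,Y)$ already satisfies the sheaf condition—$\CatA(X+Y, Z) \cong \CatA(X,Z)\times\CatA(Y,Z)$ and $\CatA(0,Z)\cong 1$—so these hom-presheaves are unchanged by sheafification and the homset computation above survives. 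The sheaf topos $\CatC$ is distributive, the monad $T_0$ descends to a commutative affine strong monad $T$ on $\CatC$ (the completion methods of the cited works handle the monoidal and coproduct bookkeeping), and the sheafified Yoneda $\ell y : \CatADet \to \CatC$ preserves finite products (sheafification is left exact) while, by construction of $\tau$, it sends the chosen finite coproducts of $\CatADet$ to coproducts in $\CatC$.

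Finally I would assemble the pieces. By Proposition~\ref{prop:distr-affine-monad}, $\Kl(T)$ is a distributive Markov category, with monoidal structure given by the cartesian product of $\CatC$. The functor $\CatA \to \Kl(T)$, $X \mapsto \ell y X$, is full and faithful by the homset computation; it preserves the tensor because $X\otimes Y$ is the product of $X$ and $Y$ in $\CatADet$ and $\ell y$ preserves products; it preserves comonoids because the copy maps are deterministic and hence lie in the image of $J:\CatC\to\Kl(T)$; and it preserves sums by the sheaf refinement. I expect the main obstacle to be exactly this coproduct preservation—forcing Yoneda to respect the chosen sums requires the passage to sheaves and the verification that the hom-presheaves are already sheaves—together with the bookkeeping needed to confirm that the strength of $T$ really is commutative, which is precisely where the symmetric monoidal and comonoid structure of $\CatA$ is used.
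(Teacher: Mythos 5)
Your proposal has the same skeleton as the paper's proof: pass to the wide subcategory $\CatADet$ of deterministic morphisms (which is distributive, with $\otimes$ becoming a genuine product there), complete it to a presheaf-style category in which the chosen coproducts become honest coproducts, induce a monad from the identity-on-objects inclusion $J\colon\CatADet\to\CatA$, and recover $\CatA(X,Y)$ on representables by Yoneda. Your sheaf category is, in intent, exactly the paper's $\CatC=\FP(\CatADet\op,\Set)$ of finite-product-preserving presheaves: the sheaf condition you impose is precisely product-preservation, and your observation that the hom-presheaves $\CatA(-,Y)$ already satisfy it is the same observation the paper relies on. (A minor caveat: a distributive category need not be extensive, so the coproduct coverage need not be pullback-stable and the sheaves for the generated topology need not coincide with product-preserving presheaves, nor need $\CatC$ be a topos; this is harmless, since nothing requires $\CatC$ to be a topos---just define $\CatC$ as $\FP(\CatADet\op,\Set)$ directly.) Where you genuinely diverge is the construction of the monad: you build $T_0$ on all of $[\CatADet\op,\Set]$ from the promonad $\CatA(-,-)$ and then restrict, whereas the paper never puts a monad on plain presheaves at all---it forms the colimit-preserving monoidal functor $J_!\colon\FP(\CatADet\op,\Set)\to\FP(\CatA\op,\Set)$ induced by $J$, obtains a right adjoint $J^*$ from totality, and takes $T=J^*J_!$, which is then automatically a strong monad on $\CatC$, with commutativity and affineness coming from the monoidality of the adjunction.

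This difference is exactly where your one genuine gap sits: the sentence asserting that ``$T_0$ descends to a commutative affine strong monad on $\CatC$'' is where the real work lies, and neither Hermida--Tennent nor Hu--Tholen supplies it. Concretely, you must show that if $F$ sends the chosen coproducts of $\CatADet$ to products, then so does $T_0(F)=\int^{X}\CatA(-,X)\times F(X)$, i.e.\ that the canonical map $T_0(F)(c+c')\to T_0(F)(c)\times T_0(F)(c')$ is a bijection. This is true, but it needs an argument: surjectivity merges a pair $(f\colon c\to X,\xi)$, $(g\colon c'\to Y,\zeta)$ into $(f+g\colon c+c'\to X+Y,(\xi,\zeta))$ using $F(X+Y)\cong F(X)\times F(Y)$; injectivity uses that every $(h\colon c+c'\to X,\theta)$ is identified with a ``split'' element via the deterministic codiagonal $X+X\to X$, and that coend zig-zags in each component lift along maps of the form $u+\mathrm{id}$. (Both steps use that copairings of deterministic maps are deterministic, which you assert and which is correct.) Without this, your $T$ is undefined and the commutativity and affineness claims have nothing to attach to; the paper's $J^*J_!$ construction is engineered precisely so that no such descent has to be checked. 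One further small inaccuracy: commutativity of $T_0$ on representables is really just the interchange law (bifunctoriality of $\otimes$), i.e.\ $(\mathrm{id}\otimes g)\circ(f\otimes\mathrm{id})=(f\otimes\mathrm{id})\circ(\mathrm{id}\otimes g)$, rather than symmetry or cocommutativity of the copy maps, and it must then be extended from representables to all of $\CatC$ by cocontinuity, which deserves at least a sentence.
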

\begin{proof}[Proof notes]
  Our proof is essentially a recasting of \cite[\S 7]{power-universal} to this different
  situation, as follows. 
  
  Let $\CatADet$ be the wide subcategory of $\CatA$ comprising the
  deterministic morphisms, and write $J:\CatADet\to\CatA$ for the
  identity-on-objects inclusion functor. Note that $\CatADet$ is a distributive category. We would
  like to exhibit $\CatA$ as the Kleisli category for a monad on
  $\CatADet$, but this might not be possible: intuitively, $\CatADet$
  might be too small for the monad to exist. Instead, we first embed
  $\CatADet$ in a larger category~$\CatC$ and construct a monad on $\CatC$.

  The main construction in our proof is the idea that if $\CatX$ is a
  small distributive monoidal category, then the category
  $\FP(\CatX\op,\Set)$
  of finite-product-preserving functors is such that
  \begin{itemize}
  \item   $\FP(\CatX\op,\Set)$ is cocomplete and moreover total (\cite{STREET1978350}) as a
    category;
  \item $\FP(\CatX\op,\Set)$ admits a distributive monoidal structure;
  \item   the Yoneda embedding $\CatX\to[\CatX\op,\Set]$, which is
    full and faithful, factors through
    $\FP(\CatX\op,\Set)$,
    and this embedding $\CatX\to \FP(\CatX\op,\Set)$ preserves finite sums and is
    strongly monoidal;
  \item  the Yoneda embedding exhibits $\FP(\CatX\op,\Set)$ as a free
    colimit completion of $\CatX$ as a monoidal category that already has
    finite coproducts.
  \end{itemize}

  So we let $\CatC=\FP(\CatADet\op,\Set)$ comprise the
  finite-product-preserving functors $\CatADet\op\to \Set$.
  This is a distributive category. 
  To get a monad on $\CatC$, we note that since
  $\FP(\CatA\op,\Set)$ has finite coproducts and
  $\CatADet\to \CatA\to \FP(\CatA\op,\Set)$  preserves finite
  coproducts and is monoidal, the
  monoidal structure
  induces a canonical colimit-preserving monoidal functor
  $J_!:\FP(\CatADet\op,\Set)\to \FP(\CatA\op,\Set)$.
  Any colimit-preserving functor~$J_!$ out of a total category 
  has a right adjoint~$J^*$, and hence a monoidal monad~$(J^*J_!)$ is induced on
  $\CatC$. 
  
  It remains for us to check that the embedding $\CatA\to
  \FP(\CatA\op,\Set)$ factors through the comparison functor
  $\Kl(J^*J_!)\to \FP(\CatA\op,\Set)$, which follows from the fact
  that $J:\CatADet\to\CatA$ is identity on objects.
\end{proof}
As an aside, we note that, although our simple language in Section~\ref{sec:generic-ppl} did not include higher-order functions, the category $\CatC$ constructed in the proof of Proposition~\ref{prop:markov-embed-kleisli} is cartesian closed, and since the embedding is full and faithful, this shows that higher-order functions would be a conservative extension of our language. Indeed, this kind of conservativity result was part of the motivation of~\cite{power-universal}.
For the same reason, inductive types (lists, and so on) would also be a conservative extension. 
We leave conservativity with other language features for future work. Recursion in probabilistic programming is still under investigation~\cite{DBLP:conf/lics/JiaLMZ21,DBLP:conf/lics/MatacheMS22,DBLP:journals/pacmpl/VakarKS19,DBLP:journals/corr/abs-2106-16190,DBLP:journals/pacmpl/EhrhardPT18}; there is also the question of conservativity with respect to combining Markov categories, e.g. combining real number distributions (\eqref{eqn:tnormal}--\eqref{eqn:add}) with graph programming (\eqref{eqn:intro-new}--\eqref{eqn:intro-edge}).

\subsection{Bernoulli Bases, Numerals and Observation}\label{sec:bernoulli-base}
Although an interface may have different type constants, it will
always have the `numeral' types, sometimes called `finite' types:
\[
0\quad  \tunit \quad \tbool=\tunit+\tunit \quad \tunit+\tunit+\tunit\quad\dots
\]
For probabilistic programming languages, there is a clear expectation
of what will happen when we run a program of type $\tbool$: it will
randomly produce either $\ttrue$ or $\tfalse$, each with some
probability. Similarly for other
numeral types. For type constants, we might not have evident notions of
observation or expected outcomes. But for numeral types, it should be
routine. We now make this precise via the notion of Bernoulli base. 

On the semantic side, distributive Markov categories will always have `numeral' objects
\[
  0\quad 1\quad 2\defeq1+1\quad 3\defeq 1+1+1\quad\dots
\]
For any type~$A$ formed without type constants, and any Markov
category, we have
that $\sem A\cong n$ for some numeral object.
Any equational theory for the programming language induces in
particular an equational theory for the sub-language without any type
constants.

\begin{proposition}
  For any distributive Markov category $\CatA$, let $\CatA_\NN$ be the category whose objects are natural numbers, and where the morphisms are the morphisms in $\CatA$ between the corresponding numeral objects.
  This
  is again a distributive Markov category.
\end{proposition}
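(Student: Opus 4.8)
The plan is to realize $\CatA_\NN$ as an equivalent copy of a full sub-distributive-Markov-category of $\CatA$, obtained by restricting to the numeral objects. Write $\underline n$ for the $n$-th numeral object of $\CatA$, so that $\underline 0 = 0$, $\underline 1 = I = 1$, and $\underline{n+1} \defeq \underline n + 1$. By definition $\CatA_\NN(m,n) = \CatA(\underline m, \underline n)$, so the assignment $n \mapsto \underline n$ extends to a fully faithful functor $K\colon \CatA_\NN \to \CatA$ whose essential image is the full subcategory $\mathcal N \subseteq \CatA$ spanned by the numerals. Since $K$ is fully faithful and essentially surjective onto $\mathcal N$, it is an equivalence $\CatA_\NN \simeq \mathcal N$, and it therefore suffices to show that $\mathcal N$ is closed (up to isomorphism) under all the structure of Definition~\ref{def:distr-markov}, so that this structure transports back along $K$.

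First I would check closure under the chosen finite coproducts: associativity of coproducts gives canonical isomorphisms $\underline m + \underline n \cong \underline{m+n}$ with $\underline 0 = 0$ the initial object, and the injections between numerals are the injections of $\CatA$, hence deterministic by hypothesis. Next, for the monoidal structure, the unit $I = \underline 1$ is itself a numeral, and an induction on $m$ using distributivity and the unit law gives
\[
\underline{m+1}\otimes\underline n = (\underline m + 1)\otimes \underline n \cong \underline m\otimes\underline n + (1\otimes\underline n) \cong \underline{mn} + \underline n \cong \underline{mn+n} = \underline{(m+1)n},
\]
with base case $\underline 0 \otimes \underline n = 0\otimes\underline n \cong 0$ coming from the distributivity isomorphism $0\to 0\otimes Z$; hence $\underline m \otimes \underline n \cong \underline{mn}$ for all $m,n$. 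Finally, every object of $\CatA$, and in particular every numeral, already carries a comonoid $\Delta$ compatible with $\otimes$, which is simply inherited.

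To finish, I would transport the distributive Markov structure of $\mathcal N$ across $K$. Concretely, on objects one sets $m \otimes n \defeq mn$ and $m + n \defeq m+n$, with units $\underline 1$ and $\underline 0$ (so that, pleasantly, the object-level operations are strictly associative and unital); the associativity, unit, symmetry, comonoid, and distributivity morphisms of $\CatA_\NN$ are then defined by conjugating those of $\CatA$ with the closure isomorphisms $\underline m \otimes \underline n \cong \underline{mn}$ and $\underline m + \underline n \cong \underline{m+n}$ established above. All axioms of Definition~\ref{def:distr-markov} — coherence of the symmetric monoidal structure, terminality of $I$, compatibility of comonoids with $\otimes$, invertibility of the distributivity maps, and determinism of the injections — hold in $\CatA_\NN$ precisely because they hold in $\mathcal N \subseteq \CatA$ and are preserved by the equivalence.

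The only step with genuine content is the closure of the numerals under $\otimes$, i.e.\ the inductive computation $\underline m \otimes \underline n \cong \underline{mn}$, which is exactly where distributivity is indispensable; everything else is inheritance from $\CatA$ and routine transport of structure along an equivalence. Accordingly, I expect the main (and only mild) obstacle to be bookkeeping: ensuring the closure isomorphisms are applied coherently when conjugating the structure morphisms, a task that the equivalence $K$ discharges automatically.
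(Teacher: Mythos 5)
Your proposal is correct. The paper states this proposition without any proof, treating it as routine, and your argument — identifying $\CatA_\NN$ with the full subcategory of $\CatA$ spanned by the numerals, proving closure under coproducts and, via the distributivity isomorphisms, the key closure $\underline{m}\otimes\underline{n}\cong\underline{mn}$, and then transporting the structure along the fully faithful comparison functor — is precisely the routine argument the authors implicitly rely on (your only gloss is that the injections $\underline{m}\to\underline{m+n}$ are composites of chosen injections with canonical coproduct isomorphisms rather than chosen injections themselves, but these are deterministic, being copairings of deterministic maps, so the transported injections are deterministic as required).
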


\begin{example}\label{ex:bernoulli-base}\begin{enumerate}
  \item $\FinSet_\NN = \Set_\NN$ is equivalent to $\FinSet$ as a category.
  \item For the finite distributions and the Giry monad (\S\ref{sec:distr-monad}--\ref{sec:probthy}), $\Kl(\DistM)_\NN\simeq \Kl(\Giry)_\NN\simeq \FinStoch$.\end{enumerate}
\end{example}

Recall that a functor is \emph{faithful} if it is injective on hom-sets.
\newcommand{\faithfulfr}{\rightarrowtail}
\begin{definition}
  A \emph{Bernoulli base} for a distributive Markov category $\CatA$
  is a faithful distributive Markov functor $\bbase:\CatA_\NN\faithfulfr \FinStoch$.
\end{definition}
Thus, for any distributive Markov category with a Bernoulli base, 
for any closed term $\vdash t : A$ of numeral type ($\sem A=n$), we
can regard its interpretation $\sem t:1\to n$ as nothing but 
a probability distribution $\bbase(\sem t)$ on $n$ outcomes. This is the case even if
$t$ uses term constants and has intermediate subterms using type
constants. 

\begin{example}
  All the examples seen so far can be given Bernoulli bases. In fact, for $\FinStoch$, $\Kl(\DistM)$ and $\Kl(\Giry)$,
  the functor $\bbase:\CatA_\NN\faithfulfr \FinStoch$ is an isomorphism of distributive Markov categories.
\end{example}
When $\bbase$ is an isomorphism of categories, that means that \emph{all} the finite probabilities are present in~$\CatA$. This is slightly stronger than we need in general. 
For instance, when $\CatA=\FinSet$, there is a unique Bernoulli base $\bbase:\FinSet_\NN\faithfulfr\FinStoch$, taking a function to a $0/1$-valued matrix,
but it is not full.
We could also consider variations on $\FinStoch$. For example,
consider the subcategory $\mathbf{Fin\mathbb{Q}Stoch}$ of $\FinStoch$ where the matrices are rational-valued;
this has a Bernoulli base that is not an isomorphism.

\subsection{Quotients of Distributive Markov Categories}\label{sec:quotients}
\newcommand{\Ctx}{\mathcal{C}}

We provide a new, general method for constructing a Bernoulli-based Markov
category out of a distributive Markov category.
Our construction is a categorical formulation of the notion of
contextual equivalence.

Recall that, in general, contextual equivalence for a programming language starts
with a notion of basic observation for closed programs at ground
types. We then say that programs $\Gamma\vdash t,u:A$ at other types
are \emph{contextually equivalent} if for every context $\Ctx$ with $\vdash\Ctx[t],\Ctx[u]:n$, for some ground type $n$,
we have that $\Ctx[t]$ and $\Ctx[u]$ satisfy the same observations. 
In the categorical setting, the notion of observation is given by a
distributive Markov functor $\CatA_\NN\to\FinStoch$, and the notion of
context~$\Ctx$ is replaced by suitable morphisms ($h$, $k$ below). 
We now introduce a quotient construction that will be key in showing that every graphon arises from a distributive Markov category (Corollary~\ref{corollary:graphon-markov}), via Theorem~\ref{thm:eq-theory-to-graphon}.
We note that this is a general new method for building Markov categories.

\begin{proposition}\label{prop:quotient}
Let $\CatA$ be a distributive Markov category, and let $\bbase\colon
\CatA_\NN\to \FinStoch$ be a distributive Markov functor.
Suppose that for every object $X\in\CatA$, either $X=0$ or there
exists a morphism $1\to X$. 
Then, there is a distributive Markov category $\CatAp$ with a Bernoulli base, equipped with 
a distributive Markov functor $\CatA\to \CatAp$ and a factorization of distributive Markov functors
$\bbase=\CatA_\NN\to (\CatAp)_\NN\faithfulfr \FinStoch$.
\end{proposition}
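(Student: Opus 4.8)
The plan is to realize $\CatAp$ as the quotient of $\CatA$ by the contextual equivalence cut out by the observation functor $\bbase$, and then check that this quotient has exactly the asked-for structure. Concretely, for $f,g\colon X\to Y$ in $\CatA$ I declare $f\sim g$ iff for every object $Z$ (a \emph{frame}), all numerals $m',n'$, and all $h\colon m'\to X\otimes Z$ and $k\colon Y\otimes Z\to n'$, the numeral-to-numeral morphisms $k\circ(f\otimes Z)\circ h$ and $k\circ(g\otimes Z)\circ h$ agree after applying $\bbase$. This is the categorical rendering of contextual equivalence, with $Z$ the surrounding computation and $h,k$ the context. Taking $Z=1$, $h=\id$, $k=\id$ shows at once that $f\sim g$ implies $\bbase f=\bbase g$ on numerals.

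First I would verify that $\sim$ is a congruence, so that $\CatAp:=\CatA/{\sim}$ is a category and the identity-on-objects projection $Q\colon\CatA\to\CatAp$ is a functor. Compatibility with composition and tensor is where the frames do the work: precomposing the hole with $p$ turns a context $(h,k)$ into $((p\otimes Z)\circ h,\,k)$, postcomposing with $q$ turns it into $(h,\,k\circ(q\otimes Z))$, and tensoring the hole with $W$ is absorbed into an enlarged frame $W\otimes Z$; in each case the defining condition for the modified arrow reduces to that for the original. Compatibility with coproducts is obtained by restricting contexts along the injections. Since $Q$ is identity-on-objects and strictly preserves $\otimes$, $+$, the comonoids and the coproduct injections, all of the distributive Markov structure descends: the unit $1$ stays terminal because $\CatA(X,1)$ is already a singleton, and the distributivity and injection-determinism isomorphisms descend because functors preserve isomorphisms. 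Thus $Q$ is a strict distributive Markov functor.

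The induced Bernoulli base is the functor $(\CatAp)_\NN\to\FinStoch$ sending $[f]\mapsto\bbase f$; it is well defined precisely because $\sim$ refines $\ker\bbase$ on numerals (the trivial context), and $\bbase$ factors as $\CatA_\NN\to(\CatAp)_\NN\faithfulfr\FinStoch$ by construction. The real content, and the step I expect to be the main obstacle, is that this functor is \emph{faithful}: that $\bbase f=\bbase g$ forces $f\sim g$ for numeral $f,g\colon m\to n$.

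To prove faithfulness I would internalize the frame. Fix a context $(h,k)$; decomposing $m'=\coprod_l 1$ reduces us to states $h_l\colon 1\to m\otimes Z$. Writing $n\otimes Z\cong\coprod_{j\le n}Z$ gives $k=[k_1,\dots,k_n]$ with $k_j\colon Z\to n'$; bundling them by copying, set $K:=\langle k_1,\dots,k_n\rangle\colon Z\to N$ with $N:=(n')^{\otimes n}$ a numeral, and let $\mathrm{ev}:=[\pi_1,\dots,\pi_n]\colon n\otimes N\to n'$. A component check on the coproduct shows $k=\mathrm{ev}\circ(\id_n\otimes K)$, so by functoriality of $\otimes$
\[
  k\circ(f\otimes Z)\circ h_l \;=\; \mathrm{ev}\circ(f\otimes\id_N)\circ\sigma_l,
  \qquad \sigma_l:=(\id_m\otimes K)\circ h_l\colon 1\to m\otimes N.
\]
Now every object and arrow on the right lies in $\CatA_\NN$, so applying the distributive Markov functor $\bbase$ gives $\bbase(\mathrm{ev})\circ(\bbase f\otimes\id)\circ\bbase(\sigma_l)$, which depends on $f$ only through $\bbase f$. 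Hence $\bbase f=\bbase g$ makes every such observation agree, and since a context morphism $h$ is determined by its components $h_l$ and $\bbase$ preserves injections, this gives $f\sim g$. The hypothesis that each object is $0$ or globally pointed enters in controlling the frames: when $Z=0$ we have $m\otimes Z=0$ and the contexts are vacuous, and otherwise the existence of points lets us treat $Z$ through numeral data, so that the reduction to the numeral computation above loses no observational content. I expect the congruence bookkeeping to be routine and this frame-internalization to be the crux.
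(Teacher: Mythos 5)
Your quotient construction, the well-definedness of the induced functor $(\CatAp)_\NN\to\FinStoch$, and your faithfulness argument are sound; in fact the faithfulness step is a nice variant of the paper's. Where the paper treats only states $f,g\colon 1\to m$ (absorbing $h'\colon 1\to Z$ into $k$ to get the numeral morphism $k\circ(m\otimes h')\colon m\to n$ and invoking functoriality of $\bbase$), you handle arbitrary numeral hom-sets directly by bundling the cotuple components $k_j\colon Z\to n'$ into $K=\langle k_1,\dots,k_n\rangle\colon Z\to (n')^{\otimes n}$ via the comonoid, so that the whole observation becomes a composite of morphisms lying in $\CatA_\NN$. That works, and it needs no pointedness. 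However, there is a genuine gap at the step you dismiss as routine: congruence of $\sim$ with coproducts. You claim this "is obtained by restricting contexts along the injections," but a context for $f+f'\colon X+X'\to Y+Y'$ involves a state $h\colon 1\to (X+X')\otimes Z\cong X\otimes Z+X'\otimes Z$, and in a Markov category such a state need \emph{not} factor through either injection; in $\FinStoch$ it can be a nontrivial mixture over the two summands. So the context probes $f$ and $f'$ simultaneously, and no restriction along injections reduces the claim to the hypotheses $f\sim g$ and $f'\sim g'$.

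This is exactly the point where the hypothesis that every nonzero object has a point must enter, and it is how the paper proceeds: choosing $x\colon 1\to X$ and $x'\colon 1\to X'$, the object $X+X'$ is a retract of $X\otimes X'\otimes 2$, with section $X\otimes x'+x\otimes X'$ and retraction $X\otimes{!}+{!}\otimes X'$ (up to the distributivity isomorphism); under this retract, $f+f'$ factors as $\sim$-respecting operations around $f\otimes f'\otimes 2$, so coproduct congruence reduces to the tensor congruence you did establish (the cases $X=0$ or $X'=0$ being trivial by initiality). Relatedly, your stated use of the hypothesis is misplaced: you invoke it in the faithfulness step ("the existence of points lets us treat $Z$ through numeral data"), but your own argument there uses only $\Delta_Z$ and discarding, never a point of $Z$, and the $Z=0$ worry is vacuous. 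The pointedness hypothesis is needed for, and only for, the coproduct congruence, and without repairing that step the quotient $\CatAp$ has no coproduct structure, so the proposition's conclusion is not reached.
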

\begin{proof}
  Define an equivalence relation~$\sim$ on each hom-set $\CatA(X,Y)$,
  by $f\sim g:X\to Y$ if 
  \[
    \forall Z,n.\,\forall h:1\to X\otimes Z.\,\forall k:Y\otimes Z\to
    n.\,
    \quad
    \bbase(k\cdot (f\otimes Z)\cdot h)=
    \bbase(k\cdot (g\otimes Z)\cdot h)
    \text{ in }\FinStoch(1,n)\text.
  \]
  Informally, our equivalence relation considers all ways of generating $X$'s
  via precomposition ($h$),  all ways for testing $Y$'s via postcomposition ($k$), and all ways of combining with some ancillary data ($Z$).
  It is essential that we consider all these kinds of composition in order for the quotient category to have the categorical structure. 
  
  It is immediate that composition of morphisms respects $\sim$,
  and hence we have a category: the objects are the same as $\CatA$,
  and the morphisms are $\sim$-equivalence classes. This is our category $\CatAp$.

  It is also immediate that if $f\sim g$ and $f'\sim
  g'$ then $(f\otimes f')\sim(g\otimes g')$. Thus, $\CatAp$ is a
  monoidal category. 

  For the coproduct structure, we must show that if $f\sim g:X\to Y$ and
  $f'\sim g':X'\to Y'$ then $(f+f')\sim (g+g'):X+X'\to Y+Y'$.
  We proceed by noting that since we have morphisms
  $x:1\to X$ and $x':1\to X'$, as well as terminal morphisms $X\to 1$ and
  $X'\to 1$, we have that $X+X'$ is a retract of $X\otimes X'\otimes
  2$, with the section and retraction given by:
  \[
    X + X'\xrightarrow{X\otimes x'+x\otimes X} X\otimes X' + X\otimes
    X'\cong X\otimes X'\otimes 2
    \quad
     X\otimes X'\otimes 2\cong
    X\otimes X' + X\otimes X'
    \xrightarrow{X\otimes !+!\otimes X}
    X + X'
  \]
  Thus, by composing with this retract, it suffices to check that
  $(f\otimes f'\otimes 2)\sim (g\otimes g'\otimes 2)$,
  which we have already shown.

  The functor $\bbase:\CatA_\NN\to \FinStoch$ clearly factors through $(\CatAp)_\NN$,
  but it remains to check that the functor $(\CatAp)_\NN\to \FinStoch$ is now
  faithful (Bernoulli base). 
  So suppose that $\bbase(f)=\bbase(g)$. To show that $f\sim g :1\to
  m$, 
  we consider $h:1\to 1\otimes Z$, and $k:m\otimes Z\to n$.
  We must show that $\bbase(k\cdot (f\otimes Z)\cdot h)=\bbase(k\cdot (g\otimes Z)\cdot
  h)$.
  Since $h=1\otimes h'$, for some $h':1\to Z$,
  we have
  \begin{align*}
    \bbase(k\cdot (f\otimes Z)\cdot h)&=\bbase(k\cdot(m\otimes h')\cdot f)
    =
    \bbase(k\cdot(m\otimes h'))\cdot \bbase(f)\\
    &=
    \bbase(k\cdot(m\otimes h'))\cdot \bbase(g)
    =
    \bbase(k\cdot(m\otimes h')\cdot g)
    =
    \bbase(k\cdot (g\otimes Z)\cdot h)\text.\end{align*}
\end{proof}


\section{From program equations to graphons}\label{sec:prog-eq-to-graphons}

The graph interface for the probabilistic programming language (Ex.~\ref{ex:interfaces}(3))
does not have one fixed equational theory. Rather, we want to consider different equational theories for the language, corresponding to different implementations of the interface for the graph (see also \S\ref{sec:intro:eqthys}).
We now show how the different equational theories for the graph language each
give rise to a graphon, by building adjacency matrices for finite graphs~(shown in \eqref{eqn:programgraph}). To do this, we set up the well-behaved equational theories (\S\ref{sec:bernoulli-base}), recall the connection between graphons and finite random graphs (\S\ref{sec:graphons}), and then show the main result (\S\ref{sec:program-equivalence-graphons}, Theorem~\ref{thm:eq-theory-to-graphon}).

\subsection{Graphons as Consistent and Local Random Graph Models}
\label{sec:graphons}
For all $n ≥ 1$, let $[n]$ be the set $\{1,…,n\}$. (We sometimes omit
the square brackets, when it is clear.)
A simple undirected graph $g$ with $n$ nodes can be represented by its adjacency matrix $A_g\in 2^{[n]^2}$ such that $A_g(i,i)=0$ and $A_g(i,j)= A_g(j,i)$. Henceforth, we will assume that finite graphs are simple and undirected, unless otherwise stated. 
A random finite graph, then, has a probability distribution in
$\DistM\big(2^{[n]^2}\big)$ that only assigns non-zero probability to adjacency matrices.

\begin{definition}[e.g.~{\cite[\S11.2.1]{MR3012035}}]\label{def:rgm}
  A \emph{random graph model} is a sequence of distributions of random finite graphs of the form:
\begin{equation*}
  p_1\in \DistM\big(2^{[1]^2}\big),\,
  p_2\in \DistM\big(2^{[2]^2}\big),\,
  \dots,\, 
  p_n\in \DistM\big(2^{[n]^2}\big),
  \dots 
\end{equation*}

We say such a sequence is
\begin{itemize}
\item \emph{exchangeable} if each of its elements is invariant under permuting nodes:
  for every $n$ and bijection $\sigma \colon [n]\to [n]$, we have 
  $\DistM\big(2^{(\sigma^2)}\big)(p_n)=p_n$ (where
  $2^{(\sigma^2)}: 2^{[n]^2}\to 2^{[n]^2}$ is the function that
  permutes the rows and columns according to $\sigma$;
  we are regarding $\DistM$ as a covariant functor, Def.~\ref{def:distribution-monad}, and $2^{(-)}$ as a contravariant functor);
\item \emph{consistent} if the sequence is related by marginals:
  for every $n$ and for the inclusion function $\iota：[n] ↪ [n+1]$,
  $\DistM\big(2^{(\iota^2)}\big)(p_{n+1})=p_n$ (where
  $2^{(\iota^2)}:2^{([n+1]^2)}\to 2^{[n]^2}$ is the evident projection);
\item \emph{local} if the subgraphs are independent:
  if $A \subseteq [n]$ and $B \subseteq [n]$ are disjoint,
  then we have an injective function ${\jmath\colon A^2 + B^2\hookrightarrow [n]^2}$, and
  $\DistM\big(2^{\jmath}\big)(p_{n})\in \DistM\big(2^{(A^2)}\times 2^{(B^2)}\big)$
  is a product measure $p_A\otimes p_B$
(where \raisebox{0pt}[0pt]{$2^{\jmath}:2^{[n]^2}\to 2^{(A^2)}\times 2^{(B^2)}$} is
the evident pairing of projections).

\end{itemize}
\end{definition}

 \begin{definition}[e.g.~\cite{MR3012035}]
        A \emph{graphon} $W$ is a symmetric measurable function $W : [0,1]^2 \to [0,1]$.
\end{definition}
\newcommand{\graphonrgm}[3]{p_{#1,#2}(A_{#3})}
Given a graphon $W$, we can generate a finite simple undirected graph
$g$ with vertex set $[n]$ by sampling $n$ points $x_1, \ldots, x_n$
uniformly from $[0,1]$ and, then, including the edge $(i, j)$ with
probability $W(x_i, x_j)$ for all $1 ≤ i, j ≤ n$. This sampling
procedure defines a distribution over finite graphs: the probability
$\graphonrgm W n g$ of the graph $g = ([n], E)$ is:
\begin{equation}\int_{[0,1]^n} 
\!\prod_{(i,j) ∈ E} \!\!\! W(x_i, x_j) 
\prod_{(i,j) ∉ E} \!\!\!\! \left(1 - W(x_i, x_j) \right)
\dd(x_1\ldots x_n)\label{eqn:pwg}\end{equation}


\begin{proposition}[\hspace{1sp}\cite{lovasz-2006}, {\cite[\S11.2]{MR3012035}}]
  \label{prop:lovasz}
  Every graphon generates an exchangeable, consistent, and local random
  graph model, by the sampling procedure of \eqref{eqn:pwg}.
  Conversely, every exchangeable, consistent, and local random graph
  model is of the form~$p_{W,n}$ for some graphon~$W$.
\end{proposition}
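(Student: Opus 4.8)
The plan is to prove the two directions separately: the forward direction (graphon $\Rightarrow$ random graph model) is a direct computation with the integral \eqref{eqn:pwg}, whereas the converse rests on the Aldous--Hoover representation theorem for exchangeable arrays.

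For the forward direction, I would fix a graphon $W$ and let $p_{W,n}$ be the distribution defined by \eqref{eqn:pwg}. \emph{Exchangeability} follows from a change of variables: relabelling the adjacency matrix by a bijection $\sigma\colon[n]\to[n]$ and then substituting $x_i\mapsto x_{\sigma(i)}$ in \eqref{eqn:pwg} leaves the integral unchanged, since the $x_i$ are i.i.d.\ uniform and $W$ is symmetric. \emph{Consistency} follows by integrating out the last coordinate $x_{n+1}$: summing \eqref{eqn:pwg} over the $2^{n}$ choices of edges incident to vertex $n+1$ contributes, for each potential edge $(i,n+1)$, a factor $W(x_i,x_{n+1})+(1-W(x_i,x_{n+1}))=1$, so the marginal on $[n]$ is exactly $p_{W,n}$. \emph{Locality} follows by the same ``each omitted edge contributes a factor $1$'' observation applied to every edge $(i,j)$ outside $A^2\cup B^2$: after this summation the integrand factors as a product of a function of $\{x_i:i\in A\}$ and a function of $\{x_j:j\in B\}$, and since these coordinate blocks are disjoint and the $x$'s are independent, the pushforward along $\jmath$ to $2^{(A^2)}\times 2^{(B^2)}$ splits as a product measure.

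For the converse, I would first use \emph{consistency} to glue the sequence $(p_n)_n$ into a single probability measure $\mu$ on the space $2^{\NN^2}$ of infinite graphs, via the Kolmogorov extension theorem. \emph{Exchangeability} of each $p_n$ then makes $\mu$ invariant under all finite permutations of $\NN$, i.e.\ $\mu$ presents an exchangeable random symmetric $\{0,1\}$-valued array, and \emph{locality} makes this array \emph{dissociated} (the sub-arrays indexed by disjoint vertex sets are independent). The core step is then to invoke the Aldous--Hoover representation theorem in its dissociated form: there are i.i.d.\ uniform latent labels $\xi_1,\xi_2,\dots$ and a symmetric measurable $f\colon[0,1]^2\times[0,1]\to 2$ with edge indicators $X_{ij}=f(\xi_i,\xi_j,\lambda_{ij})$ for i.i.d.\ uniform $\lambda_{ij}$; setting $W(u,v)=\int_0^1 f(u,v,w)\,\dd w$ yields a symmetric measurable $W\colon[0,1]^2\to[0,1]$ for which the edges are conditionally independent given $(\xi_i)$ with $\mathrm{Pr}(X_{ij}=1\mid\xi)=W(\xi_i,\xi_j)$. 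Matching the finite-dimensional marginals of $\mu$ against \eqref{eqn:pwg} then gives $p_n=p_{W,n}$ for all $n$.

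The main obstacle is precisely this last step. The Aldous--Hoover theorem is a deep result, and I would cite it rather than reprove it. The delicate points to track are that \emph{dissociation} is exactly what eliminates the global latent variable present in the general Aldous--Hoover representation, so that $W$ depends only on the two endpoint labels; that $W$ is only determined up to measure-preserving transformations of $[0,1]$, so the correspondence is really between random graph models and \emph{equivalence classes} of graphons (matching the ``modulo equivalence'' phrasing of Section~\ref{sec:intro:equations-to-graphons}); and that the measurability and symmetry of the resulting $W$ must be verified through the construction.
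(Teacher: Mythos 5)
Your proposal is correct, but it takes a genuinely different route from the paper, which in fact does not prove this proposition at all: the paper's ``proof'' is a citation note, remarking that all known constructions of $W$ from an exchangeable, consistent, local random graph model are highly non-trivial, that the general idea is to obtain $W$ as a kind of limit object (as in \cite[\S11.3]{lovasz-2006} and \cite{tao-graphons}), and that explicit constructions are never needed later in the paper. Your forward direction (change of variables for exchangeability; the observation that each marginalized edge contributes a factor $W(x_i,x_j)+(1-W(x_i,x_j))=1$ for both consistency and locality) is routine but complete and correct, and is more than the paper records. For the converse you take the probabilistic route---Kolmogorov extension to an infinite exchangeable random graph, locality reinterpreted as dissociation, and the dissociated form of the Aldous--Hoover representation theorem, with $W(u,v)=\int_0^1 f(u,v,w)\,\dd w$---rather than the analytic limit-object route the paper gestures at. This is the standard Diaconis--Janson correspondence between exchangeable random graphs and graph limits, and your caveats are the right ones: dissociation is exactly what eliminates the global mixing variable in the representation, and $W$ is determined only up to measure-preserving transformations, matching the ``modulo equivalence'' phrasing of Section~\ref{sec:intro:equations-to-graphons}. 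As for what each approach buys: the limit-object route constructs $W$ concretely from subgraph densities, whereas your route outsources the hard step to Aldous--Hoover; both ultimately rest on a deep cited theorem, so neither is self-contained, which is appropriate for a result of this depth.
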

\begin{proof}[Note]
 There are various methods for constructing $W$ from an exchangeable, consistent and local random graph model, however all are highly non-trivial. A general idea is that $W$ is a kind of limit object. For examples see e.g.~\cite[\S11.3]{lovasz-2006} or \cite{tao-graphons}. Fortunately though, we will not need explicit constructions in this paper. 
\end{proof}

\subsection{Theories of Program Equivalence Induce Graphons}
\label{sec:program-equivalence-graphons}
In this section we consider the instance of the generic language with the graph interface
(Ex.~\ref{ex:interfaces}(3)):
\[
  \tvertex\qquad
  \tnew : \tunit \to \tvertex\qquad
  \tedge:\tvertex\ast\tvertex\to \tbool
\]
We consider a theory of program equivalence, i.e.~a distributive Markov
category with a distinguished object $\sem\tvertex$
and morphisms $\sem \tnew :1\to \sem \tvertex$ and $\sem \tedge:\sem \tvertex\otimes
\sem \tvertex\to 1+1$.
We make two assumptions about the theory:
\begin{itemize}
\item  The graphs are simple and undirected:
\begin{equation}
  \begin{aligned}
  x\colon \tvertex \vdash \tedge(x,x)\equiv \tfalse
\qquad x,y\colon \tvertex \vdash \tedge(x,y)\equiv \tedge(y,x)
\end{aligned}\label{eqn:simplegraph}\end{equation}
and $\tedge$ is deterministic.
\item The theory is Bernoulli based (\S\ref{sec:bernoulli-base}).\end{itemize}
For each $n\in\NN$, we can build a
random graph with $n$ vertices as follows.
We consider the following  program $t_n$:
\begin{equation}\begin{aligned}
\vdash\ & \letin {x_1}{\tnew()}{
        \dots {} }\letin {x_n}{\tnew()}{}
    \begin{pmatrix}\tedge (x_1,x_1)&{\dots}&\tedge (x_1,x_n)
\\
\vdots &&{\vdots}\\
\tedge(x_n,x_1)&\dots &\tedge(x_n,x_n)\end{pmatrix}:\tbool^{(n^2)}
\end{aligned}\label{eqn:programgraph}\end{equation}
(Here we use syntactic sugar, writing a matrix instead of iteratively using pairs.)

Because the equational theory is Bernoulli-based, the
interpretation $\sem {t_n}$ induces a probability distribution $\bbase\sem{t_n}$
on~$2^{(n^2)}$. For clarity, we elide $\bbase$ in what follows, since
it is faithful. 
\begin{proposition}
  Each random matrix in \eqref{eqn:programgraph} is a random adjacency matrix, \ie a random graph. 
\end{proposition}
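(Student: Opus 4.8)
The plan is to show that the distribution $\sem{t_n}$ on $2^{(n^2)}$ is supported only on matrices $A$ satisfying the two defining conditions of an adjacency matrix of a simple undirected graph: $A(i,i)=0$ for all $i$, and $A(i,j)=A(j,i)$ for all $i,j$. Since each of these is a condition on a single entry or on a pair of entries, and since a finite intersection of probability-$1$ events again has probability $1$, it suffices to treat each $i$ (for the diagonal) and each pair $(i,j)$ (for symmetry) separately. Moreover, because the Bernoulli base $\bbase$ is faithful, each such support condition can be established by proving an appropriate equality of programs in the equational theory and reading it off the induced distribution.

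The diagonal condition is the routine part. The $(i,i)$ entry of the matrix in \eqref{eqn:programgraph} is the subterm $\tedge(x_i,x_i)$, and assumption \eqref{eqn:simplegraph} gives $x\colon\tvertex \vdash \tedge(x,x)\equiv\tfalse$. Substituting this equation into $t_n$, which is valid by congruence of the equational theory, replaces each diagonal entry by the constant $\tfalse$; hence the corresponding marginal of $\sem{t_n}$ is the Dirac distribution on $0$, so the diagonal is almost surely zero. The symmetry condition is the main obstacle, and the key point is that the equation $\tedge(x,y)\equiv\tedge(y,x)$ by itself is \emph{not} enough: it only forces the $(i,j)$ and $(j,i)$ entries to have equal marginals, whereas we need them to be almost surely equal, i.e.\ perfectly correlated. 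This is exactly where the determinism of $\tedge$ enters, paralleling the determinism example \eqref{eqn:intro:det}. Writing $e=\sem\tedge\colon \sem\tvertex\otimes\sem\tvertex\to 1+1$, the joint computation of the pair $(\tedge(x_i,x_j),\tedge(x_j,x_i))$ is interpreted by copying the pair of vertices and applying $e$ to one copy and $e\circ\mathsf{swp}$ to the other, namely the morphism $(e\otimes(e\circ\mathsf{swp}))\circ\Delta_{\sem\tvertex\otimes\sem\tvertex}$. Symmetry gives $e\circ\mathsf{swp}=e$, so this equals $(e\otimes e)\circ\Delta_{\sem\tvertex\otimes\sem\tvertex}$, which by the determinism law $(e\otimes e)\circ\Delta = \Delta\circ e$ equals $\Delta_{1+1}\circ e$. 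The latter first computes $e(x_i,x_j)$ and then copies the resulting boolean, so it is concentrated on the diagonal of $(1+1)\otimes(1+1)$; that is, the two entries are always equal.

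Combining the two facts, every matrix in the support of $\sem{t_n}$ has zero diagonal and is symmetric, and is therefore a genuine adjacency matrix, so $\sem{t_n}$ is a random graph. I expect the diagonal case to be immediate substitution, while the real work lies in the symmetry argument: one must phrase the copying and application of $e$ carefully so that the determinism law $(e\otimes e)\circ\Delta=\Delta\circ e$ applies, and then confirm that concentration on the categorical diagonal translates, under the faithful Bernoulli base $\bbase$, into almost-sure equality of the two matrix entries.
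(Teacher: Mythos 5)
Your proof is correct, and it supplies substantially more detail than the paper does: the paper's entire proof is the one-line note ``This follows from (\ref{eqn:simplegraph})''. Your elaboration is the right one, and it makes explicit a point the paper's citation slightly understates: the symmetry equation $\tedge(x,y)\equiv\tedge(y,x)$ alone gives only equality of the marginals of the $(i,j)$ and $(j,i)$ entries, whereas almost-sure equality of the two entries requires the determinism of $\tedge$ --- which the paper does assume (in the same bullet as \eqref{eqn:simplegraph}) but does not name in its proof note. Your categorical computation is exactly the right mechanism: symmetry gives $e\circ\mathsf{swp}=e$, and then determinism, $(e\otimes e)\circ\Delta_{\sem\tvertex\otimes\sem\tvertex}=\Delta_{1+1}\circ e$, collapses the two independent evaluations into a copy of a single evaluation, so the joint marginal is concentrated on the diagonal of $2\otimes 2$; combining finitely many probability-one events and using faithfulness of $\bbase$ finishes the argument. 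In short, your write-up is a correct and more informative version of the paper's proof, not a different route.
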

\begin{proof}[Proof note]This follows from (\ref{eqn:simplegraph}).\end{proof}
\begin{theorem}\label{thm:eq-theory-to-graphon}
  For any Bernoulli-based equational theory, the random graph model $(\sem {t_n})_n$ in \eqref{eqn:programgraph}
  is exchangeable, consistent, and local. Thus, the equational theory
  induces a graphon. 
\end{theorem}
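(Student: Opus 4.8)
The plan is to verify the three conditions of Definition~\ref{def:rgm} directly for the family $(\sem{t_n})_n$ and then invoke Proposition~\ref{prop:lovasz} to extract a graphon. Since the Bernoulli base $\bbase$ is a faithful distributive Markov functor preserving tensor, coproduct, comonoid and deterministic structure, it suffices to establish each condition as an equality of morphisms in $\CatA$ and then apply $\bbase$; I will elide $\bbase$ throughout, as in \eqref{eqn:programgraph}.

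The enabling first step is to factor the interpretation as $\sem{t_n} = e_n \circ \nu_n$, where $\nu_n = \sem{\tnew}^{\otimes n} : 1 \to \sem{\tvertex}^{\otimes n}$ is the independent sampling of $n$ vertices and $e_n : \sem{\tvertex}^{\otimes n} \to 2^{(n^2)}$ is the deterministic map reading off the matrix $(\sem{\tedge}(x_i,x_j))_{i,j}$. Unwinding the categorical semantics of nested $\mathsf{let}$, each vertex binding copies its context via $\Delta$ and then discards the copy when sampling the next $\tnew$; by the comonoid counit law these cancel, leaving the clean tensor $\nu_n$. That $e_n$ is deterministic follows from the determinism of $\tedge$ (assumption \eqref{eqn:simplegraph}) together with closure of deterministic morphisms under $\otimes$, $\circ$, and $\Delta$.

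With this factorization in hand, the three conditions reduce to short equational arguments. For exchangeability, fix a permutation $\sigma$ of $[n]$, write $P_\sigma$ for the symmetry permuting the tensor factors of $\sem{\tvertex}^{\otimes n}$ and $Q_\sigma = 2^{(\sigma^2)}$ for the induced permutation of matrices. Because $\nu_n$ is a tensor of identical copies of $\sem{\tnew}$, naturality of the symmetry gives $P_\sigma \circ \nu_n = \nu_n$ --- this is precisely the categorical content of commutativity of $\mathsf{let}$, \eqref{eq:let-comm}, reordering the independent $\tnew$ bindings. Combined with the equivariance $Q_\sigma \circ e_n = e_n \circ P_\sigma$ of edge-reading, we get $Q_\sigma \circ \sem{t_n} = e_n \circ P_\sigma \circ \nu_n = \sem{t_n}$. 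For consistency, the projection $2^{(\iota^2)}$ onto the top-left $n \times n$ block satisfies $2^{(\iota^2)} \circ e_{n+1} = e_n \circ (\sem{\tvertex}^{\otimes n} \otimes {!})$, so precomposing with $\nu_{n+1}$ discards the last vertex; since ${!} \circ \sem{\tnew} = \id_1$ (dropping an unused binding, \eqref{eq:let-affine}), this collapses to $e_n \circ \nu_n = \sem{t_n}$. For locality, given disjoint $A,B \subseteq [n]$ the map $2^{\jmath} \circ e_n$ reads the $A$-subgraph only from the vertices indexed by $A$ and the $B$-subgraph only from those indexed by $B$; discarding the remaining vertices by \eqref{eq:let-affine} and using that $\nu_n$ is a genuine tensor yields $(2^{\jmath}\circ e_n)\circ\nu_n = (e_A \circ \sem{\tnew}^{\otimes |A|}) \otimes (e_B \circ \sem{\tnew}^{\otimes |B|})$, which $\bbase$ (preserving $\otimes$) sends to the product distribution $p_A \otimes p_B$.

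I expect the main obstacle to be the bookkeeping of exchangeability rather than any single deep step: one must fix conventions for $2^{(\sigma^2)}$ and check that the equivariance square $Q_\sigma \circ e_n = e_n \circ P_\sigma$ genuinely commutes, i.e.\ that relabeling vertices and permuting rows and columns of the adjacency matrix agree, keeping track of the symmetry isomorphisms of the monoidal structure. The remaining subtlety is justifying the copy-and-discard cancellation underlying the factorization $\sem{t_n} = e_n \circ \nu_n$, which is where the comonoid and affineness laws do the real work; once these are in place, the three verifications are routine and Proposition~\ref{prop:lovasz} delivers the graphon.
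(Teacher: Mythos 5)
Your proposal is correct, and it establishes exactly the same three properties before invoking Proposition~\ref{prop:lovasz}, but it runs the verification at the semantic level rather than the syntactic one, and this is a genuine organizational difference. The paper's proof stays inside the programming language: exchangeability is the program transformation justified by \eqref{eq:let-comm}, consistency is \eqref{eq:let-affine}, and locality is a fairly long chain of rewrites using \eqref{eq:moggi-let-pair}, \eqref{eq:let-assoc}, \eqref{eq:let-comm} and, crucially, the substitution law \eqref{cd:let_val} applied to the (deterministic) matrix of edge readings. You instead prove once and for all the factorization $\sem{t_n} = e_n \circ \nu_n$ with $\nu_n = \sem{\tnew}^{\otimes n}$ a tensor state and $e_n$ deterministic, using the comonoid counit laws to cancel the copy--discard pairs produced by the $\mathsf{let}$ semantics; after that, exchangeability is naturality of the symmetry, consistency is terminality of the unit ($! \circ \sem{\tnew} = \id_1$), and locality follows because post-composing a tuple with a projection discards unused components (terminality again), $(\pi_A \otimes \pi_B)\Delta = \pi_{A,B}$ for disjoint $A,B$ (counit laws), and a tensor state restricted to disjoint wire sets splits as a tensor of states, which $\bbase$ (being strict monoidal) sends to a product distribution in $\FinStoch$. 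What your route buys is a visibly shorter locality argument: the explicit appeal to \eqref{cd:let_val} disappears because the factorization isolates all the copying inside the deterministic map $e_n$ before the random state $\nu_n$ ever enters. What the paper's route buys is that each step is an instance of a named program equation, so the proof doubles as a demonstration of equational reasoning in the language itself, which matches the paper's framing of implementations as equational theories. The two bookkeeping points you flag (the $\sigma$ versus $\sigma^{-1}$ convention in the equivariance square, and the justification of the copy--discard cancellation) are real but harmless: the former is absorbed by quantifying over all permutations, and the latter is exactly the comonoid counit law, as you say.
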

\begin{proof}[Proof] We denote the matrix in~\eqref{eqn:programgraph} by $(\tedge (x_i,x_j))_{i, j ∈ [n]}$.
\paragraph{Exchangeability} We show that the distribution $\sem {t_n}$ is invariant under relabeling the nodes. By commutativity of the $\mathsf{let}$ construct~\eqref{eq:let-comm}, the program
 \begin{align*}
 & t_n^σ ≝ 
 \letin {x_{σ^{-1}(1)}}{\tnew()}{\ldots\letin {x_{σ^{-1}(n)}}{\tnew()}{}}
(\tedge (x_i,x_j))_{i, j \in [n]}
 \end{align*}
 satisfies $\sem{t_n^σ} = \sem{t_n}$. Hence, $\DistM(2^{\sigma^2})(\sem{t_n}) = \sem{t_n^σ} = \sem{t_n}$, for every $n$ and bijection $σ：[n] → [n]$.

\paragraph{Consistency}
    We define a macro $\subm_I$ in the graph programming language to extract a submatrix at the index set $I ⊆ [n]$: we have the (definitional) equality
    $$\subm_I((a_{i,j})_{i, j ∈ [n]}) ≝ (a_{i,j})_{i, j ∈ I} \quad \text{for } I ⊆ [n].$$
    We need to show that, if we delete the last node from a graph sampled from $\sem{t_{n+1}}$, the resulting graph has distribution $\sem{t_n}$. 
    This amounts to the affineness property (\ref{eq:let-affine}), as follows. 
    Let $g \sim \sem{t_{n+1}}$ be a random graph, and let $g' ≝ g\restrict{[n]}$ be the graph obtained by deleting the last node from $g$. Then clearly, the adjacency matrix of $g'$ is the adjacency matrix of $g$ where the last row and column have been removed, \ie $g'$ is sampled from the interpretation of the program:
    \begin{align*}
    t' &≝ \letin {x_1}{\tnew()}{\dots {} }\letin {x_n}{\tnew()}{}
         \letin {x_{n+1}}{\tnew()}{} \subm_{[n]}\big((\tedge(x_i,x_j))_{i,j ∈ [n+1]}\big)\\
    &\equiv \letin {x_1}{\tnew()}{\dots {}}\letin {x_n}{\tnew()}{}
         \letin {x_{n+1}}{\tnew()}{} (\tedge(x_i,x_j))_{i,j ∈ [n]}\\
            &\equiv \letin {x_1}{\tnew()}{\dots {}}\letin {x_n}{\tnew()}{} 
    (\tedge(x_i,x_j))_{i,j∈[n]} \hspace{3em} \text{(by \eqref{eq:let-affine})}\\ 
    &\equiv t_n.
    \end{align*}

\paragraph{Locality}
  Without loss of generality (by exchangeability and consistency), we need to show that for every random graph $g \sim \sem{t_n}$ and $1 < k < n$, the subgraphs $g_{A_k}, \, g_{B_k}$ respectively induced by the sets $A_k ≝ [k]$ and $B_k ≝ \{k+1,… , n\}$ are independent as random variables. 
  Let $\jmath$ be the injection $\jmath：A_k^2+B_k^2 ↪ n^2$, and $g' \sim \DistM(2^{\jmath})(\sem{t_n}) ∈ \DistM(2^{(A_k^2)}×2^{(B_k^2)})$. We want to show that $g'$ and $(g_{A_k}, g_{B_k})\sim \sem{t_k} ⊗ \sem{t_{n-k}}$ (by consistency) are equal in distribution.
  Modulo $α$-renaming,
  $(g_{A_k}, g_{B_k})$ is sampled from the interpretation of the
  program:
  \begingroup
  \allowdisplaybreaks
  \begin{align*}
  t' &≝ \big(\letin {x_1}{\tnew()}{\dots{} }\letin {x_k}{\tnew()}{}(\tedge(x_i,x_j))_{i,j ∈ [k]}, \\*
  & \hspace*{2em} \letin {x_{k+1}}{\tnew()}{\dots{} }\letin {x_n}{\tnew()}{} (\tedge(x_i,x_j))_{k+1 ≤ i,j ≤ n}\big)\\
  & ≡ \letin {u_1}{(\letin {x_1}{\tnew()}{\dots{} }\letin {x_k}{\tnew()}{}
    (\tedge(x_i,x_j))_{i,j ∈ A_k})}{}\\*
  & \phantom{{} \equiv {}} \letin {u_2}{(\letin {x_{k+1}}{\tnew()}{\dots{}}\letin {x_n}{\tnew()}{}(\tedge(x_i,x_j))_{i,j ∈ B_k})}{}(u_1, u_2) &&\text{(by \eqref{eq:moggi-let-pair})}\\
  &≡ \letin {x_1}{\tnew()}{\dots{} }\letin {x_k}{\tnew()}{}\letin {x_{k+1}}{\tnew()}{\dots{}}\letin {x_n}{\tnew()}{}&&\text{(\eqref{eq:let-assoc},\eqref{eq:let-comm})}\\*
  & \hspace*{1.3em} \letin {u_1}{\subm_{A_k}\big((\tedge(x_i,x_j))_{i,j ∈ [n]}\big)}{} 
\letin{u_2}{\subm_{B_k}\big((\tedge(x_i,x_j))_{i,j ∈ [n]}\big)}{} 
    (u_1, u_2)\hspace{-4cm}
      \\
  &≡ \letin {x_1}{\tnew()}{\dots{}}\letin {x_n}{\tnew()}{}\\*
  & \hspace*{1.3em} \letin {t}{(\tedge(x_i,x_j))_{i,j ∈ [n]}}{} \letin {u_1}{\subm_{A_k}(t)}{}\letin {u_2}{\subm_{B_k}(t)}{}(u_1, u_2) &&\text{(by \eqref{cd:let_val})}\\
  &≡ \letin {x_1}{\tnew()}{\dots{}}\letin {x_n}{\tnew()}{}\\*
  & \hspace*{1.3em} \letin {t}{(\tedge(x_i,x_j))_{i,j ∈ [n]}}{} 
\big(\subm_{A_k}(t),\, \subm_{B_k}(t)\big) &&\text{(by \eqref{eq:moggi-let-pair})}
  \\
  &≡ \letin {t}{\big(\letin {x_1}{\tnew()}{\dots{}} 
    \letin {x_n}{\tnew()}{} (\tedge(x_i,x_j))_{i,j ∈ [n]}\big)}{}\\*
& \hspace*{1.6em}     \big(\subm_{A_k}(t), \, \subm_{B_k}(t)) &&\text{(by \eqref{eq:let-assoc})}
  \end{align*}
  \endgroup
  and $g' \sim \DistM(2^{\jmath})(\sem{t_n})$ is indeed sampled from the
  interpretation of the latter program, 
  which yields the result.
\end{proof}


\section{From graphons to program equations}\label{sec:graphons-to-equational-theories}

\newcommand{\CatCLG}{\mathbf{RGM}}
\newcommand{\Inj}{\mathbf{FinSetInj}}\label{sec:rgm-cat}
\newcommand{\Nat}{\mathrm{Nat}}

In Section~\ref{sec:prog-eq-to-graphons}, we showed how a distributive Markov category modelling the graph interface (Ex.~\ref{ex:interfaces}(3)) gives rise to a graphon.
In this section, we establish a converse: every graphon arises in this
way (Corollary~\ref{corollary:graphon-markov}). 
Theorem~\ref{thm:Markov-functor-graphon} will establish slightly more:
there is a `generic' distributive Markov category
(\S\ref{sec:generic-dmc}) modelling the graph interface whose
Bernoulli-based quotients are in precise correspondence with graphons
(\S\ref{sec:bernoulli-base-graphon}). This approach also suggests an
operational way of implementing the graph interface for any graphon (\S\ref{sec:abstract-op-sem}).

\subsection{A Generic Distributive Markov Category for the Graph Interface}
\label{sec:generic-dmc}
\newcommand{\CatG}{\mathbf{G}}
\newcommand{\GenericM}{\mathcal{T}}

We construct this generic category in two steps. We first create a distributive Markov category, actually a
distributive category, $\Fam(\CatG\op)$, that supports $(\tvertex,\tedge)$. We then add
$\tnew$ using the monoidal indeterminates method
of~\cite{hermida-tennent}.

\subsubsection{Step 1: A Distributive Category with $\tedge$}\label{sec:edge-cat}
\newcommand{\vertices}[1]{V_{#1}}
\newcommand{\edges}[1]{E_{#1}}

We first define a distributive category that supports
$(\tvertex,\tedge)$.
Let $\CatG$ be the category of finite graphs and functions that
preserve and reflect the edge relation. That is, a morphism $f:g\to
g'$ is a function $f:\vertices g \to \vertices g'$ such that for all
$v,w\in \vertices g$ we have $\edges g(v,w)$ if and only if $\edges
g(f(v),f(w))$. 

Recall (e.g.~\cite{hu-tholen}) that the free finite coproduct completion of a category
$\CatA$, $\Fam(\CatA)$ is given as follows. The objects of
$\Fam(\CatA)$ are sequences
$(X_1\dots X_n)$ of objects of $\CatA$, and the morphisms
$(X_1\dots X_m)\to (Y_1\dots Y_n)$ are pairs $(f,\{f_i\}_{i=1}^m)$
of a function $f:m\to n$ and a sequence of morphisms ${f_1:X_{1}\to Y_{f(1)}},\dots, {f_m:X_{m} \to
Y_{f(m)}}$ in $\CatA$.  

We consider the category $\Fam(\CatG\op)$. Let $\sem\tvertex=(1)$, the
singleton sequence comprising the one-vertex graph. 
\begin{proposition}\label{prop:FamG}
  \begin{enumerate}
  \item The free coproduct completion $\Fam(\CatG\op)$ is a distributive
    category, with the product $\sem\tvertex^n$ being the sequence of all
    graphs with $n$ vertices.
    In particular, $\sem\tvertex^2$ is a sequence with two components,
    the complete graph and the edgeless graph with two vertices. 
  \item Let $\tedge: \sem\tvertex\times \sem\tvertex\to 1+1$
    be the morphism $(\id,\{!,!\})$, intuitively returning true for
    the edge, and false for the edgeless graph. Here the terminal object $1$
    of $\Fam(\CatG\op)$ is the singleton tuple of the empty graph.
    This interpretation satisfies \eqref{eqn:simplegraph}. 
  \end{enumerate}
\end{proposition}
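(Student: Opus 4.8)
The plan is to handle the two parts in turn, with essentially all the work concentrated in identifying the finite products of $\Fam(\CatG\op)$. For part~(1), I would first recall that the free coproduct completion $\Fam(\CatB)$ of \emph{any} category is extensive: its coproducts are disjoint unions of finite index sets, hence disjoint and universal (\cite{hu-tholen,clw}). Since an extensive category equipped with finite products is automatically distributive -- universality of coproducts forces $(-)\times Z$ to preserve them, which is exactly the distributivity condition of Definition~\ref{def:distr-markov} -- this reduces part~(1) to exhibiting finite products in $\Fam(\CatG\op)$ and checking their stated form. The terminal object is the singleton family on the terminal object of $\CatG\op$, i.e.\ the initial object of $\CatG$, which is the empty graph (the empty function into any graph vacuously preserves and reflects edges).

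The crux of part~(1), and the main obstacle of the whole proposition, is binary products. Here the subtle point is that $\CatG\op$ does \emph{not} itself have binary products, so the product of two singleton families cannot be a singleton; rather, $\Fam$ supplies the missing products as nontrivial finite coproducts, and this ``case split on edge presence'' is precisely what produces the family of all graphs. Concretely, using $\CatG\op(Z,X)=\CatG(X,Z)$, I would establish a bijection, natural in $Z$,
\[
  \CatG(X,Z)\times\CatG(Y,Z)\;\cong\;\coprod_{H}\,\CatG(H,Z),
\]
where $H$ ranges over the finitely many graphs on the vertex set $\vertices{X}\sqcup\vertices{Y}$ that restrict to $X$ on $\vertices{X}$ and to $Y$ on $\vertices{Y}$ (so $H$ differs only in the cross-edges between the two parts). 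The map sends a pair $(a\colon X\to Z,\ b\colon Y\to Z)$ to the joint function $c=a\sqcup b\colon \vertices{X}\sqcup\vertices{Y}\to\vertices{Z}$, together with the unique graph $H$ whose edges are pulled back from $Z$ along $c$; preservation-and-reflection of $a,b$ guarantees $H$ restricts correctly. This exhibits the product of two singleton families as $\coprod_H(H)$; universality of coproducts then yields binary products of arbitrary families, and with the terminal object all finite products exist. Specializing to $X=Y$ the one-vertex graph gives the two graphs $K_2$ and $\overline{K_2}$ on two vertices; and since each function $[n]\to\vertices{Z}$ determines a unique pullback graph on $[n]$, we get $\vertices{Z}^{\,n}\cong\coprod_{G}\CatG(G,Z)$ over all graphs $G$ on $[n]$, identifying $\sem\tvertex^n$ with the family of all graphs on the vertex set $[n]$.

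For part~(2), determinism of $\tedge$ is immediate once part~(1) is in hand: $\Fam(\CatG\op)$ is a genuine cartesian category, its comonoids are the diagonals, and every morphism is deterministic by naturality of $\Delta$. For the two equations of \eqref{eqn:simplegraph} I would compute directly. Symmetry $\tedge(x,y)\equiv\tedge(y,x)$ reduces to $\tedge\cdot\mathsf{swp}=\tedge$: since both morphism-components of $\tedge=(\id,\{!,!\})$ are the unique maps into the terminal (empty) graph, and $\mathsf{swp}$ acts only by automorphisms of $K_2$ and $\overline{K_2}$ that fix the two summands, precomposition alters neither the index map nor the (uniquely determined) components. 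For $\tedge(x,x)\equiv\tfalse$ the key observation is that there is no $\CatG$-morphism from $K_2$ to the one-vertex graph -- a single loopless vertex cannot receive the edge of $K_2$ -- so the (unique) morphism $\sem\tvertex\to\sem\tvertex^2$, which must be $\Delta$, factors through the edgeless component $\overline{K_2}$. Composing with $\tedge$ then lands in the second (``false'') summand of $1+1$ via a map into the terminal graph, which is exactly $\tfalse=\inj_2\cdot{!}$. I expect the product computation of part~(1) to be the only genuine obstacle; the distributivity-from-extensivity step and the equational checks of part~(2) are routine once the product structure is pinned down.
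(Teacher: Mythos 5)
Your proposal is correct and takes essentially the same approach as the paper: the paper's proof of part~(1) cites \cite{hu-tholen} for the fact that limits in $\Fam(\CatG\op)$ amount to multi-colimits in $\CatG$, and the hom-set bijection you construct (pulling cross-edges back along the joint map) is precisely a direct, self-contained proof of the key instance used there, namely that the family of all graphs on $n$ vertices is a multi-coproduct of the one-vertex graph and hence a product in $\Fam(\CatG\op)$. The paper dispatches part~(2) as ``a quick calculation'' plus the remark that all morphisms of $\Fam(\CatG\op)$ are deterministic, and your explicit verification --- in particular the observation that the diagonal $\Delta$ must factor through the edgeless component since $K_2$ admits no edge-preserving-and-reflecting map to the one-vertex graph --- is exactly that calculation spelled out.
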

\begin{proof}[Proof notes.]
  Item~(1) follows from \cite{hu-tholen}, which shows that
  limits in $\Fam(\CatG\op)$ amount to ``multi-colimits'' in $\CatG$. 
  For example, the family of all graphs with $n$ vertices is a
  multi-coproduct of the one-vertex graph in $\CatG$,
  hence forms a product in $\Fam(\CatG\op)$.
  Item~(2) is then a quick calculation. All morphisms in $\Fam(\CatG\op)$ are deterministic. 
\end{proof}

\subsubsection{Step 2: Adjoining $\tnew$}
\newcommand{\CatHT}{\Fam(\CatG\op)[\nu]}

In Section~\ref{sec:edge-cat}, we introduced a distributive category
that interprets the interface $(\tvertex,\tedge)$.
But it does not support $\tnew$, and indeed there are no morphisms
$1\to\sem\tvertex$. 
To additionally
interpret $(\tnew)$, we freely adjoin it. We essentially use the `monoidal
indeterminates' method of Hermida and
Tennent~\cite{hermida-tennent} to do this.
Their work was motivated by semantics of dynamic memory allocation, but has also been
related to quantum phenomena~\cite{hs-quantum,ahk-partial-quantum} and to categorical gradient/probabilistic methods~\cite{para,backprop,Shiebler2021categorical}, where it is known as the `para construction'. It is connected to earlier methods for the action calculus~\cite{pavlovic_1997}.

Let $\Inj$ be the category of finite sets and injections. It is a
monoidal category with the disjoint union monoidal structure (e.g.~\cite{fiore-comb,power-local}).
Consider the functor $J:\Inj\op\to \Fam(\CatG\op)$, with
$J(n)=\sem\tvertex^n$, and where the functorial action is by
exchange and projection. This is a strong monoidal functor. (Indeed, it is
the unique monoidal functor with $J(1)=\sem\tvertex$.)

For any monoidal functor, Hermida and Tennent~\cite{hermida-tennent}
provide monoidal indeterminates by introducing a
`polynomial category', by analogy with a polynomial ring.
Unfortunately, a  general version for \emph{distributive} monoidal categories is not yet known,
so we focus on the specific case of $J:\Inj\op\to \Fam(\CatG\op)$. We build a new category
$\Fam(\CatG\op)[\nu:J\,\Inj\op]$, which we abbreviate $\CatHT$. It has the same objects as
$\Fam(\CatG\op)$, but the morphisms $\vec X\to \vec Y$ are equivalence classes of morphisms
\[
[k,f]: \sem\tvertex^{k}\times \vec X\to \vec Y
\]
in $\Fam(\CatG\op)$, modulo reindexing. The reindexing equivalence relation is generated by putting $[k,f]\sim [l,g]$ when
there exist injections $\iota_1\dots \iota_m:k\to l $ such that 
\[  g\  =\  \textstyle\Big(\sem\tvertex^{l}\times \vec X
  \cong \sum_{j=1}^m \sem\tvertex^l\times X_j
  \xrightarrow{\sem\tvertex^{(\iota_j)}\times X_j}
  \sum_{j=1}^m \sem\tvertex^k\times X_j\cong \sem\tvertex^k\times \vec X
  \xrightarrow f
  \vec Y\Big)
\]
where $\vec X=(X_1,\dots, X_m)$. In particular, when $m=1$, i.e. $\vec X=X$ is a singleton sequence, we have
\begin{equation}\label{eqn:CatHT}
  \CatHT(X,\vec Y)\  \cong\ \colim_{k\in\Inj}\Fam(\CatG\op)(\sem\tvertex ^k\times X,\vec Y)\text.
\end{equation}

Composition and monoidal structure accumulate in $\sem\tvertex^k$,
as usual in the monoidal indeterminates (`para') construction, 
e.g.
\[
  \Big(\vec X\xrightarrow {[k,f]} \vec Y
  \xrightarrow{[l,g]}
  \vec Z\Big)
  =
  \Big(\vec X\xrightarrow {[l+k,g\circ (\sem \tvertex^l \times f)]}
  \vec Z\Big)
\]
and although our equivalence relation is slightly coarser, it still respects the symmetric monoidal category structure,
and there is a monoidal functor $\Fam(\CatG\op)\to \CatHT$,
regarding each morphism $f:\vec X\to \vec Y$ in $\Fam(\CatG\op)$ as a morphism $[0,f]$ in $\CatHT$.
But there is also now an adjoined morphism $\nu=[1,\id]:1\to \sem\tvertex$.

This monoidal category $\CatHT$ moreover inherits the distributive coproduct structure from $\Fam(\CatG\op)$, and the functor $\Fam(\CatG\op)\to \CatHT$ is a distributive Markov functor.
To define copairing of $[k,f]:\vec X\to\vec Z$ and $[l,g]:\vec Y\to \vec Z$ we use the reindexing equivalence relation to assume $k=l$
and then define the copairing as $\langle [k,f],[k,g]\rangle=[k,\langle f,g\rangle] : \vec X+\vec Y\to\vec Z$. 

In summary:
  \begin{itemize}\item $\CatHT$ is a distributive Markov category.
  \item $\CatHT$ supports the graph interface, via
    the interpretation of $(\tvertex,\tedge)$ in $\Fam(\CatG\op)$, but also with the 
    interpretation
    $\sem{\tnew}=\nu:1\to \sem\tvertex$.
  \end{itemize}

\subsection{Bernoulli Bases for Random Graph Models}
\label{sec:bernoulli-base-graphon}
The following gives a precise characterization of graphons in terms of the numerals of $\CatHT$.

\begin{theorem}\label{thm:Markov-functor-graphon}
  To give a distributive Markov functor $\CatHT_\NN \to \FinStoch$ is to give a graphon.
\end{theorem}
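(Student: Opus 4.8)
The plan is to establish the stated correspondence as a genuine bijection, routing both directions through random graph models (Def.~\ref{def:rgm}) and invoking Theorem~\ref{thm:eq-theory-to-graphon} and Proposition~\ref{prop:lovasz}. Concretely, I would exhibit mutually inverse assignments between distributive Markov functors $\CatHT_\NN\to\FinStoch$ and exchangeable, consistent, local random graph models, and then use Proposition~\ref{prop:lovasz} to identify the latter with graphons (up to the equivalence under which distinct graphons induce the same random graph model). The functor $\CatHT_\NN\to\FinStoch$ plays the role of a (not necessarily faithful) Bernoulli base, which is all that is needed to read programs as distributions.

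For the forward direction, given a distributive Markov functor $\Phi\colon\CatHT_\NN\to\FinStoch$, I would apply it to the programs $t_n$ of \eqref{eqn:programgraph}, which are morphisms $1\to 2^{(n^2)}$ between numeral objects and hence lie in $\CatHT_\NN$. This yields distributions $\Phi(\sem{t_n})$ on adjacency matrices, i.e.\ a random graph model. Since $\CatHT$ validates \eqref{eqn:simplegraph} with $\sem\tedge$ deterministic (Proposition~\ref{prop:FamG}), the exchangeability, consistency, and locality of this model can be read off from Theorem~\ref{thm:eq-theory-to-graphon}: its proof derives these conditions as equalities of morphisms in $\CatHT$ using only \eqref{eq:let-assoc}--\eqref{cd:let_val}, and the structure-preserving $\Phi$ transports those equalities to $\FinStoch$. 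Proposition~\ref{prop:lovasz} then produces a graphon.

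The substantive direction is the converse. Given a graphon $W$ with random graph model $(p_{W,n})_n$, I would first record a structural lemma obtained by unwinding the composition rule and the colimit presentation \eqref{eqn:CatHT}: every morphism $[k,f]\colon m\to n$ of $\CatHT_\NN$ factors as
\[
m \;\xrightarrow{\;\sem{t_k}\times \id_m\;}\; 2^{(k^2)}\times m \;\xrightarrow{\;e\;}\; n,
\]
where $e$ is a deterministic map from $\Fam(\CatG\op)$ that reads off $f$ from the adjacency matrix. I would then define $\Phi_W$ to be the identity on deterministic (finite-set) maps and to send $\sem{t_k}$ to $p_{W,k}$; equivalently, $\Phi_W([k,f])$ is the stochastic matrix pushing $p_{W,k}$ forward along $f$. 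The three graphon conditions are exactly what make this well defined and functorial: the reindexing relation generating $\sim$ in \eqref{eqn:CatHT} identifies representatives via injections $k\to l$, and invariance of the pushforward under these is precisely consistency (with exchangeability handling the relabelling); functoriality, and likewise preservation of the monoidal (Kronecker) structure, requires that when composition accumulates $k+l$ fresh vertices in $\sem\tvertex^{k+l}$ the induced subgraphs on the first $l$ and last $k$ vertices be independent, which is exactly locality. I expect this \emph{functoriality-via-locality} step to be the main obstacle, since it is where the combinatorics of the accumulated indeterminates must be matched against the product structure of $\FinStoch$; preservation of comonoids, unit, and coproducts is then routine.

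Finally I would verify that the two assignments are inverse. One composite is immediate, since $\Phi_W(\sem{t_n})=p_{W,n}$, so re-extracting the random graph model returns $(p_{W,n})_n$. For the other, the factorization lemma shows that any $\Phi$ is forced to be the identity on deterministic maps and is therefore determined by the values $\Phi(\sem{t_k})$; hence two functors with the same random graph model coincide. Combined with Proposition~\ref{prop:lovasz}, this yields the claimed bijection between distributive Markov functors $\CatHT_\NN\to\FinStoch$ and graphons.
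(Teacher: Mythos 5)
Your proposal is correct, and it establishes the same bijection as the paper, but by a genuinely different organization of the argument. The paper first characterizes \emph{monoidal natural transformations} $\CatHT(1,-)\to\FinStoch(1,-)$ via a Yoneda/colimit computation based on \eqref{eqn:CatHT}, obtaining $\lim_{k\in\Inj\op}\DistM(G_k)$ (where $G_k$ is the set of $k$-vertex graphs), so that invariance under $\Inj$-reindexing is exchangeability plus consistency and monoidality is locality; it then lifts such a transformation uniquely to a distributive Markov functor, recovering preservation of composition from naturality and monoidality via an evaluation-map trick ($g\circ f$ rewritten as an evaluation applied to $f\otimes g_1\otimes\cdots\otimes g_m$). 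You instead work directly at the level of morphisms: your factorization lemma is sound, since a morphism $[k,f]\colon m\to n$ of $\CatHT_\NN$ is, at the index level, a function $G_k\times[m]\to[n]$, and $G_k$ embeds into $2^{(k^2)}$ via adjacency matrices, so $f$ extends to a deterministic map out of $2^{(k^2)}\times m$; well-definedness under the reindexing relation in \eqref{eqn:CatHT} is then exactly exchangeability plus consistency, and functoriality/monoidality is exactly locality --- the same three correspondences as the paper, verified by hand rather than extracted from the Yoneda computation. Two further points: (i) your forward direction economizes by reusing Theorem~\ref{thm:eq-theory-to-graphon}, and you correctly flag that its faithfulness hypothesis is never actually used there (that proof derives equalities of morphisms in the Markov category, which any distributive Markov functor transports to $\FinStoch$); (ii) your uniqueness argument --- any such functor must act as $0/1$ matrices on maps generated by the coproduct structure, hence is determined by its values on the generic programs $\sem{t_k}$ --- makes explicit what the paper leaves implicit in its step $\CatHT_\NN(m,n)\cong\CatHT_\NN(1,n)^m$. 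As for what each approach buys: the paper's computation cleanly isolates the combinatorial core and its lifting step is a reusable statement about categories of the form $\CatA_\NN$, while yours is more elementary and self-contained and makes the role of locality in composition completely concrete. Both, as you correctly note, determine the graphon only up to the equivalence implicit in Proposition~\ref{prop:lovasz}.
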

\begin{proof}[Proof outline]
  We begin by showing a related characterization: that graphons correspond to certain natural transformations. 
  Observe that any distributive Markov category $\CatA$
  gives rise to a symmetric monoidal functor $\CatA(1,-):\FinSet_\NN\to\Set$, regarding the numerals of $\FinSet_\NN$ as objects of $\CatA$ (\S\ref{sec:bernoulli-base}). 
  Let $G_k=2^{k(k-1)/2}$ be the set of $k$-vertex graphs. 
  We can characterize the natural transformations $\alpha :
  \CatHT(1,-) \to \FinStoch(1,-)$ as follows.
  \begin{align*}
    &     \Nat(\CatHT(1,-)\ ,\ \FinStoch(1,-))\\
    \cong\ &\Nat\big(\colim_{k\in\Inj}\FinSet(G_k,-)\ ,\ \DistM(-)\big)
    &&\text{(Ex.~\ref{ex:bernoulli-base}(2), Prop.~\ref{prop:FamG}(1) and \eqref{eqn:CatHT})}\\
    \cong\ &\lim_{k\in \Inj\op}\Nat(\FinSet(G_k,-)\ ,\ \DistM(-))&&\text{(universal property of colimits)}\\
    \cong\ &\lim_{k\in \Inj\op}\DistM(G_k)&&\text{(Yoneda lemma)}
  \end{align*}
  An element of this limit of sets is by definition a sequence of distributions $p_k$ on $G_k$ that is invariant under reindexing by $\Inj\op$. Since injections are generated by inclusions and permutations, this is then a sequence that is consistent and exchangeable (Def.~\ref{def:rgm}), respectively.
  Such a natural transformation~$\alpha$ is monoidal if and only if
  the sequence is also \emph{local}. Hence a monoidal natural transformation is the
  same thing as a random graph model.

  In fact, every monoidal natural transformation $\alpha : \CatHT(1,-)
  \to \FinStoch(1,-)$ arises uniquely by restricting a distributive
  Markov functor $F : \CatHT_\NN \to \FinStoch$. We now show this, to
  conclude our proof. 
  Given $\alpha$, let $F_{m,n} : \CatHT_\NN(m,n) \to \FinStoch(m,n)$ be:
  \begin{displaymath}
    \CatHT_\NN(m,n) \cong \CatHT_\NN(1,n)^m \xrightarrow{\alpha_n^m} \FinStoch(1,n)^m \cong \FinStoch(m,n).
  \end{displaymath}
  It is immediate that this $F$ preserves the symmetric monoidal structure and coproduct structure, but not that $F$ is a functor.
  However, the naturality of $\alpha$ in $\FinSet_\NN$ gives us that $F$ preserves postcomposition by morphisms of $\FinSet_\NN$.
  All of this implies that general categorical composition is preserved as well, since, in any distributive Markov category of the form $\CatA_\NN$, for $f : l \to m$ and $g : m \to n$, the composite $g \circ f : l \to n$ is equal to
  \begin{displaymath}
    l = l \otimes 1^{\otimes m}
    \xrightarrow{f \otimes g_1 \otimes \ldots \otimes g_m}
    m \otimes n^{\otimes m} \xrightarrow{\text{eval}} n
  \end{displaymath}
  where $g_i = g \circ \iota_i$ for $i = 1,\ldots,m$ and $\text{eval}$ is just the evaluation map $m \times n^m \to n$ in $\FinSet$.
\end{proof}

\begin{corollary}\label{corollary:graphon-markov}
  Every graphon arises from a distributive Markov category via the random graph model in~\eqref{eqn:programgraph}.
\end{corollary}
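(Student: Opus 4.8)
The plan is to assemble the corollary from the three results already established. Fix a graphon $W$. By Theorem~\ref{thm:Markov-functor-graphon} it corresponds to a distributive Markov functor $\Psi\colon \CatHT_\NN\to\FinStoch$. I would then feed $\Psi$ into the quotient construction of Proposition~\ref{prop:quotient} to obtain a distributive Markov category $\mathcal B$ carrying a genuine Bernoulli base $\mathcal B_\NN\faithfulfr\FinStoch$, together with a distributive Markov functor $q\colon\CatHT\to\mathcal B$ through which $\Psi$ factors as $\CatHT_\NN\to\mathcal B_\NN\faithfulfr\FinStoch$. Transporting the interface along $q$ equips $\mathcal B$ with $\sem\tvertex$, $\sem\tnew$ and $\sem\tedge$; since $q$ preserves composition, tensor, coproducts and comonoids, it preserves determinism of $\sem\tedge$ and the equations~\eqref{eqn:simplegraph}, so $\mathcal B$ is a Bernoulli-based theory for the graph interface in the sense of \S\ref{sec:program-equivalence-graphons}. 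Theorem~\ref{thm:eq-theory-to-graphon} then yields from $\mathcal B$ an exchangeable, consistent and local random graph model $(\sem{t_n})_n$, hence a graphon $W'$ (Proposition~\ref{prop:lovasz}), and it remains to check $W'=W$.

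The step that needs real care — and what I expect to be the main obstacle — is that Proposition~\ref{prop:quotient} demands that every nonzero object of the category being quotiented carry a point $1\to X$, and this fails for $\CatHT$ as it stands. For instance the singleton family $(K_2)$ consisting of a single edge admits no point, since a morphism $\sem\tvertex^k\to(K_2)$ would supply an edge-preserving map $K_2\to h$ for every $k$-vertex graph $h$, including the edgeless one, which is impossible. I would fix this by first passing to the full subcategory $\mathcal D\subseteq\CatHT$ on the objects that are either $0$ or equipped with a point. Using the adjoined $\nu\colon1\to\sem\tvertex$ one checks that $\mathcal D$ contains $\sem\tvertex$ and all the numerals, and is closed under the monoidal product and the chosen coproducts (points tensor together and compose with injections, while $0\otimes Z\cong0$), so $\mathcal D$ is a distributive Markov subcategory satisfying the hypothesis and having the same numerals as $\CatHT$. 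In particular $\mathcal D_\NN=\CatHT_\NN$, the interface and every program $t_n$ already live in $\mathcal D$, and the argument of the previous paragraph goes through verbatim with $\CatHT$ replaced by $\mathcal D$.

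Finally I would identify $W'$ with $W$. The model read off $\mathcal B$ is the sequence $\sem{t_n}\in\FinStoch(1,2^{(n^2)})$ obtained through the Bernoulli base of $\mathcal B$; by the factorization of $\Psi$ this equals $\Psi$ applied to the morphism $\sem{t_n}$ computed in $\CatHT_\NN$. Unwinding the chain of isomorphisms $\Nat(\CatHT(1,-),\FinStoch(1,-))\cong\lim_{k}\DistM(G_k)$ from the proof of Theorem~\ref{thm:Markov-functor-graphon}, the distribution that $\Psi$ assigns to the generic $n$-vertex sample is exactly the component $p_n$ of the limit datum determining $W$; and $t_n$, which instantiates $n$ fresh vertices via $\nu$ and tabulates all of their $\sem\tedge$-values, realizes precisely that generic element. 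Hence $\Psi(\sem{t_n})=p_n$ for all $n$, so $W'=W$. This last identification is the conceptual heart of the corollary, but given Theorem~\ref{thm:Markov-functor-graphon} it is essentially a matter of matching the two descriptions of the same random graph model; everything else is bookkeeping about structure-preserving functors.
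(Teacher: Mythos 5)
Your proposal follows the same architecture as the paper's own proof: Theorem~\ref{thm:Markov-functor-graphon} turns the graphon $W$ into a distributive Markov functor $\bbase\colon\CatHT_\NN\to\FinStoch$, Proposition~\ref{prop:quotient} quotients to a Bernoulli-based theory, and Theorem~\ref{thm:eq-theory-to-graphon} plus a direct computation shows the model \eqref{eqn:programgraph} returns $W$. Your closing identification $\bbase(\sem{t_n})=p_n$ (pushed forward along the inclusion of $G_n$ into $2^{(n^2)}$, with $\sem{t_n}=[n,\mathrm{adj}]$ the generic element of $\colim_k\FinSet(G_k,2^{(n^2)})$) is exactly the verification the paper dismisses as ``straightforward.''

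What you add beyond the paper is a genuine and necessary repair, not bookkeeping. You are right that Proposition~\ref{prop:quotient} cannot be applied to $\CatHT$ verbatim: its hypothesis that every nonzero object admits a point $1\to X$ fails there, and your counterexample is correct --- a point of the singleton family $(K_2)$ would require, for some $k$, a $\CatG$-morphism $K_2\to h$ into \emph{every} $k$-vertex graph $h$, and the edgeless $h$ admits none, so $\CatHT(1,(K_2))=\emptyset$ while $(K_2)\not\cong 0$. The paper's proof summary silently skips this. Your fix --- restricting to the full subcategory $\mathcal D$ of objects that are $0$ or pointed, checking closure under $\otimes$ and $+$ (using $\nu$ and the fact that $0\otimes Z=0$ in $\CatHT$), and noting $\mathcal D_\NN=\CatHT_\NN$ so that the interface, the numerals, and every $t_n$ live in $\mathcal D$ --- makes the appeal to Proposition~\ref{prop:quotient} legitimate; equivalently one could quotient the distributive Markov subcategory generated by $\sem\tvertex$, $1$ and $2$. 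So the proposal is correct, takes the paper's route, and patches a step the paper leaves implicit.
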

\begin{proof}[Proof summary]
  Given a graphon, we consider the distributive Markov functor that
  corresponds to it, $\bbase : \CatHT_\NN \to \FinStoch$, by Theorem~\ref{thm:Markov-functor-graphon}.
  Using the quotient construction of Proposition~\ref{prop:quotient}, we get a distributive Markov category with a Bernoulli base.
  It is straightforward to verify that the random graph model induced by~\eqref{eqn:programgraph} is the original graphon.
\end{proof}

\subsection{Remark on Operational Semantics}
\label{sec:abstract-op-sem}
The interpretation in this section suggests a general purpose operational semantics for closed programs at ground type,
$\vdash t:n$, along the following lines:
\begin{enumerate}
\item Calculate the interpretation $\sem t:1\to n$ in $\CatHT$. There are no probabilistic choices in this step, it is a symbolic manipulation, because the morphisms of the Markov category
  $\CatHT$ are built from tuples of finite graph homomorphisms. In effect, this interpretation pulls all the $\tnew$'s to the front of the term.
\item Apply the Markov functor $\bbase(\sem t)$ to obtain a probability distribution on $n$, and sample from this distribution to return a result.
\end{enumerate}


\section{Interpretation: black-and-white graphons via measure-theoretic probability}
\label{sec:randomfree}
In Section~\ref{sec:graphons-to-equational-theories}, we gave a general
syntactic construction for building an equational theory from a
graphon. Since that definition is based on free constructions and quotients, \emph{a priori}, it does not `explain' what the type $\tvertex$ stands for. Like contextual equivalence of programs, \emph{a priori}, it does not give useful compositional reasoning methods. To prove two programs are equal, according to the construction of Prop.~\ref{prop:quotient}, one needs to quantify over all $Z$, $h$, and $k$, in general.

In this section, we show that one class of graphons, black-and-white
graphons (Def.~\ref{def:bw}), admits a straightforward measure-theoretic semantics, and we can thus 
use the equational theory induced by this semantics, rather than the 
method of Section~\ref{sec:graphons-to-equational-theories}. This measure-theoretic semantics is close to 
previous measure-theoretic work on probabilistic programming
languages (e.g.~\cite{Kozen81,s-finite}). 

After recapping measure-theoretic probability (\S\ref{sec:probthy}), in Section~\ref{sec:bw-theory}, we show that every
black-and-white graphon arises from a measure-theoretic
interpretation (Prop.~\ref{prop:bw-ok}). In Section~\ref{sec:all-bw}, by defining `measure-theoretic interpretation' more
generally, we show that, conversely, this measure-theoretic approach can \emph{only} cater
for black-and-white graphons (Prop.~\ref{prop:all-bw}).

\subsection{Black-and-White Graphons from Equational Theories}\label{sec:bw-theory}
\begin{definition}\label{def:bw}[e.g.~\cite{janson}]
  A graphon $W:[0,1]^2\to [0,1]$ is \emph{black-and-white} if there exists
$E:[0,1]^2\to \{0,1\}$ such that 
$W(x,y)=E(x,y)$ for almost all $x,y$.
\end{definition}
Recall that the Giry monad (Def.~\ref{def:giry}) gives rise to a Bernoulli-based distributive Markov category  (\S\ref{sec:probthy}, Ex.~\ref{ex:bernoulli-base}).
For any black-and-white graphon~$W$, we define an interpretation of
the graph interface for the probabilistic programming language
using $\Giry$, as follows.
\begin{itemize}
\item $\sem\tvertex_W = [0,1]$; $\sem \tbool_W= 2$, the discrete two element space;
\item $\sem{\tnew()}_W = \mathrm{Uniform}(0,1)$, the uniform distribution on $[0,1]$;
\item $\sem{\tedge}_W(x,y)=\eta(E(x,y)).
        $
\end{itemize}
\begin{proposition}\label{prop:bw-ok}
  Let $W$ be a black-and-white graphon. The equational theory induced by $\sem-_W$ induces the graphon~$W$ according to the construction in Section~\ref{sec:program-equivalence-graphons}.
\end{proposition}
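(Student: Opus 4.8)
The plan is to compute the random graph model $(\sem{t_n})_n$ induced by $\sem{-}_W$ directly from the definition of the Giry monad, and to check that it coincides with the model $p_{W,\cdot}$ obtained from $W$ by the sampling formula~\eqref{eqn:pwg}; the conclusion then follows from Proposition~\ref{prop:lovasz}. As a preliminary, I would first arrange that $E$ is symmetric and satisfies $E(x,x)=0$ for every $x$: both conditions can be forced by modifying $E$ on the (null) diagonal and off-diagonally on a null set, which preserves the property that $W=E$ for almost all $x,y$. This matters because equality of morphisms in $\Kl(\Giry)$ is equality of kernels on the nose (not merely almost everywhere), and we need $\sem{-}_W$ to validate~\eqref{eqn:simplegraph} and the determinism of $\tedge$ so that the hypotheses of Theorem~\ref{thm:eq-theory-to-graphon} hold and the induced theory genuinely gives rise to a graphon.

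For the main computation, I would unfold $\sem{t_n}$ through the bind of $\Giry$ (Def.~\ref{def:giry}): each $\letin{x_i}{\tnew()}{}$ contributes an integration against the uniform measure on $[0,1]$, while $\sem{\tedge}_W(x,y)=\eta(E(x,y))$ contributes a Dirac kernel. Hence $\sem{t_n}$ is the pushforward of Lebesgue measure on $[0,1]^n$ along the deterministic map $(x_1,\dots,x_n)\mapsto (E(x_i,x_j))_{i,j\in[n]}$, and for a graph $g=([n],E_g)$ the probability assigned to its adjacency matrix is
\[
  \sem{t_n}(A_g)
  =\int_{[0,1]^n}
  \prod_{(i,j)\in E_g} E(x_i,x_j)\,
  \prod_{(i,j)\notin E_g}\bigl(1-E(x_i,x_j)\bigr)\,
  \dd(x_1\ldots x_n),
\]
where I use $E\in\{0,1\}$ to rewrite the indicators $[E(x_i,x_j)=1]$ and $[E(x_i,x_j)=0]$ as $E(x_i,x_j)$ and $1-E(x_i,x_j)$ respectively.

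The key remaining step is to replace $E$ by $W$ in this integrand. Since $W=E$ for almost all $(x,y)\in[0,1]^2$, for each fixed pair $(i,j)$ the set $\{\,\mathbf x\in[0,1]^n : W(x_i,x_j)\neq E(x_i,x_j)\,\}$ is the preimage of a null subset of $[0,1]^2$ under a coordinate projection, hence null in $[0,1]^n$ by Fubini; taking the union over the finitely many pairs still yields a null set, so the two integrands agree almost everywhere and the integrals coincide. The right-hand side is then exactly $\graphonrgm W n g$ as defined in~\eqref{eqn:pwg}, so $(\sem{t_n})_n=(p_{W,n})_n$ as random graph models. By Proposition~\ref{prop:lovasz} this model determines $W$ (up to the usual equivalence of graphons), so the equational theory induced by $\sem{-}_W$ induces $W$, as required.

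I expect the only real subtlety to be this almost-everywhere replacement --- specifically the Fubini argument that the exceptional set remains null after pulling back along the coordinate projections $\mathbf x\mapsto(x_i,x_j)$ --- together with the bookkeeping preliminary of choosing a symmetric, zero-diagonal representative $E$ so that $\sem{-}_W$ really satisfies the standing assumptions of Section~\ref{sec:program-equivalence-graphons}.
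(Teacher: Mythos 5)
Your proposal is correct and follows essentially the same route as the paper's proof: directly compute $\sem{t_n}_W$ by unfolding the Giry-monad bind, identify the result with $p_{W,n}$ via \eqref{eqn:pwg}, and use the fact that $W=E$ almost everywhere to see that the choice of $E$ is immaterial, concluding by Proposition~\ref{prop:lovasz}. Your preliminary normalization of $E$ (symmetric, zero on the diagonal) and the explicit Fubini argument for the almost-everywhere replacement are details the paper's terse proof elides; the normalization in particular is a genuine point of care, since equality in $\Kl(\Giry)$ is on-the-nose equality of kernels, so without it the interpretation need not literally validate \eqref{eqn:simplegraph} as required by the standing assumptions of Section~\ref{sec:program-equivalence-graphons}.
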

\begin{proof}
  

Suppose that $W$ corresponds to the sequence of random graphs $p_1,p_2,\dots $ as in Section~\ref{sec:graphons}. Consider the term $t_n$ in \eqref{eqn:programgraph}, and directly calculate its interpretation. Then, we get $\sem {t_n}_W=p_n$, via~\eqref{eqn:pwg}, as required. 

  The choice of $E$ does not matter in the interpretation of these terms, because $W=E$ almost everywhere.
\end{proof}
\subsection{All Measure-Theoretic Interpretations are
  Black-and-White}\label{sec:all-bw}
Although the model in Section~\ref{sec:bw-theory} is fairly canonical, there are sometimes other enlightening interpretations using the Giry monad. These also correspond to black-and-white graphons.

For example, consider the geometric-graph example from Figure~\ref{fig:graph}. We interpret this using the Giry monad, putting
\begin{itemize}
\item $\sem\tvertex = S_2$, the sphere; $\sem \tbool= 2$;
\item $\sem{\tnew()} = \mathrm{Uniform}(S_2)$, the uniform distribution on the sphere;
\item $\sem{\tedge}(x,y)=\eta(d(x,y)<θ)$, i.e.~an edge if their
  distance is less than $\theta$. 
\end{itemize}
This will again induce a graphon, via~(\ref{eqn:programgraph}).
We briefly look at theories that arise in this more flexible way:
\begin{proposition}
  \label{prop:all-bw}
  Consider any interpretation of the graph interface in the Giry monad:
  a measurable space $\sem\tvertex$,
  a measurable set $\sem\tedge\subseteq \sem\tvertex^2$, 
  and a probability measure $\sem{\tnew()}$ on $\sem\tvertex$.
  The induced graphon is black-and-white.
\end{proposition}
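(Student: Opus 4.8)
The plan is to show that the random graph model induced by the interpretation is literally generated by a $\{0,1\}$-valued graphon, which is black-and-white by Definition~\ref{def:bw} (taking $E=W$). Write $\mu=\sem{\tnew()}$ for the vertex measure and $R=\sem\tedge\subseteq\sem\tvertex^2$ for the edge set; by \eqref{eqn:simplegraph}, $R$ is symmetric and disjoint from the diagonal. The crucial point is that $\tedge$ is \emph{deterministic}: once the vertices are sampled, each edge is a fixed function of them, so all the randomness lives in $\tnew$. Concretely, unfolding the interpretation of the program $t_n$ of \eqref{eqn:programgraph} in the Giry monad, $\sem{t_n}$ is the pushforward of the product measure $\mu^{\otimes n}$ along the map $\Phi_n\colon\sem\tvertex^n\to 2^{(n^2)}$, $\ (v_1,\dots,v_n)\mapsto\big([(v_i,v_j)\in R]\big)_{i,j\in[n]}$; there are no per-edge coin flips.

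First I would transport this data to the unit interval. Since $R$ lies in the product $\sigma$-algebra $\Sigma\otimes\Sigma$ of $\sem\tvertex$, it is measurable with respect to $\Sigma'\otimes\Sigma'$ for some \emph{countably generated} sub-$\sigma$-algebra $\Sigma'\subseteq\Sigma$ (every set of a product $\sigma$-algebra depends on only countably many rectangles). The measure algebra of $(\sem\tvertex,\Sigma',\mu)$ is then separable, hence isomorphic to that of a standard Borel probability space, and by the Lebesgue--Rokhlin theorem there is a measure-preserving map $\phi\colon([0,1],\lambda)\to(\sem\tvertex,\mu)$ with $\lambda$ Lebesgue measure. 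I then set $W:=E$, where $E(s,t):=[(\phi(s),\phi(t))\in R]\colon[0,1]^2\to\{0,1\}$; this is symmetric and measurable, so it is a graphon.

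Next I would check that $W$ generates exactly the random graph model $(\sem{t_n})_n$. Because $\phi$ is measure-preserving, $\mu^{\otimes n}=(\phi^{\otimes n})_*\lambda^{\otimes n}$, and $\Phi_n\circ\phi^{\otimes n}$ is precisely the edge-reading map built from $E$. Moreover, since $W=E$ is $\{0,1\}$-valued, the integrand in the graphon formula \eqref{eqn:pwg} is the indicator of the event that the sampled points realize exactly the graph $g$, so $p_{W,n}(A_g)=\lambda^{\otimes n}\{x:E(x_i,x_j)=A_g(i,j)\ \forall i,j\}=\mu^{\otimes n}(\Phi_n^{-1}(A_g))=\sem{t_n}(A_g)$. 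Thus $W$ and the interpretation yield the same exchangeable, consistent, local random graph model (the Giry interpretation is Bernoulli-based, so Theorem~\ref{thm:eq-theory-to-graphon} applies), and by Proposition~\ref{prop:lovasz} this is the induced graphon. Being $\{0,1\}$-valued, it is black-and-white.

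The main obstacle is the measure-theoretic reduction in the second paragraph: a bare ``measurable space'' need not be standard Borel, so one cannot immediately invoke Lebesgue--Rokhlin. The separability reduction---passing to the countably generated $\Sigma'$ that supports $R$ and replacing $\sem\tvertex$ by a standard Borel model of its measure algebra---is what makes the transport to $[0,1]$ legitimate, and I would take care to confirm that the random graph model is unchanged under this reduction (which holds because $\Phi_n$ factors through $\Sigma'$-measurable data). The remaining verifications are routine Fubini-style calculations.
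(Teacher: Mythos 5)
Your construction in the first and third paragraphs matches the paper's: define $W(s,t)=[(\phi(s),\phi(t))\in\sem\tedge]$ after transporting $\sem{\tnew()}$ to $[0,1]$, note it is $\{0,1\}$-valued hence black-and-white, and check the random graph model is unchanged by reparameterizing the integrals. The gap is in your second paragraph, and it is genuine: the measure-preserving point map $\phi\colon([0,1],\lambda)\to(\sem\tvertex,\Sigma',\mu)$ you invoke need not exist. The Lebesgue--Rokhlin theorem applies to \emph{standard} (Lebesgue) probability spaces, and ``countably generated with separable measure algebra'' is strictly weaker: separability gives an isomorphism of \emph{measure algebras} with a standard space, but measure-algebra isomorphisms do not in general lift to point maps into an arbitrary measurable space. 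Concretely, take $\sem\tvertex=X\subseteq[0,1]$ a Bernstein set (outer Lebesgue measure $1$, inner measure $0$), with the trace Borel $\sigma$-algebra and trace measure $\mu(B\cap X)=\lambda(B)$; this is countably generated with separable measure algebra. If a measurable measure-preserving $\phi\colon[0,1]\to X$ existed, composing with the (measurable) inclusion $X\hookrightarrow[0,1]$ would give a Borel map $\psi\colon[0,1]\to[0,1]$ with $\psi_*\lambda=\lambda$ and image inside $X$; the image is analytic, hence Lebesgue measurable of measure $1$, contradicting inner measure $0$. So your step fails precisely on the non-standard-Borel spaces the reduction was meant to handle.

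The repair reverses the direction of the reduction, which is how the paper argues. From your (correct) observation that $R\in\Sigma'\otimes\Sigma'$ with $\Sigma'$ countably generated, say by $\{A_n\}_{n\in\NN}$, define $\beta\colon\sem\tvertex\to\Omega\defeq\{0,1\}^{\NN}$ by $\beta(x)=([x\in A_n])_n$. Then $\Sigma'=\beta^{-1}(\mathcal{B}(\Omega))$ and $\Sigma'\otimes\Sigma'=(\beta\times\beta)^{-1}(\mathcal{B}(\Omega)\otimes\mathcal{B}(\Omega))$, so $R=(\beta\times\beta)^{-1}(E')$ for some Borel (and, after symmetrizing, symmetric) $E'\subseteq\Omega^2$, and the interpretation $(\Omega,\beta_*\mu,E')$ induces the same random graph model since each $\Phi_n$ factors through $\beta^{\otimes n}$. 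This is exactly the paper's appeal to \cite[Lemma~7.3]{janson}: a measure-preserving map \emph{out of} $\sem\tvertex$ into a standard Borel space, with $\sem\tedge$ arising by pullback. Since $\Omega$ is standard Borel, the randomization lemma \cite[Lem.~3.22]{kallenberg-2010} now does give $\phi\colon[0,1]\to\Omega$ pushing $\lambda$ to $\beta_*\mu$, and the rest of your argument goes through verbatim with $E'$ in place of $R$.
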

\begin{proof}[Proof notes]
 If $\sem\tvertex$ is standard Borel, the randomization lemma~\cite[Lem.~3.22]{kallenberg-2010} gives
  a function $f:[0,1]\to \sem\tvertex$ that pushes the uniform distribution on $[0,1]$ onto the probability measure $\sem{\tnew()}$.
        We define a black-and-white graphon $W$ by $W(x,y)=1$ if
        $(f(x),f(y)) \in \sem\tedge$, and $W(x,y)=0$ otherwise. 
This graphon interpretation $\sem-_W$ gives the same sequence of graphs in~\eqref{eqn:programgraph}, just by reparameterizing the integrals.

  If $\sem\tvertex$ is not standard Borel, we note that there is an equivalent interpretation where it is, because there exists a measure-preserving map $\sem\tvertex\to \Omega$ to a standard Borel space $\Omega$
  and a measurable set $E\subseteq\Omega^2$ that pulls back to $\sem\tedge$, giving rise to the same graphon
  (e.g.~\cite[Lemma~7.3]{janson}).
\end{proof}

\subsubsection*{Discussion}
Proposition~\ref{prop:all-bw} demonstrates that this measure-theoretic
interpretation has limitations.
\begin{definition}\label{def:e-r-graphon}
  For $α\in(0,1)$, the \emph{\ErdosRenyi\ graphon}  $W_α:[0,1]^2\to
  [0,1]$ is given by $W_α(x,y)=α$.
\end{definition}
The Erd\H{o}s-R\'enyi graphons cannot arise from measure-theoretic
interpretations of the graph interface, because they are not 
black-and-white. In Section~\ref{sec:ER-Rado}, we give an alternative
interpretation for the \ErdosRenyi\ graphons.

The reader might be tempted to interpret an \ErdosRenyi\ graphon by
defining
\[\sem{\tedge}_{W_α}(x,y)
  \quad=\quad
  \tbernoulli(\alpha).
\]
However, this interpretation does not provide a model
for the basic equations of the language, because this $\sem\tedge$
is not deterministic, and
derivable equations such as
\eqref{eqn:intro:det}
will fail. Intuitively, once an edge has been sampled between two
given nodes, its presence (or absence) remains unchanged in the rest
of the program, \ie the edge is not resampled again, it is memoized
(see also~\cite{kaddar-staton-stoch-mem,roy2008stochastic}).

Although not all graphons are black-and-white, these are
still a widely
studied and useful class. They are often called
`random-free'.
For example, an alternative characterization is that 
the random graph model of
Prop.~\ref{prop:lovasz}
has subquadratic entropy function~\cite[\S 10.6]{janson}. 
              

\section{Interpretation: \ErdosRenyi\ graphons via Rado-nominal sets}
\label{sec:ER-Rado}
In Section~\ref{sec:graphons-to-equational-theories}, we gave a general construction to show that every graphon arises from a Bernoulli-based equational theory.
In Section~\ref{sec:randomfree}, we gave a more concrete interpretation, based on measure-theory, for black-and-white graphons. 
We now consider the \ErdosRenyi\ graphons (Def.~\ref{def:e-r-graphon}), which are not black-and-white. 

Our interpretation is based on Rado-nominal sets. These are also studied elsewhere, but for different purposes (e.g.~\cite{lmcs:1157,bojancyk-place,homomorphisms-fodef}, \cite[\S1.9]{pitts-nom-book}).

Rado-nominal sets (\S\ref{sec:rado-def}) are sets that are equipped with an action of the automorphisms of the Rado graph, which is an infinite graph that contains every finite graph.
There is a particular Rado-nominal set $\VV$ of the vertices of the Rado graph. The type $\tvertex$ will be interpreted as $\VV$;
$\tedge$ is interpreted using the edge relation~$E$ on $\VV$.
The equational theory induced by this interpretation gives rise to the \ErdosRenyi\ graphons (Def.~\ref{def:e-r-graphon}).

Since Rado-nominal sets form a model of ZFA set theory (Prop.~\ref{prop:radonom-topos}), we revisit
probability theory internal to this setting. We consider internal probability measures on Rado-nominal sets (\S\ref{sec:rado-prob}), and we show that there are internal probability measures on $\VV$ that give rise to \ErdosRenyi\ graphons (\S\ref{sec:er-meas}). The key starting point here is that, internal to Rado-nominal sets, the only functions $\VV\to 2$ are the sets of vertices that are definable in the language of graphs (\S\ref{sec:powersets-and-definiable-sets}). 

We organize the probability measures (Def.~\ref{def:probability}) into
a probability monad on Rado-nominal sets (\S\ref{sec:nom-monad}), analogous to the Giry monad. 
Fubini does not routinely hold in this setting (\S\ref{sec:fubini}), but we use a standard technique to cut down to a commutative affine monad (\S\ref{sec:comm-monad}). This gives rise to a Bernoulli-based equational theory, and in fact, this theory corresponds to \ErdosRenyi\ graphons (via \eqref{eqn:programgraph}: Corollary~\ref{cor:e-r}).

\subsection{Definition and First Examples}\label{sec:rado-def}
The Rado graph $(\VV,E)$ (\hspace{1sp}\cite{ackermannWiderspruchsfreiheitAllgemeinenMengenlehre1937,radoUniversalGraphsUniversal1964}, also known as the `random graph'~\cite{erdosRandomGraphs1959}) is the unique graph, up to isomorphism, with a countably infinite set of vertices
that has the extension property:
if $A$, $B$ are disjoint finite subsets of $\VV$, then there is a
vertex $a \in \VV \setminus (A \cup B)$ with an edge to all the vertices in $A$ but none of the
vertices in $B$.

The Rado graph embeds every finite graph, which can be shown by using
the extension property inductively.

An automorphism of the Rado graph is a graph isomorphism $\VV\to
\VV$. 
The automorphisms of the Rado graph relate to isomorphisms between
finite graphs, as follows. First, if $A$ is a finite graph regarded as
a subset of $\VV$, then any automorphism $\sigma$ induces an
isomorphism of finite graphs $A\cong \sigma[A]$. Conversely,
if $f\colon A\cong B$ is an isomorphism of finite graphs, and we regard $A$
and $B$ as disjoint subsets of $\VV$, then there exists an
automorphism $\sigma$ of $\VV$ that restricts to $f$ (i.e.\ $f=\sigma|_A$).

We write $\AutRado$ for the group of automorphisms of $(\VV,E)$. 
(This has been extensively studied in model theory and descriptive set theory, e.g. \cite{MR2140630,MR3274785}.)

\begin{definition}
  A \emph{Rado-nominal set} is a set $X$ equipped with an action
  $\bullet :\AutRado\times X\to X$ (i.e.~$\id\bullet x=x$; $(\sigma_2\cdot
  \sigma_1) \bullet x= \sigma_2\bullet \sigma_1\bullet x$) such that
  every element has finite support.

  An element $x\in X$ is defined to have \emph{finite support} if there is a finite set
  $A\subseteq \VV$ such that for all automorphisms $\sigma$, if $\sigma$ 
  fixes $A$ (i.e.~$\sigma|_A=\id_A$), it also fixes $x$ (i.e.~$\sigma\bullet x=x$).

  \emph{Equivariant functions between Rado-nominal sets} are functions that
  preserve the group action (i.e.~$f(\sigma\bullet x)=\sigma\bullet (f(x))$). 
\end{definition}
\begin{proposition}[\hspace{1sp}\cite{pitts-nom-book}]
  If finite sets $A,B\subseteq \VV$ both support $x$, so does $A\cap B$. Hence every
  element has a least support. 
\end{proposition}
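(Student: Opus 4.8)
The plan is to prove the intersection statement first; the \emph{least support} conclusion is then a formal consequence. Fix one finite support $A_0$ of $x$. For an arbitrary support $B$, the intersection claim gives the support $A_0\cap B\subseteq A_0$, so every support contains one of the (finitely many) subsets of $A_0$ that support $x$. Among those subsets there is a least one---namely the intersection of all of them, which is again a support by iterating the binary intersection property finitely often---and it is contained in every support $B$ (since it lies inside $A_0\cap B\subseteq B$). Hence it is the least support. So everything reduces to: if finite $A,B\subseteq\VV$ both support $x$, then $A\cap B$ supports $x$.

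To prove this, write $C=A\cap B$ and take an arbitrary $\sigma\in\AutRado$ fixing $C$ pointwise ($\sigma|_C=\id_C$); I must show $\sigma\bullet x=x$. Since $A$ supports $x$, every automorphism fixing $A$ pointwise fixes $x$, and likewise for $B$. It therefore suffices to exhibit $\sigma$ as a finite composite $\sigma=\sigma_k\cdots\sigma_1$ in which each factor $\sigma_i$ fixes $A$ pointwise or fixes $B$ pointwise: applying the factors one at a time, via the action axiom, then yields $\sigma\bullet x=x$. Phrased group-theoretically, I want the pointwise stabilisers of $A$ and of $B$ in $\AutRado$ to generate the pointwise stabiliser of $C$.

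In ordinary (structureless) nominal sets this last step is elementary: the symmetric group is generated by transpositions, and a transposition $(a\,b)$ with $a\in A\setminus C$, $b\in B\setminus C$ factors through a fresh atom $c$ as $(a\,c)(c\,b)(a\,c)$, where $(a\,c)$ fixes $B$ and $(c\,b)$ fixes $A$. The main obstacle in the Rado setting is that transpositions are \emph{not} graph automorphisms: an automorphism must preserve the edge relation, so we cannot move vertices individually, and the edges of $\VV$ linking $A\setminus C$ to $B\setminus C$ obstruct any naive factorization. The remedy---and the technical heart of the argument---is to replace the fresh-atom trick by the extension property of the Rado graph (equivalently, the homogeneity of $\VV$ as the Fra\"iss\'e limit of finite graphs, together with the amalgamation property of that class). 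Concretely, using the extension property one relocates $A\setminus C$ to a fresh set of vertices realizing the same adjacency type over $B$ but otherwise generic; such a relocation extends to an automorphism fixing $B$ pointwise, and a symmetric move relocates $B\setminus C$ by an automorphism fixing $A$. Interposing these generic copies disentangles the cross-edges between $A\setminus C$ and $B\setminus C$, after which the partial map that is the identity on $A$ and agrees with $\sigma$ on $B$ becomes a genuine partial isomorphism and extends, by homogeneity, to an automorphism; composing the pieces expresses $\sigma$ in the required form.

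I expect this disentangling step to be the crux: one must choose the generic relocations so that all the relevant partial maps are simultaneously partial isomorphisms (well-defined on the overlap $C$, and edge-preserving across the previously problematic pairs), and check that finitely many such automorphisms indeed compose to $\sigma$. Once the factorization is in hand, $\sigma\bullet x=x$ is immediate, establishing that $C=A\cap B$ supports $x$ and, with it, the existence of least supports. This mirrors the classical Gabbay--Pitts argument, with the homogeneity of the Rado graph playing the role that the infinitude of atoms plays there.
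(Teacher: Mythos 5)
Your high-level skeleton is sound, and for context: the paper offers no proof of its own here---it cites Pitts's book, where the statement is proved for ordinary nominal sets over the group of \emph{finitary} permutations of atoms, by exactly the fresh-transposition factorisation you recall. Your two reductions are correct: least supports do follow formally from closure under binary intersection (your argument for that is fine), and the intersection property does follow if every $\sigma\in\AutRado$ fixing $C=A\cap B$ pointwise can be factored into automorphisms each of which fixes a known support of $x$ pointwise. You are also right about \emph{why} the classical proof breaks---the Rado graph has no nontrivial finitary automorphisms, so transpositions are simply unavailable---and right that the extension property plus homogeneity is the correct substitute.

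The gap is that the ``disentangling'' step, which you yourself flag as the crux, is where \emph{all} the content lies, and as written it contains a false assertion: relocating $A\setminus C$ and $B\setminus C$ to generic fresh copies cannot make ``the partial map that is the identity on $A$ and agrees with $\sigma$ on $B$'' into a partial isomorphism. That map is a fixed object, unaffected by moving other vertices, and it genuinely fails to preserve edges whenever $E(a,b)\not\Leftrightarrow E(a,\sigma(b))$ for some $a\in A\setminus C$, $b\in B\setminus C$. What relocation actually buys is that \emph{different} partial maps, involving the fresh copies, are partial isomorphisms, and verifying this forces you to choose the unconstrained edges of the fresh copies consistently and to handle the possibility that $\sigma$ maps points of $(A\cup B)\setminus C$ back into $A\cup B$---none of which appears in your sketch. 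It can be completed, and one observation collapses the bookkeeping you were worried about (``check that finitely many such automorphisms indeed compose to $\sigma$''): you never need the product to \emph{equal} $\sigma$, only to agree with $\sigma$ on $A$, since the discrepancy then fixes $A$ pointwise. Concretely: let $\pi_1$ fix $B$ pointwise and send $A\setminus C$ to a fresh copy $D$ disjoint from $A\cup B\cup\sigma(A)$; since $B$ supports $x$, $\pi_1\bullet x=x$, so $\pi_1(A)=C\cup D$ also supports $x$. Let $\pi_2$ fix $C\cup D$ pointwise and send $B\setminus C$ to a fresh copy $E$ disjoint from $A\cup B\cup D\cup\sigma(A)$, using the extension property to prescribe the \emph{free} edges so that each $e_b$ relates to $\sigma(a)$ exactly as it relates to $d_a$; this is legitimate because $E$'s edges are only forced towards $C\cup D$, and $\sigma(A\setminus C)$ is disjoint from $C\cup D$ (even though it may meet $A\cup B$). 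Then the map fixing $C\cup E=\pi_2(B)$ pointwise and sending $d_a\mapsto\sigma(a)$ is a partial isomorphism (its edge conditions reduce to $E(a,b)$, $E(a,a')$, $E(a,c)$ via the automorphism property of $\sigma$ and the prescriptions), so it extends to some $\pi_3$ by homogeneity. Each $\pi_i$ fixes a support of $x$ pointwise, so $\pi=\pi_3\pi_2\pi_1$ satisfies $\pi\bullet x=x$; moreover $\pi$ agrees with $\sigma$ on $A$, so $\sigma^{-1}\pi$ fixes $A$ pointwise, and $\sigma\bullet x=\sigma\bullet(\sigma^{-1}\pi)\bullet x=\pi\bullet x=x$. (Each $\pi_i$ is also a conjugate of an element of the pointwise stabiliser of $A$ or of $B$, so your stronger generation statement follows too.) Without this---or some equivalent execution of the consistency checks---the proposal establishes only the easy reductions, not the proposition.
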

\begin{example}
  \begin{enumerate}
    \item The set $\VV$ of vertices is a Rado-nominal set, with
      $\sigma\bullet a=\sigma(a)$. The support of vertex $a$ is $\{a\}$.
    \item The set $\VV\times \VV$ of pairs of vertices is a
      Rado-nominal set, with
            $\sigma\bullet (a,b)=(\sigma(a),\sigma(b))$. The support
            of $(a,b)$ is $\{a,b\}$.
            More generally, a finite product of Rado-nominal sets has a
            coordinate-wise group action.
          \item The edge relation $E\subseteq \VV\times \VV$ is a
            Rado-nominal subset (which is formally defined in \S\ref{sec:powersets-and-definiable-sets}) because automorphisms preserve the edge relation. 
            \item Any set $X$ can be regarded with the discrete
              action,
              $\sigma\bullet x=x$, and then every element has empty
              support.
              We regard these sets with the discrete action:
              $1=\{\star\}$; $2=\{0,1\}$; $\NN$; and the unit interval $[0,1]$.
              \end{enumerate}
          \end{example}
\subsection{Powersets and Definable Sets}
\label{sec:powersets-and-definiable-sets}
For any subset $S\subseteq X$ of a Rado-nominal set, we can define
$\sigma\bullet S=\sigma[S]= \{\sigma\bullet x~|~x\in S\}$. 
We let
\begin{equation}2^X=\{S\subseteq X~|~S \text{ has finite support}\}\text.\label{eqn:powerobject}\end{equation}
This is a Rado-nominal set.

\begin{example}\label{example:subset}
  We give some concrete examples of subsets.
\begin{enumerate}
  \item For vertices $b$ and $c$ in $\VV$ with no edge between them,
    the
    set $\{a\in\VV~|~E(a,b) ∧ E(a,c)\}$ is the set of ways of
    forming a horn. It has support $\{b,c\}$. 
  \item $\{(b,c)\in \VV^2~|~E(a,b) ∧ E(a,c) ∧ \neg E(b,c)\}$ is the
    set of horns with apex $a$; it has support $\{a\}$.
  \item $\{(a,b,c)\in \VV^3~|~E(a,b) ∧ E(a,c) ∧ \neg E(b,c)\}$ is the
    set of all oriented horns; it has empty support.
  \item (Non-example) There is a countable totally disconnected subgraph of $\VV$; it does not have finite support as a subset of $\VV$.
  \end{enumerate}
  \end{example}
  In fact, the finitely supported subsets correspond exactly to the
  definable sets in first-order logic over the theory of graphs. The
  following results may be folklore.

  \begin{proposition}\label{prop:radonom-definable-sets}
    Let $S\subseteq \VV^n$, and $A\subseteq \VV$ be finite. The following are equivalent:
    \begin{itemize}
      \item $S=\{(s_1,\dots s_n)~|~\phi(s_1\dots s_n)\}$, for a
        first-order formula
        $\phi$ over the theory of graphs, with parameters in~$A$;
      \item $S$ has support~$A$.
        \end{itemize}
      \end{proposition}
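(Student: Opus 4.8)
The plan is to prove the two implications separately, reading ``$S$ has support $A$'' as ``$A$ supports $S$'', i.e.\ $\sigma\bullet S=S$ for every $\sigma\in\AutRado$ that fixes $A$ pointwise. (The reading cannot be ``$A$ is the \emph{least} support'', since e.g.\ $S=\VV^n$ is definable with any parameters but has empty least support.)

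The downward direction, definable $\Rightarrow$ supported, is routine. Suppose $S=\{(s_1,\dots,s_n)\mid \phi(s_1,\dots,s_n)\}$ for a first-order $\phi$ with parameters in $A$. If $\sigma\in\AutRado$ fixes $A$ pointwise, then, being an automorphism of $(\VV,E)$, it preserves the interpretation of every first-order formula, and it fixes the parameters of $\phi$; hence $\phi(\bar s)$ holds iff $\phi(\sigma\bar s)$ holds, so $\sigma\bullet S=S$. Thus $A$ supports $S$.

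The upward direction, supported $\Rightarrow$ definable, is the substantive one and rests on the \emph{homogeneity} of the Rado graph. First I would record that $(\VV,E)$ is homogeneous: every isomorphism between finite induced subgraphs extends to an automorphism of $\VV$, by a back-and-forth argument using the extension property. The key consequence is an orbit classification: the orbit of a tuple $\bar s\in\VV^n$ under the pointwise stabilizer $\AutRado_A$ is determined exactly by its quantifier-free type over $A$ — the pattern of edges, non-edges, and equalities among the entries of $\bar s$ and the elements of $A$. Indeed, if $\bar s$ and $\bar t$ realize the same such type, then $s_i\mapsto t_i$ together with $\mathrm{id}_A$ is a partial isomorphism fixing $A$, which homogeneity extends to an automorphism in $\AutRado_A$ carrying $\bar s$ to $\bar t$; the converse is immediate. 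Now there are only finitely many quantifier-free types of $n$-tuples over the finite set $A$ (each amounts to a choice of edge/non-edge and equal/distinct for every pair drawn from the $n+|A|$ relevant points), so $\AutRado_A$ has finitely many orbits on $\VV^n$, and each orbit is cut out by a quantifier-free formula with parameters in $A$, namely the conjunction of atomic and negated-atomic formulas describing its type. Since $A$ supports $S$, the set $S$ is $\AutRado_A$-invariant, hence a union of these finitely many orbits; taking the disjunction of their defining formulas exhibits $S$ as definable with parameters in $A$.

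The main obstacle is the orbit classification, i.e.\ the step that quantifier-free types over $A$ exhaust the $\AutRado_A$-orbits; this is where the special structure of the Rado graph, via the extension property and the resulting homogeneity, is essential, and it is the model-theoretic heart of the result. The subsequent finiteness count and the assembly of $S$ as a finite union of definable orbits are then straightforward bookkeeping. Abstractly this is an instance of the Ryll--Nardzewski theorem, with $\omega$-categoricity of the Rado graph preserved under adjoining the finitely many constants naming $A$; I would mention this as the conceptual packaging but prove the orbit classification directly from homogeneity so that the argument is self-contained.
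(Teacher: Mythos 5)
Your proof is correct, and for the forward (definable $\Rightarrow$ supported) direction it coincides with the paper's argument: automorphisms preserve first-order formulas and fix the parameters. For the substantive converse, however, you go further than the paper does. The paper proves that direction directly only for $n=1$: it observes that vertices outside $A$ with the same connectivity to $A$ lie in one $\AutRado_A$-orbit (asserting the required automorphism exists ``by the extension property''), counts the $2^{|A|}$ connectivity classes, and handles $S\cap A$ separately, deferring the general-$n$ case to a citation of Ryll--Nardzewski. You instead give a self-contained proof for all $n$: you make explicit the homogeneity of the Rado graph (every finite partial isomorphism extends to an automorphism, by back-and-forth from the extension property), classify the $\AutRado_A$-orbits on $\VV^n$ by quantifier-free types over $A$, note there are finitely many such types, and write $S$ as a finite union of definable orbits. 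This is the same underlying mechanism --- orbit-counting under the pointwise stabilizer --- but your version buys completeness: the partial-isomorphism-to-automorphism step is exactly what the paper's phrase ``by the extension property, we can find an automorphism fixing $A$'' silently requires, and your quantifier-free-type bookkeeping absorbs the equality-with-parameters cases (the paper's separate $S\cap A$ argument) uniformly. Your opening remark disambiguating ``has support $A$'' as ``$A$ supports $S$'' rather than ``$A$ is the least support'' is also correct and matches the paper's reading; with the least-support reading the statement would be false, as your $S=\VV^n$ example shows.
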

      \begin{proof}
      $(\Rightarrow)$ For all isomorphisms $f\colon \VV \to \VV$ that fix $A$, and for all elements $a_1\dots a_k \in A$ and 
      subsets
      $S=\{(s_1,\dots, s_n)~|~\phi(s_1\dots s_n, a_1\dots a_k)\}$,
      we have 
\[       \phi(f(s_1)\dots f(s_n), a_1\dots a_k) \quad= \quad
       \phi(f(s_1)\dots f(s_n), f(a_1)\dots f(a_k)) \text .
      \]
      Furthermore, $\phi$ is invariant with respect to $f$. Thus, the image $f(S) \subseteq S$. By a similar argument, we have $f\inv(S) \subseteq S$, so that $S \subseteq f(S)$. Thus, $f(S) = S$
      (\hspace{1sp}\cite[Prop. 1.3.5]{markerModelTheoryIntroduction2002}).

      $(\Leftarrow)$
      This is a consequence of the Ryll-Nardzewski theorem for the theory of the Rado graph (which can be shown to be $ω$-categorical by a back-and-forth argument, using the extension property of the Rado graph). But we give here a more direct proof, assuming $n = 1$ for simplicity. Suppose $A \subseteq \VV$ is a finite support for $S$. Then, for any $v,v' \in \VV \backslash A$, if $v$ and $v'$ have the same connectivity to $A$, then they are either both in or not in $S$ since, by the extension property, we can find an automorphism fixing $A$ and sending $v$ to $v'$. The set of vertices with the same connectivity to $A$ as $v$ is definable, and there are only $2^{|A|}$ such sets. Hence, $S \backslash A$ is a union of finitely many definable sets, and as $S \cap A$ is definable (being finite), so is $S = (S \backslash A) \cup (S \cap A)$.
      \end{proof}

      We note that $2^X$ in \eqref{eqn:powerobject} is a canonical
      notion of internal powerset, from a categorical perspective.
\begin{proposition}\label{prop:radonom-topos}
$\RadoNom$ is a Boolean Grothendieck topos, with powerobject $2^X$ in \eqref{eqn:powerobject}. 
\end{proposition}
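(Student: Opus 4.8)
The plan is to exhibit $\RadoNom$ as the category of \emph{continuous} $\AutRado$-sets and then invoke, and adapt, the standard fact that such a category is a Boolean Grothendieck topos, as is done for ordinary nominal sets in~\cite{pitts-nom-book} and for related structures in~\cite{lmcs:1157}. First I would topologise $\AutRado$ by taking the pointwise stabilisers $\{\sigma \mid \sigma|_A = \id_A\}$ of finite $A \subseteq \VV$ as a neighbourhood basis of the identity; this makes $\AutRado$ a non-archimedean topological group. The finite-support condition in the definition of a Rado-nominal set then says precisely that each element has an open stabiliser, i.e.\ that the action is continuous, and equivariant functions are exactly the morphisms of continuous actions. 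So $\RadoNom \simeq \mathrm{Cont}(\AutRado)$.

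Next I would assemble the elementary topos structure by hand, since everything is computed as in $\Set$ and then cut down to the finitely supported elements. Finite limits are formed pointwise with the induced action, and preserve finite support. The subobject classifier is $\Omega = 2$ with the trivial action: a subobject of $X$ is an $\AutRado$-stable subset $S \subseteq X$ (whose elements inherit finite support from $X$), classified by its characteristic map $\chi_S : X \to 2$, which is equivariant exactly because $S$ is stable. The exponential $Y^X$ is the set of \emph{finitely supported} functions $X \to Y$ under the conjugation action $\sigma \bullet f = \sigma \circ f \circ \sigma\inv$; checking its universal property, and that this set really is a Rado-nominal set (using the least-support property recalled above), is the main piece of genuine nominal-set bookkeeping.

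To upgrade this to a \emph{Grothendieck} topos I would verify cocompleteness (colimits again computed in $\Set$, with coproducts and coequalisers preserving finite support) and produce a small generating family. Here I would use that the theory of the Rado graph is $\omega$-categorical (Ryll--Nardzewski, as already invoked in Prop.~\ref{prop:radonom-definable-sets}): $\AutRado$ is oligomorphic, so each $\VV^n$ has only finitely many orbits, whence there are only countably many transitive objects up to isomorphism, and these generate. Equivalently, one can present $\RadoNom$ as sheaves for the atomic topology on a site of finite graphs and appeal directly to Giraud's theorem.

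Finally, for the power object and Booleanness: since $\Omega = 2$ carries the trivial action, the exponential $\Omega^X = 2^X$ is exactly the set of finitely supported functions $X \to 2$, i.e.\ the finitely supported subsets of~\eqref{eqn:powerobject}, so the claimed power object is the categorical one. Booleanness is then immediate in any of three equivalent ways: $1 + 1 \cong 2 = \Omega$, so $[\top,\bot] : 1 + 1 \to \Omega$ is an isomorphism; or each $\mathrm{Sub}(X)$ is the Boolean algebra of $\AutRado$-stable subsets of $X$, closed under complements; or one observes that $\RadoNom$ is an \emph{atomic} topos (every object is a coproduct of orbits, each an atom) and atomic Grothendieck topoi are Boolean. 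The hard part is not Booleanness but the exponential and power-object construction, where the restriction to finitely supported functions must be shown to yield a genuine internal function space, together with the cocompleteness-plus-generators argument that pins down ``Grothendieck'' rather than merely ``elementary''.
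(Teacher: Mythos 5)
Your proposal is correct and takes essentially the same route as the paper: the paper's proof simply observes that $\RadoNom$ is the category of continuous actions of $\AutRado$ (with the product topology, i.e.\ your pointwise-stabiliser basis) and cites the standard result for such categories, together with the same alternative presentation as sheaves on finite graphs with the atomic topology that you mention. You merely unpack the details behind those citations — subobject classifier, exponentials, generators via oligomorphicity — all of which check out.
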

\begin{proof}[Proof notes]
  $\RadoNom$ can be regarded as continuous actions of $\AutRado$, regarded as a topological group with the product topology, and then we invoke standard methods~\cite[Ex.~A2.1.6]{elephant}. It is also equivalent to the category of sheaves over finite graphs and embeddings with the atomic topology. See~\cite{caramello-fraisse,caramello-toposic-galois} for general discussion.
\end{proof}

      \subsection{Probability Measures on Rado-Nominal Sets}\label{sec:rado-prob}
      The finitely supported sets $S\subseteq \VV$ can be regarded as
      `events' to which we would assign a probability. For example,
      if we already have vertices $b$ and $c$, we may want to know the
      chance
      of picking a vertex that forms a horn, and this would be
      the probability of the set in Ex.~\ref{example:subset}(a).
      \begin{definition}\label{def:probability}
        A sequence $S_1,S_2\dots \subseteq X$ is said to be
        \emph{support-bounded} if there is one finite set $A\subseteq\VV$ that
        supports all the sets $S_i$.

        A function $\mu:2^X\to[0,1]$ is \emph{(internally)
          countably additive} if for any
        support-bounded sequence $S_1,S_2\dots \subseteq X$ of disjoint sets, 
        \[\textstyle
          \mu(\biguplus_{i=1}^\infty S_i)=\sum_{i=1}^\infty\mu(S_i)\text.
        \]
        
        A \emph{probability measure} on a Rado-nominal set $X$ is an equivariant function
        $\mu:2^X\to [0,1]$ that is internally countably additive, such that $\mu(X) = 1$. \end{definition}
      We remark that there are two subtleties here. First,
      we restrict to support-bounded sequences. These are the
      correctly internalized notion of sequence in Rado-nominal
      sets,
      since they correspond precisely to finitely-supported functions $\NN\to
      2^X$.
      Second, we consider a Rado-nominal set to be equipped with its internal powerset $2^X$, 
      rather than considering 
      sub-$\sigma$-algebras.


\newcommand{\MonadT}{\mathcal{T}}
\newcommand{\MonadP}{\mathcal{P}}
\newcommand{\MonadF}{\mathcal{F}}

\paragraph{Measures on the space of vertices}\label{sec:er-meas}
We define an internal probability measure
(Def.~\ref{def:probability}) on the space~$\VV$ of
vertices, which, we will show, corresponds to the Erd\H{o}s-R\'enyi
graphon.
Fix $α\in[0,1]$, the chance of an edge. 

We define the measure $\nu_α$ of a definable set $S\in 2^\VV$ as follows.
Suppose that $S$ has support $\{a_1,\dots, a_n\}$.
We choose an enumeration of vertices $(v_1,\dots,v_{2^n})$ in $\VV$
(disjoint from $\{a_1,\dots,a_n\}$)
that covers all the $2^n$ possible edge relationships that a vertex
could have with the $a_i$'s. (For example, $v_1$ has no edges to any $a_i$, and $v_{2^n}$ has an edge to
every $a_i$, and the other $v_j$'s have the other possible edge relationships.) Let:
\begin{equation}\label{eqn:graphon-to-measure}
  \nu_α(S) =\sum_{j=1}^{2^n}[v_j\in S] \prod_{i=1}^n \big(α E(v_j,a_i)+(1-α)(1-E(v_j,a_i))\big)\text. 
\end{equation}
\begin{proposition}
  The assignment given in \eqref{eqn:graphon-to-measure}
  is an internal probability measure (Def.~\ref{def:probability}) on $\VV$. 
\end{proposition}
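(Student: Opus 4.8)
The plan is to verify, in turn, the four conditions packaged into Definition~\ref{def:probability}: that \eqref{eqn:graphon-to-measure} gives a well-defined function $2^\VV\to[0,1]$, that it is equivariant, that it is normalized ($\nu_\alpha(\VV)=1$), and that it is internally countably additive. Throughout I would abbreviate the weight attached to the $j$-th representative vertex by $w_j\defeq\prod_{i=1}^n\big(\alpha E(v_j,a_i)+(1-\alpha)(1-E(v_j,a_i))\big)$, noting that each factor is either $\alpha$ or $1-\alpha$, so $w_j\in[0,1]$. The single combinatorial fact underlying everything is that summing $w_j$ over all $2^n$ possible edge-patterns to $\{a_1,\dots,a_n\}$ gives $\prod_{i=1}^n(\alpha+(1-\alpha))=1$.

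I expect the only genuinely delicate step to be well-definedness, which splits into two parts. First, the value must not depend on the chosen enumeration $(v_1,\dots,v_{2^n})$: this is exactly where the extension property of the Rado graph is used. If two vertices have the same edge-pattern to the support $A=\{a_1,\dots,a_n\}$, the extension property yields an automorphism fixing $A$ pointwise and sending one to the other; since $A$ supports $S$, the indicator $[v_j\in S]$ is unchanged, while $w_j$ depends only on the edge-pattern. Hence each summand, and so the whole sum, is independent of the representatives. Second, the value must not depend on the chosen support. Using that any superset of a support is again a support and that two supports are both contained in their union, it suffices to compare the computation over $A$ with that over an enlargement $A'=A\cup\{a_{n+1},\dots,a_{n+k}\}$. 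Grouping the $2^{n+k}$ patterns over $A'$ by their restriction to $A$, each group consists of the $2^k$ extensions, all sharing the same membership in $S$ (as $S$ is supported by $A$); the weight of each $A'$-pattern factors as the weight of its $A$-restriction times $\prod_{i=n+1}^{n+k}(\cdots)$, and summing the latter over the $2^k$ extensions yields $1$. Thus the $A'$-computation collapses to the $A$-computation.

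Granting well-definedness, the remaining conditions are quick. For the range, $\nu_\alpha(S)=\sum_j[v_j\in S]\,w_j\le\sum_j w_j=1$ and is manifestly non-negative, so $\nu_\alpha(S)\in[0,1]$; taking $S=\VV$, every indicator equals $1$, so $\nu_\alpha(\VV)=\sum_j w_j=1$. For equivariance, given $\sigma\in\AutRado$ I would compute $\nu_\alpha(\sigma\bullet S)$ using the support $\sigma[A]$ and the transported enumeration $(\sigma(v_j))_j$: since $\sigma$ preserves $E$ we have $E(\sigma(v_j),\sigma(a_i))=E(v_j,a_i)$ and $\sigma(v_j)\in\sigma[S]\iff v_j\in S$, so the computation matches that for $S$ termwise, giving $\nu_\alpha(\sigma\bullet S)=\nu_\alpha(S)$ (recall $[0,1]$ carries the trivial action, so this is the required equivariance).

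Finally, internal countable additivity is where the internalized notion of sequence pays off. Given a support-bounded disjoint family $S_1,S_2,\dots$, fix a common finite support $A$; then $\biguplus_i S_i$ is again supported by $A$, so all the measures may be computed with a single shared enumeration. Substituting $[v_j\in\biguplus_i S_i]=\sum_i[v_j\in S_i]$ (a sum with at most one nonzero term, by disjointness) into \eqref{eqn:graphon-to-measure} and interchanging the finite sum over $j$ with the countable sum over $i$ (legitimate since all terms are non-negative) yields $\nu_\alpha(\biguplus_i S_i)=\sum_i\nu_\alpha(S_i)$. Thus support-boundedness reduces countable additivity to a finite identity, which is precisely what makes this last verification routine rather than an analytic limiting argument.
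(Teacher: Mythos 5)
Your proof is correct and follows essentially the same route as the paper's: well-definedness split into enumeration-independence (via the automorphism/extension-property argument underlying Prop.~\ref{prop:radonom-definable-sets}) and support-independence (the direct calculation the paper leaves implicit), then equivariance via the transported enumeration $(\sigma(v_j))_j$, and countable additivity via the identity $[v_j\in\biguplus_i S_i]=\sum_i[v_j\in S_i]$ together with a common support. You simply fill in details the paper compresses, and your normalization argument (all indicators equal $1$, so $\nu_\alpha(\VV)=\sum_j w_j=1$) is a harmless variant of the paper's observation that $\VV$ has empty support.
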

\begin{proof}
The function $\nu_α$ is well-defined: it does not depend on the choice of $v_j$'s (by Prop.~\ref{prop:radonom-definable-sets}), nor on the choice of support (by direct calculation).
It is equivariant, since for $\sigma \bullet S$, a valid enumeration of vertices is given by $\sigma \bullet v_1, \dots \sigma \bullet v_{2^{n}}$. Also, $\nu(\VV) = 1$, since $\VV$ has empty support. Internal countable additivity follows from the identity $
\left[v_j \in \biguplus_{i=1}^\infty S_i\right] = \sum_{i=1}^{\infty}[v_j \in S_i]
$.
\end{proof}

    \paragraph*{Remark}
The definitions and results of this section appear to be novel. 
    However, the general idea of considering measures on formulas which are invariant to substitutions that permute the variables
goes back to work of Gaifman \cite{MR175755}.
The paper~\cite{MR3515800}
characterizes those countably infinite graphs that can arise with probability~$1$ in that framework;
see \cite{properly-ergodic}
for a discussion of how Gaifman's work connects to
Prop.~\ref{prop:lovasz}.

\subsection{Nominal Probability Monads}
\label{sec:nom-monad}
Since $\RadoNom$ is a Boolean topos with natural numbers object
(Prop.~\ref{prop:radonom-topos}), we can interpret measure-theoretic
notions in the internal language of the topos, as long as they do not
require the axiom of choice. We now spell out the resulting development, without assuming familiarity with topos theory. By doing this, we build new probability
monads on $\RadoNom$. 

\subsubsection{Finitely Supported Functions and Measures}
Let $X$ and $Y$ be Rado-nominal sets.
The set of all functions $X\to Y$ has an action of $\AutRado$,
given by $(\sigma\bullet f)(x)=\sigma\inv\bullet (f(\sigma\bullet
x))$. The function space $[X\Rightarrow Y]$ comprises those
functions that have finite support under this action. Categorically,
this structure is uniquely determined by the `currying' bijection, natural in~$Z$:
\[
  \RadoNom(Z\times X,Y)\cong \RadoNom(Z,X\Rightarrow Y)\text .
\]
(For example, the powerobject $2^X$ (\S\ref{sec:powersets-and-definiable-sets}) can be regarded as
$[X\Rightarrow 2]$, if we regard a set as its characteristic
function.)

In Def.~\ref{def:probability}, we focused on equivariant
probability measures.
We generalize this to finitely supported measures. For example, pick a
vertex $a\in \VV$. Then, the Dirac measure on $\VV$ (i.e.~$\delta_a(S)=1$ if $a\in S$,
and $\delta_a(S)=0$ if $a\not\in S$) has support $\{a\}$.

\begin{definition}
\label{def:finitely-supported-probability-measure}
For a Rado-nominal set $X$, let $\MonadP (X)$ comprise the
    finitely supported functions $\mu : 2^X\to[0,1]$ that are
    internally countably additive, and satisfy $\mu(X) = 1$.
    This is a Rado-nominal set, as a subset of
    $[2^X\Rightarrow[0,1]]$. Functions in $\MonadP (X)$
    are called \emph{finitely supported probability measures}.
\end{definition}

\subsubsection{Internal Integration}
We revisit some basic integration theory in this nominal setting.
In traditional measure theory, one can define the Lebesgue integral of a \emph{measurable} function ${f : X \to [0,1]}$ by
$
  \int f(x) \mu(\dd x) = \sup \sum_{i = 1}^n r_i \mu(U_i)
$
where the supremum ranges over simple functions ${\sum_i r_i [- \in U_i]}$ with $U_i$ measurable in $X$ and bounded above by $f$ (\S\ref{sec:probthy}).
The same construction works in the internal logic of $\RadoNom$.

Note that the following does not mention $f$ being measurable: since $X$ is considered to have its internal powerset $\sigma$-algebra, finite-supportedness 
implies `measurability' here.

\begin{proposition}\label{prop:internal-integration}
  Let $\mu\in \MonadP (X)$ be a finitely supported probability measure
  on $X$.
  For any finitely supported function $f\colon X\to [0,1]$,
  the internally-constructed Lebesgue integral $\int f(x)\,\mu(\dd x) \in [0,1]$ exists.
  Moreover, integration is an equivariant map
  \[
    \int : \MonadP (X)\times [X\Rightarrow [0,1]]\to [0,1]
  \]
  which preserves suprema of internally countable monotone sequences in its second argument.
  \end{proposition}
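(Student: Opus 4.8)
The plan is to carry out the classical simple-function construction of the Lebesgue integral recalled in \S\ref{sec:probthy} inside the internal logic of the topos $\RadoNom$ (Prop.~\ref{prop:radonom-topos}). Since that topos is Boolean and has a natural numbers object, ordinary classical reasoning is available, and the only thing the construction must avoid is the axiom of choice, which the definition in \S\ref{sec:probthy} does. Anything definable in the internal language is automatically interpreted as an equivariant map, so the real work is twofold: (i)~to check the construction goes through when ``measurable'' is read as ``finitely supported''; and (ii)~to verify at each step that the countable families of events to which internal countable additivity (Def.~\ref{def:probability}) is applied are genuinely support-bounded.

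For existence, I would fix a finite set $B\subseteq\VV$ supporting both $\mu$ and $f$ and form the decimal approximations $f_k\defeq\lfloor 10^k f\rfloor/10^k$. Because the action on $[0,1]$ is trivial, $\sigma\bullet f_k=\lfloor 10^k(\sigma\bullet f)\rfloor/10^k$, so each $f_k$ is again supported by $B$; moreover $f_k$ takes only finitely many values, with every level set $f_k\inv(j/10^k)$ lying in $2^X$ and supported by $B$. Hence $\int f_k\,\dd\mu=\sum_j (j/10^k)\,\mu\big(f_k\inv(j/10^k)\big)$ is a well-defined finite sum. The sequence $(\int f_k\,\dd\mu)_k$ is monotone and bounded by $1$; being a sequence of reals it has empty support, so its supremum exists by ordinary completeness of $[0,1]$, and I take this to be $\int f\,\dd\mu$. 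Independence of the approximating sequence is the usual argument, in which the only appeal to additivity is to the $B$-supported level sets, so the support-bounded hypothesis of Def.~\ref{def:probability} is met. Equivariance is then immediate from the form of the construction: replacing $(\mu,f)$ by $(\sigma\bullet\mu,\sigma\bullet f)$ transports each level set by $\sigma\inv$ while transporting $\mu$ accordingly, leaving every term $\mu(f_k\inv(r))$ unchanged, so $\int(\sigma\bullet\mu,\sigma\bullet f)=\int(\mu,f)$; as $[0,1]$ carries the trivial action, this is exactly equivariance of $\int$.

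For preservation of suprema, the key observation is that an \emph{internally countable} monotone sequence is precisely a support-bounded one: there is a single finite $B'$ supporting every $f^{(n)}$, so the pointwise supremum $f=\sup_n f^{(n)}$ is also $B'$-supported and lies in $[X\Rightarrow[0,1]]$. The identity $\int f\,\dd\mu=\sup_n\int f^{(n)}\,\dd\mu$ is then the monotone convergence theorem, whose textbook proof (pick a simple $s\le f$ and $c<1$, set $A_n=\{x:f^{(n)}(x)\ge c\,s(x)\}$, and use continuity from below) is choice-free. All the sets $A_n$ are supported by $B\cup B'$, so continuity from below is an instance of internal countable additivity applied to a support-bounded sequence of increments, and the argument closes.

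I expect the main obstacle to be exactly this support bookkeeping. The definition of countable additivity in Def.~\ref{def:probability} is deliberately restricted to support-bounded sequences, so the substance of the proof is to confirm that, for a \emph{fixed} finitely supported integrand and measure, every countable family of events generated by the construction shares one finite support. Once that is in hand, the restricted additivity suffices, no genuinely non-internal countable collection is ever needed, and the classical theory transfers verbatim.
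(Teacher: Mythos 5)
Your proposal is correct and follows essentially the same route as the paper: both internalize the classical simple-function construction of the Lebesgue integral in $\RadoNom$, with the two key observations that every family of events the construction generates is support-bounded (so the restricted countable additivity of Def.~\ref{def:probability} suffices) and that the trivial action on $[0,1]$ yields equivariance, the suprema-preservation claim being the internalized monotone convergence theorem. The only cosmetic difference is that the paper defines the integral as the supremum over \emph{all} dominated simple functions with finitely supported level sets --- so existence follows at once from boundedness, with no approximating sequence and no well-definedness argument --- whereas you use the canonical increasing decimal approximations and must then check independence of the choice of sequence, which costs you an extra monotone-convergence-style argument.
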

  \begin{proof}
    If $U_1,\ldots,U_n \subseteq X$ are finitely supported, $r_1,\ldots,r_n \in [0,1]$, and $\sum_i r_i [- \in U_i] \leq f$, then by ordinary additivity of $\mu$, we have $\sum r_i \mu(U_i) \in [0,1]$.
    By ordinary real analysis, the supremum of all such values exists and is in $[0,1]$.
    For equivariance, recall that $[0,1]$ is equipped with the trivial action of $\AutRado$. Use the fact that $\sum_i r_i [- \in U_i] \leq f$ if and only if $\sum_i r_i [- \in \sigma\bullet U_i] \leq \sigma \bullet f$.
    The last claim is the monotone convergence theorem internalized to $\RadoNom$.
  \end{proof}

\subsubsection{Kernels and a Monad}
  We can regard a `probability kernel' as a finitely supported function
  $k:X\to \MonadP (Y)$. Equivalently, $k$ is a finitely supported function
  $k:X\times 2^Y\to[0,1]$ that is countably additive and has mass $1$ in its second
  argument. 

  (In traditional measure theory, one would explicitly ask that $k$ is measurable
  in its first argument, but as we observed, finite-supportedness already implies it.)
  
  As usual, probability kernels compose, and this allows us to regard
  them as Kleisli morphisms for a monad (Def.~\ref{def:monad}), defined as follows.
\begin{definition}\label{def:monad-radonom}
  We define the strong monad $\MonadP$ on $\RadoNom$ as follows.
  \begin{itemize}
  \item For a Rado-nominal set $X$, let $\MonadP (X)$ comprise the
    finitely supported probability measures (Def.~\ref{def:finitely-supported-probability-measure}).
  \item
    The unit of the monad $\eta_X:X\to \MonadP (X)$ is the Dirac measure,
    $\eta_X(x)(S)=[x\in S]$.
  \item
    The bind $(\bind) \colon \MonadP (X)\times (X\Rightarrow \MonadP (Y))\to \MonadP (Y)$ is given
    by
    \[
      (\mu \bind k)(S)=\int_X k(x,S)\,\mu(\dd x).
    \]
  \end{itemize}
\end{definition}
We note that this is similar to the `expectations monad'~\cite[Thm.~4]{EPTCS95.12}. 
\subsubsection{Commuting Integrals (Fubini)}
  \label{sec:fubini}
  For measures $\mu_1\in \MonadP (X)$ and $\mu_2\in \MonadP (Y)$, the
  monad structure allows us to define a product measure
  \begin{equation}
    \begin{aligned}
    & \mu_1\otimes\mu_2 = \big(\mu_1\bind (\lambda x.\,\mu_2\bind\,\lambda
    y.\,\eta(x,y))\big)
    \\
    & \int f(x,y)\,(\mu_1\otimes \mu_2)(\dd (x,y)) =
    \int\int f(x,y)\,\mu_2(\dd y)\,\mu_1(\dd x)\text .
  \end{aligned}
\end{equation}
Although this iterated integration is reminiscent of the traditional
approach, in general we cannot reorder integrals (`Fubini does not
hold'). For example, given two measures $\nu_α$ and $\nu_β$ for $α \neq β$ and
$f$ being the characteristic function of the set $\{(x, y): E(x, y)\} \subseteq \VV^2$, we have
\begin{equation}
\begin{aligned}
& \int \int [E(x, y)]\,\nu_α(\dd y)\,\nu_β(\dd x) = \int α\, \nu_β(\dd x) = α  \\
&\;\; {} \neq β = \int β\, \nu_α(\dd y) = \int \int [E(x, y)]\,\nu_β(\dd x)\,\nu_α(\dd y) \text .
\end{aligned}
\end{equation}
However, it does hold when we consider only copies of the same measure.
\begin{proposition}\label{thm:graphon:commutative}
  For $\nu_α\in \MonadP (\VV)$ as in (\ref{eqn:graphon-to-measure}), $\nu_α$ commutes with
  $\nu_α$. That is, for any finitely supported $f:\VV\times\VV\to [0,1]$,
  \[     \int \int f(x,y)\,\nu_α(\dd y)\,\nu_α(\dd x)
    =
       \int\int f(x,y)\,\nu_α(\dd x)\,\nu_α(\dd y)
    \text.\]
\end{proposition}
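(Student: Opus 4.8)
The plan is to reduce $f$ to an internal simple function and then evaluate both iterated integrals as the same finite sum. First I would use the characterization of finitely supported objects from Proposition~\ref{prop:radonom-definable-sets}. Say $f$ has support $A=\{a_1,\dots,a_n\}$. Every automorphism fixing $A$ pointwise fixes $f$, and since $[0,1]$ carries the trivial action, $f$ is constant on each orbit of the stabilizer of $A$ acting on $\VV\times\VV$. By homogeneity of the Rado graph there are only finitely many such orbits, classified by the quantifier-free type of $(x,y)$ over $A$; hence $f$ takes finitely many values and each level set $f^{-1}(r)$ is definable with parameters in $A$. In particular $f$ is an internal simple function. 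The sets where $x\in A$, $y\in A$, or $x=y$ are singletons or diagonals of support contained in $A$, and these have $\nu_\alpha$-measure zero by \eqref{eqn:graphon-to-measure}, so I may restrict attention to generic distinct $x,y\notin A$. On this generic locus $f(x,y)=F(S,T,e)$ depends only on the connectivity types $S=\{a\in A:E(x,a)\}$ and $T=\{a\in A:E(y,a)\}$ together with the single bit $e=[E(x,y)]$.

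Next I would compute the inner integral explicitly. For a fixed generic $x$ with type $S$, the function $y\mapsto f(x,y)$ is finitely supported with support $A\cup\{x\}$, so \eqref{eqn:graphon-to-measure} applies with the enlarged support of size $n+1$, in which $y$'s edge to each support vertex (including the adjoined vertex $x$) is an independent Bernoulli$(\alpha)$ coin. This gives
\[
\int f(x,y)\,\nu_\alpha(\dd y)=\sum_{T\subseteq A}\sum_{e\in\{0,1\}}F(S,T,e)\,\alpha^{|T|+e}(1-\alpha)^{\,n-|T|+1-e},
\]
a quantity depending only on $S$. Integrating this against $\nu_\alpha(\dd x)$ contributes the factor $\alpha^{|S|}(1-\alpha)^{n-|S|}$, so that
\[
\int\!\!\int f(x,y)\,\nu_\alpha(\dd y)\,\nu_\alpha(\dd x)=\sum_{S,T\subseteq A}\sum_{e\in\{0,1\}}F(S,T,e)\,\alpha^{|S|+|T|+e}(1-\alpha)^{\,2n+1-|S|-|T|-e}.
\]

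Finally I would repeat the computation in the opposite order. Fixing generic $y$ of type $T$ and integrating over $x$ supported on $A\cup\{y\}$ yields the analogous sum, and integrating against $\nu_\alpha(\dd y)$ produces exactly the same finite sum. The reason both orders agree is that the only variable shared between the two vertices is the edge bit $e=[E(x,y)]=[E(y,x)]$, which, by symmetry of $E$, is sampled as the same independent Bernoulli$(\alpha)$ coin whichever vertex is drawn second, while the types $S,T$ to $A$ contribute independently; thus the joint law over $(S,T,e)$ is identical in both orders. The hard part will be the reduction in the first paragraph: making precise that $f$ is an internal simple function determined by finitely many types and justifying that the degenerate configurations are $\nu_\alpha$-null, so that both integrals genuinely collapse to the displayed finite sum. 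Once that is in place, the equality is immediate from the manifest symmetry of the weights, and it is exactly this shared-edge symmetry that is absent for $\nu_\alpha$ against $\nu_\beta$ with $\alpha\neq\beta$, explaining why Fubini fails there but holds here.
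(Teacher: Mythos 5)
Your proof is correct and follows essentially the same route as the paper's: reduce a finitely supported $f$ to (indicators of) definable sets classified by quantifier-free types over the support $A$, discard the null degenerate configurations ($x=y$, $x\in A$, $y\in A$), and evaluate both iterated integrals as the identical sum of weights $\alpha^{|S|+|T|+e}(1-\alpha)^{2|A|+1-|S|-|T|-e}$. The paper gives this as a sketch over the atomic two-vertex-extension formulas $\Phi_{A,\phi,\psi,\epsilon}$; your write-up, including the orbit argument showing that finite support forces $f$ to be an internal simple function, fills in exactly those details.
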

\begin{proof}[Proof notes]
  By Prop.~\ref{prop:radonom-definable-sets} and \ref{prop:internal-integration}, it suffices to check on the indicator functions of definable subsets of $\VV^2$.
  The indicators of sets $\{(x,y) \mid \Phi(x,y)\}$ where $\Phi(x,y)$ is a disjunction of $x = y$, $x = a$, or $y = a$ for some $a \in \VV$ are seen to have integral $0$ on both sides.
  The remaining possibilities can be reduced to 
  the case where $\Phi_{A,\phi,\psi,\epsilon}(x,y)$ is $(x,y \notin A) \wedge (x \neq y) \wedge (E(x,y) \leftrightarrow \epsilon) \wedge \bigwedge_{a \in A} (E(a,x) \leftrightarrow \phi_a) \wedge (E(a,y) \leftrightarrow \psi_a)$
  where $A \subseteq \VV$ is a finite set, $\epsilon \in \{\bot,\top\}$, and $\phi,\psi \in \{\bot,\top\}^A$.
  This formula corresponds to choosing a two-vertex extension of the finite graph spanned by $A \subseteq \VV$.
  Intuitively, the two double integrals correspond to the two alternative two-step computations of the conditional probability of extending the graph $A$ to this extension according to which of the two vertices is sampled first, and indeed both evaluate to $α^k (1-α)^{2|A|+1 - k}$ where $k = [\epsilon] + \sum_{a \in A} ([\phi_a] + [\psi_a])$.
\end{proof}

\paragraph*{Remark} In traditional measure theory, iterated integrals are defined using product $\sigma$-algebras. Here we have not constructed product $\sigma$-algebras, but rather always take the internal powerset as the $\sigma$-algebra. This allows us to view all the definable sets as measurable on $\VV^n$ (Prop.~\ref{prop:radonom-definable-sets}), which is very useful. We remark that alternative product spaces also arise in non-standard approaches to graphons (see~\cite[\S6]{tao-graphons} for an overview), and also in quasi-Borel spaces~\cite{qbs} for different reasons.

\subsubsection{A Commutative Monad}
  \label{sec:comm-monad}
  We now use Prop.~\ref{thm:graphon:commutative} to build a commutative affine submonad $\MonadP_α$ of the
  monad~$\MonadP$,
  which we will use to model the graph interface for the probabilistic programming language.
  With Prop.~\ref{prop:radonom-topos}, we use the following general result.
  \begin{proposition}
    Let $\MonadT$ be a strong monad on a Grothendieck topos. Consider a family of morphisms
    $\{f_i\colon X_i\to \MonadT (Y_i)\}_{i\in I}$.
    \begin{itemize}
    \item There is a least strong submonad $\MonadT_{f}\subseteq \MonadT$ through which all
      $f_i$ factor.
    \item If the morphisms $f_i$ all commute with each other, then
      $\MonadT_f$ is a commutative monad (Def.~\ref{def:affine-monad}).
    \end{itemize}
  \end{proposition}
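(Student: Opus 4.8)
The plan is to prove the two bullets in turn, throughout treating $\MonadT$ as a self-enriched (internal) monad—recall that on a cartesian closed category a strong monad is the same as an internally-enriched monad—so that constructions and equational arguments may be carried out in the internal language of the topos.

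For the first bullet, I would construct $\MonadT_f$ as the intersection of all strong submonads of $\MonadT$ through which every $f_i$ factors. Here a \emph{strong submonad} is a subfunctor $S\hookrightarrow\MonadT$ such that each unit $\eta_X$ factors through $S(X)$ and such that $S$ is closed under the strength and under $\bind$ (that is, if $u\colon Z\to S(X)$ and $v\colon Z\times X\to S(Y)$ then $u\bind v$ factors through $S(Y)$). The family of such submonads is nonempty since $\MonadT$ itself qualifies, and it is closed under intersection: for each object $X$ the subobject lattice $\mathrm{Sub}(\MonadT(X))$ is a complete lattice because a Grothendieck topos is complete and well-powered, so the pointwise intersection exists, and each closure condition—being a subfunctor, containing the units, and closure under strength and $\bind$—is manifestly preserved by intersection. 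The resulting subfunctor is then the least strong submonad through which all $f_i$ factor, and it inherits the monad laws of $\MonadT$ automatically since the inclusion is monic and the laws are equations.

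For the second bullet the key ingredient is the standard \emph{commutant lemma}: for a fixed (generalized) Kleisli morphism $g$, the collection $\MonadT_g$ of Kleisli morphisms commuting with $g$ in the sense of Definition~\ref{def:affine-monad} contains all the units and is closed under $\bind$ and under strength, hence is a strong submonad of $\MonadT$; moreover the relation ``commutes with'' is symmetric. Granting this, I argue in two steps. First, fix a generator $f_i$; since by hypothesis every $f_j$ (including $f_i$ itself) commutes with $f_i$, all generators factor through $\MonadT_{f_i}$, so minimality gives $\MonadT_f\subseteq\MonadT_{f_i}$, whence every morphism into $\MonadT_f$ commutes with each generator $f_i$. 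Second, fix any morphism $\mu$ into $\MonadT_f$; by the previous step and symmetry every generator commutes with $\mu$, so all generators factor through the commutant submonad $\MonadT_\mu$, and minimality again gives $\MonadT_f\subseteq\MonadT_\mu$. Hence $\mu$ commutes with every morphism into $\MonadT_f$, and as $\mu$ was arbitrary, $\MonadT_f$ is commutative (Def.~\ref{def:affine-monad}).

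The main obstacle is the commutant lemma itself—verifying that commuting with a fixed $g$ is preserved under Kleisli extension. This is the only place where the strength coherences and the associativity of $\bind$ are genuinely used, and it is precisely the ``standard technique'' alluded to in the text; I would either cite Kock's results on commutative monads or discharge it by unwinding the two composites of Definition~\ref{def:affine-monad} in the internal language, where it is the familiar diagram chase over $\Set$ and uses no choice. A secondary point is to confirm that the internal reasoning is legitimate, i.e.\ that the strength makes $\MonadT$ an internal monad and that the subobject and intersection constructions above are the internal counterparts of the external ones; both follow from $\RadoNom$ being a (Boolean) Grothendieck topos with its canonical cartesian closed structure.
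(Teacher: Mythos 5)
Your proof is correct, but it follows a genuinely different route from the paper's. The paper builds $\MonadT_f$ \emph{from below}: it fixes a regular cardinal $\lambda$ so that the topos is locally $\lambda$-presentable and the $Y_i$ are $\lambda$-presentable, works in the (small) complete lattice of $\lambda$-accessible subfunctors of $\MonadT$, takes the least fixed point of the operator ``close under the images of the $f_i$, of $\eta$, and of bind'' — concretely an ordinal-indexed chain — and then proves commutativity by transfinite induction along that chain. You build $\MonadT_f$ \emph{from above}, as the intersection of all strong submonads through which the $f_i$ factor, handling size pointwise via well-poweredness (the class of such submonads may be proper, but at each object $X$ their values form a subset of the small lattice $\mathrm{Sub}(\MonadT(X))$, so the meet exists), and you obtain commutativity by a double-commutant argument, the exact analogue of the group-theoretic fact that pairwise-commuting generators generate an abelian subgroup. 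This avoids all of the accessibility bookkeeping, which is a real simplification. The one step you should make explicit is that the commutant $\MonadT_g$ is an honest \emph{subfunctor}, not merely a class of Kleisli maps closed under bind: your minimality argument needs $\MonadT_g$ to be a strong submonad in the sense of your first bullet. In a Grothendieck topos this is unproblematic — for $g\colon W\to\MonadT(V)$, ``commutes with $g$'' at an object $X$ is classified by the equalizer of the two transposes $\MonadT(X)\to [W\Rightarrow\MonadT(X\times V)]$ of the double-strength composites, using cartesian closedness and completeness — and closure of this subobject under units, functorial action, and bind is exactly the equational computation you defer (if $u$ and $v$ commute with $g$ then so does $u\bind v$, by associativity together with the observation that commutation with $g$ gives interchange under an arbitrary continuation). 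It is worth noting that the paper's induction step (``commutativity is preserved along the chain'') rests on this same computation, plus exactness of filtered colimits at limit stages, both of which it likewise elides, so the two proofs share their technical core. What the paper's construction buys is a concrete description of $\MonadT_f$ as generated from the $f_i$ by bind; what yours buys is modularity and the complete absence of cardinal arithmetic.
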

  \begin{proof}[Proof notes]\newcommand{\Sub}{\mathrm{Sub}}
    Our argument is close to \cite[\S2.3]{kammar-mcdermott} and also
    \cite[Thms.~7.5 \& 12.8]{kammar}.

    We let $\MonadT_f$ be the least subfunctor of $\MonadT$ that contains the images of the $f_i$'s and $\eta$, and is closed under the image of monadic bind ($\bind$).
    To show that this exists, we proceed as follows. First, fix a regular cardinal $\lambda>I$ such that $Y_i$'s are all $\lambda$-presentable, such that the topos is locally $\lambda$-presentable (e.g.~\cite{lp-adamek-rosicky}). 
    Consider the poset $\Sub_\lambda(\MonadT)$ of $\lambda$-accessible subfunctors of $\MonadT$. The cardinality bound $\lambda$ ensures it is small. Ordered by pointwise inclusion, this is a complete lattice: the non-empty meets are immediate, and the empty meet requires us to consider the $\lambda$-accessible coreflection of $\MonadT$.

We defined $\MonadT_f$ by a monotone property which we can regard as a monotone operator on this complete lattice $\Sub_\lambda(\MonadT)$, and so the least $\lambda$-accessible subfunctor exists. This is $\MonadT_f$. Concretely, it is a least upper bound of an ordinal indexed chain.
    The chain starts with the functor
    \[\textstyle F_0(Z)=\bigcup_{i\in I,g:Y_i\to Z}\mathsf{image}(\MonadT(g)\circ f_i)\quad\subseteq
      \MonadT(Z)\]
    which is $\lambda$-accessible because the $Y_i$'s are $\lambda$-presentable.
    The chain iteratively closes under the image of monadic bind, until we reach a subfunctor that is a sub\emph{monad} of $\MonadT$. 
    
    To see that $\MonadT_f$ is commutative, we appeal to (transfinite) induction. Say that a subfunctor~$F$ of $\MonadT$ is \emph{commutative} if all morphisms that factor through $F$ commute (Def.~\ref{def:affine-monad}), and then note that the property of being commutative is preserved along the ordinal indexed chain. 
        \end{proof}
  With this in mind, fixing a measure
  $\nu_α$ as in \eqref{eqn:graphon-to-measure},
  we form the least submonad $\MonadP_α$ of $\MonadP$ induced by the morphisms
  \begin{equation} \label{eqn:generators-monad}
    \nu_α:1\to \MonadP (\VV)\qquad
   \tbernoulli:[0,1]\to \MonadP(2) 
  \end{equation}
  where $\tbernoulli(r)=r\cdot \eta(0)+(1-r)\cdot \eta(1)$.
  \begin{corollary}\label{coro:least-submonad-is-comm-affine}
    The least submonad $\MonadP_α$ of the probability monad $\MonadP$ induced by
    the morphisms in \eqref{eqn:generators-monad}
    is a commutative affine monad (Def.~\ref{def:affine-monad}).
  \end{corollary}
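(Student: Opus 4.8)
The plan is to invoke the general submonad result established immediately above, applied to the strong monad $\MonadP$ on the Grothendieck topos $\RadoNom$ (Def.~\ref{def:monad-radonom}, Prop.~\ref{prop:radonom-topos}) with the two-element generating family $\{\nu_\alpha,\tbernoulli\}$ of \eqref{eqn:generators-monad}. That result already guarantees that $\MonadP_\alpha$ exists as the least strong submonad through which both generators factor, and that it is commutative \emph{provided the generators pairwise commute}. Hence the work reduces to two independent tasks: checking pairwise commutation of $\nu_\alpha$ and $\tbernoulli$ (for commutativity), and verifying $\MonadP_\alpha(1)\cong 1$ directly (for affineness). Note that commutation in the sense of Def.~\ref{def:affine-monad} is a symmetric relation, so only the three unordered pairs need to be considered.

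For commutativity there are three cases. The case of $\nu_\alpha$ with itself is exactly Prop.~\ref{thm:graphon:commutative}, which is the substantive analytic input. The remaining two cases involve $\tbernoulli\colon[0,1]\to\MonadP(2)$, whose codomain is the discrete set $2$ with trivial $\AutRado$-action; a measure in the image of $\tbernoulli$ is an ordinary finite distribution, so integrating against it is a finite weighted sum. For $\tbernoulli$ with $\tbernoulli$, commutation is the elementary Fubini identity for two finite sums. For $\nu_\alpha$ with $\tbernoulli$, reordering the iterated integral amounts to exchanging a finite sum with an $\int(-)\,\nu_\alpha(\dd x)$, which is immediate from linearity of the internal Lebesgue integral (Prop.~\ref{prop:internal-integration}). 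Since Def.~\ref{def:affine-monad} phrases commutation for morphisms sharing a domain, one first pulls both generators back along the projections out of $X_i\times X_j$ to a common source (here $1\times[0,1]\cong[0,1]$) and then checks the displayed identities pointwise. The only place demanding any care is this mixed $\nu_\alpha$--$\tbernoulli$ commutation, where the two morphisms have different domains and codomains; but because one side is a finite discrete distribution it collapses to swapping a finite sum past an integral and presents no real obstacle. All the genuine difficulty has already been absorbed into Prop.~\ref{thm:graphon:commutative}.

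For affineness we observe that the terminal Rado-nominal set $1$ carries the trivial action and satisfies $2^1\cong 2$, so the only internally countably additive $\mu\colon 2^1\to[0,1]$ with $\mu(1)=1$ is the Dirac measure; thus $\MonadP(1)$ is a singleton and the canonical map $\MonadP(1)\to 1$ is an isomorphism. As $\MonadP_\alpha$ is a submonad of $\MonadP$ containing the unit $\eta$, the set $\MonadP_\alpha(1)$ is a nonempty subset of this singleton, hence equal to it, so $\MonadP_\alpha(1)\cong 1$ and $\MonadP_\alpha$ is affine. Combining this with the commutativity established above gives the claim.
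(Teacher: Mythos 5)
Your proposal is correct and follows essentially the same route as the paper's proof: invoke the general least-submonad proposition, reduce commutativity to pairwise commutation of the generators with $\nu_\alpha$-vs-$\nu_\alpha$ handled by Prop.~\ref{thm:graphon:commutative} and the cases involving $\tbernoulli$ handled by elementary (finite-sum) arguments, and deduce affineness from that of $\MonadP$. Your write-up merely fills in details the paper leaves implicit, namely the explicit verification that $\MonadP(1)$ is a singleton and the common-domain convention needed to state pairwise commutation of generators with different domains.
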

\begin{proof}[Proof notes]  It is easy to show that $\tbernoulli$ commutes with every morphism
  $X\to \MonadP (Y)$. Moreover, $\nu_α$ commutes with
  itself (Prop.~\ref{thm:graphon:commutative}).
Finally, $\MonadP_α$ is affine since $\MonadP$ is. \end{proof}
\subsection{Summary and Interpretation}
Fix $\alpha\in [0,1]$. We induce an internal measure~$\nu_α$ on the vertices of the Rado
    graph as explained in \eqref{eqn:graphon-to-measure};
    and build a commutative submonad $\MonadP_α$ of $\MonadP$.
    We can then interpret the graph probabilistic programming
    language.
    We interpret types as Rado-nominal sets:
\begin{equation}\label{eqn:radointerp}    
            \sem{\tbool}  = 2
            \qquad 
            \sem{\tvertex}  = \VV
            \qquad
            \sem{\tunit}  = 1
            \qquad
            \sem{A_1\ast A_2}  = \sem{A_1}\times \sem{A_2}\text.
\end{equation}
    We interpret typed programs $\Gamma\vdash t:A$ as Kleisli morphisms
    \[
      \sem\Gamma\to \MonadP_α(\sem A)
    \]
    i.e. internal probability kernels $\sem\Gamma \times 2^{\sem A}\to [0,1]$. 
    Sequencing (let) is interpreted using the monad structure, with 
    $\sem\tnew:1\to \MonadP_\alpha(\VV)$ and
    $\sem\tedge:\VV\times \VV\to \MonadP_\alpha(2)$ as
    \begin{equation}\label{eqn:rado-interp-b}
      \sem{\tnew()} = \nu_α \qquad
      \sem{\tedge}(v,w) = \eta(E(v,w)) \end{equation}
    \begin{corollary}\label{cor:e-r}
      Consider the interpretation in Rado-nominal sets (\eqref{eqn:radointerp}--~\eqref{eqn:rado-interp-b}).
         If we form the sequence of random graphs
          in \eqref{eqn:programgraph},
          then these correspond to the Erd\H{o}s-R\'enyi graphon.
      \end{corollary}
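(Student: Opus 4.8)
The plan is to first confirm that the hypotheses of Theorem~\ref{thm:eq-theory-to-graphon} are satisfied, so that the interpretation genuinely induces \emph{a} graphon, and then to pin down \emph{which} graphon by a direct computation of the finite distributions $\sem{t_n}$. For the first part, Corollary~\ref{coro:least-submonad-is-comm-affine} tells us that $\MonadP_\alpha$ is a commutative affine monad on the Boolean topos $\RadoNom$, so by Proposition~\ref{prop:distr-affine-monad} its Kleisli category is a distributive Markov category modelling the graph interface. The edge map $\sem{\tedge}(v,w)=\eta(E(v,w))$ is deterministic (it lies in the image of the unit), and the irreflexivity and symmetry of the Rado graph's edge relation give \eqref{eqn:simplegraph}; a Bernoulli base is obtained by restricting to the numeral objects, where an internal measure on a trivial-action finite set is just an ordinary probability vector, so $\Kl(\MonadP_\alpha)_\NN$ recovers $\FinStoch$. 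Hence Theorem~\ref{thm:eq-theory-to-graphon} applies and $(\sem{t_n})_n$ is an exchangeable, consistent, local random graph model, corresponding by Proposition~\ref{prop:lovasz} to a unique graphon.

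It then remains to identify this graphon as $W_\alpha$ (Def.~\ref{def:e-r-graphon}). I would do this by computing $\sem{t_n}$ and comparing it against the generative formula \eqref{eqn:pwg} specialised to the constant graphon $W_\alpha$: since the integrand is then constant, the probability that \eqref{eqn:pwg} assigns to a graph $g=([n],E)$ is simply $\alpha^{|E|}(1-\alpha)^{\binom{n}{2}-|E|}$, i.e.\ each of the $\binom{n}{2}$ potential edges is present independently with probability~$\alpha$. The goal is to show that the Rado-nominal interpretation assigns the same probability to every $g$.

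Unfolding the monad structure, $\sem{t_n}$ is the iterated integral of the indicator $[(E(x_i,x_j))_{i,j}=A_g]$ against $\nu_\alpha(\dd x_n)\cdots\nu_\alpha(\dd x_1)$. I would evaluate this by peeling off the integrals from the innermost ($x_n$) outward. Fixing distinct already-sampled vertices $x_1,\dots,x_{k-1}$, the set of vertices $v$ realising a prescribed edge pattern $E(v,x_i)=A_g(k,i)$ for $i<k$ is a definable set with support $\{x_1,\dots,x_{k-1}\}$ (Prop.~\ref{prop:radonom-definable-sets}), and the defining formula \eqref{eqn:graphon-to-measure} for $\nu_\alpha$ evaluates its measure to exactly $\prod_{i<k}\alpha^{A_g(k,i)}(1-\alpha)^{1-A_g(k,i)}$; in other words, the edges of the new vertex to the old ones are independent $\mathrm{Bernoulli}(\alpha)$ draws. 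Multiplying these contributions over $k=1,\dots,n$ counts each undirected edge $\{i,k\}$ once and yields $\alpha^{|E|}(1-\alpha)^{\binom{n}{2}-|E|}$, matching \eqref{eqn:pwg}. By the uniqueness in Proposition~\ref{prop:lovasz}, the induced graphon is $W_\alpha$.

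The hard part will be the middle computation: making rigorous, inside $\RadoNom$, that each conditional edge-assignment really is an independent $\mathrm{Bernoulli}(\alpha)$ draw. This rests on two facts that must be invoked carefully — that the relevant conditioning events are \emph{definable}, so that \eqref{eqn:graphon-to-measure} applies with the correct support via Prop.~\ref{prop:radonom-definable-sets}, and that coincidences among the sampled vertices occur with probability zero, since $\nu_\alpha(\{a\})=0$ for every vertex $a$ (the enumerated witnesses in \eqref{eqn:graphon-to-measure} are chosen disjoint from the support). One should also note that the diagonal entries $\tedge(x_k,x_k)$ contribute deterministically, as $E$ is irreflexive, and so do not affect the count. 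With these checks in place the order of peeling is immaterial, consistent with the commutativity established in Prop.~\ref{thm:graphon:commutative}.
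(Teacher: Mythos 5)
Your proposal is correct and follows essentially the same route as the paper's proof notes: establish Bernoulli-basedness by observing that ground types carry the discrete $\AutRado$-action so internal kernels between them are stochastic matrices, then compute $\sem{t_n}$ by integrating against $\nu_\alpha$ and reading off from \eqref{eqn:graphon-to-measure} that each new vertex's edges to the previously sampled ones are independent $\mathrm{Bernoulli}(\alpha)$ draws. Your write-up is a more detailed version of the paper's argument (which only displays the $n=2$ case), and your attention to definability, null coincidence events, and the deterministic diagonal fills in exactly the checks the paper leaves implicit.
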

      \begin{proof}[Proof notes.]
        The semantics interprets ground types as finite sets with discrete $\AutRado$ action -- in which case internal probability kernels correspond to stochastic matrices, agreeing with $\FinStoch$. Thus, the theory is Bernoulli-based.
      To see that the graphon arises, consider for instance when $n=2$, we have:
      \[
          \sem{t_2}(\star)= 
           \int \begin{pmatrix}
              [E(x_1,x_1)], [E(x_1,x_2)]
              \\
              [E(x_2,x_1)], [E(x_2,x_2)]
              \end{pmatrix} (\nu_α \otimes \nu_α)(\dd(x_1, x_2))
      \]
      for $t_2$ as in \eqref{eqn:programgraph}, and therefore
      \[
      \sem{t_2} = 
        \begin{pmatrix}
          \delta_0, &\tbernoulli(α)
          \\
          \tbernoulli(α), &\delta_0
        \end{pmatrix}:\MonadP(2^4)
      \]
      For general $n$, this corresponds to the random graph model $p_{W_α,n}$ for the Erd\H{o}s-R\'enyi graphon~$W_α$.
    \end{proof}


\section{Conclusion}
\label{sec:relatedwork}
\paragraph{Summary}
We have shown that equational theories for the graph interface to the probabilistic programming language (Ex.~\ref{ex:interfaces}) give rise to graphons (Theorem~\ref{thm:eq-theory-to-graphon}).
Conversely, every graphon arises in this way. We showed this generally using an abstract construction based on Markov categories (Corollary~\ref{corollary:graphon-markov}) and methods from category theory~\cite{hermida-tennent,hu-tholen}. Since this is an abstract method, we also considered two concrete styles of semantic interpretation that give rise to classes of graphons: traditional measure-theoretic interpretations give rise to black-and-white graphons (Prop.~\ref{prop:bw-ok}), and an interpretation using the internal probability theory of Rado-nominal sets gives rise to \ErdosRenyi\ graphons (Corollary~\ref{cor:e-r}).

\paragraph{Further context, and future work}

The idea of studying exchangeable structures through program equations is perhaps first discussed in the abstract~\cite{XRPDA-PPS2017}, whose \S3.2 ends with an open question about semantics of languages with graphs that the present paper addresses.
Subsequent work addressed the simpler setting of exchangeable sequences and beta-bernoulli conjugacy through program equations~\cite{Staton-BB18-short},
and stochastic memoization~\cite{kaddar-staton-stoch-mem}; the latter uses a category similar to $\RadoNom$, although the monad is different. 
Beyond sequences~\cite{Staton-BB18-short} and graphs (this paper), a natural question is how to generalize to arbitrary exchangeable interfaces (see e.g.~\cite{6847223}). 
For example, we could consider exchangeable random boolean arrays via the interface
\[
  \mathsf{new\text-row}:\tunit\to\mathsf{row},\quad\mathsf{new\text-column()}:\tunit\to\mathsf{column},\quad
  \mathsf{entry}:\mathsf{row}*\mathsf{column}\to \tbool
\]
and random hypergraphs with the interface 
\[
  \mathsf{new}:\tunit \to \mathsf{vertex},\quad
  \mathsf{hyperedge}_n:\mathsf{vertex}^n\to \tbool\text.
\]
We could also consider interfaces for hierarchical structures, such as arrays where every entry contains a graph. Diverse exchangeable random structures have been considered from the model-theoretic viewpoint~\cite{ackerman-autm-invariant,crane-towsner} and from the perspective of probability theory (e.g.~\cite{kallenberg-2010,Campbell23,jlsy-dags}), but it remains to be seen whether the programming perspective here can provide a unifying view. Another point is that graphons correspond to dense graphs, and so a question is how to accommodate sparse graphs from a programming perspective (e.g.~\cite{10.1111/rssb.12233,10.1214/18-AOS1778}). 

This paper has focused on a very simple programming language~(\S\ref{sec:generic-ppl}).
As mentioned in Section~\ref{sec:intro:practice},
several implementations of probabilistic programming languages do support various Bayesian nonparametric primitives based on exchangeable sequences, partitions, and relations
(e.g. \cite{wood-aistats-2014,goodman2008church,roy2008stochastic,ks-prob-first-class-store,lazyppl,msp-venture}).
In particular, the `exchangeable random primitive' (XRP) interface
\cite{XRP-PPS2016,wu-church} 
provides a built-in abstract data type for representing exchangeable sequences.
This aids model design by its abstraction, but also aids inference performance by clarifying the independence relationships. 

Aside from practical inference performance, we can ask whether representation and inference are computable.
For the simpler setting of exchangeable sequences, this is dealt with positively by  \cite{freer2012computable, pmlr-v9-freer10a}.
The question of computability for graphons and exchangeable graphs is considerably subtler,
and some standard representations are noncomputable
\cite{ackerman2019algorithmic} (see also \cite{CompExch-PPS2017}).
This suggests several natural questions about whether certain natural classes of computable exchangeable graphs
can be identified by program analyses in the present context.


\begin{acks}
  This material is based on work supported by a Royal Society University Research Fellowship, ERC Project BLAST and AFOSR Award No.~FA9550–21–1–003. This work was supported in part by CoCoSys, one of seven centers in JUMP 2.0, a Semiconductor Research Corporation (SRC) program sponsored by DARPA. HY was supported by the National Research Foundation of Korea (NRF) grant funded by the Korea government (MSIT) (No. RS-2023-00279680), and also by the Engineering Research Center Program through the National Research Foundation of Korea (NRF) grant funded by the Korean government (MSIT) (No. NRF-2018R1A5A1059921).

It has been helpful to discuss developments in this work with many people over the years, and we are also grateful to our anonymous reviewers for helpful suggestions. 
\end{acks}

\bibliographystyle{ACM-Reference-Format}
\interlinepenalty=10000
\enlargethispage{\baselineskip}
\bibliography{bibliography}

\end{document}